\let\emptyset\varnothing
\newcommand{\RN}[1]{\textup{\uppercase\expandafter{\romannumeral#1}}}
\newtheorem{theorem}{Theorem}
\theoremstyle{definition}
\newtheorem{lm}{Lemma}
\newtheorem{defi}{Definition}
\newtheorem{proposition}{Proposition}
\newtheorem{corollary}{Corollary}
\newtheorem{remark}{Remark}
\DeclareTextFontCommand{\textroman}{\fontlibertine}
\theoremstyle{definition}
\newtheorem{example}{Example}
\newcommand*{\rom}[1]{\expandafter\@slowromancap\romannumeral #1@}
\newcommand{\w}{\omega}
\newcommand{\s}{\star}
\newcommand{\fs}{f^\star}
\newcommand{\cI}{\mathcal{I}}
\newcommand{\cS}{\mathcal{S}}
\newcommand{\cT}{\mathcal{T}}
\newcommand{\cL}{\mathcal{L}}
\newcommand{\n}{\chi}
\newcommand{\nB}{B} 
\renewcommand{\b}{b} 
\newcommand{\ml}{m} 
\newcommand{\bfw}{\boldsymbol{\omega}}
\newcommand{\bfs}{\boldsymbol{\sigma}}
\newcommand{\cache}{\mathfrak{C}}
\newcommand{\bc}{\mathbf{c}}
\newcommand{\nn}{N}
\newcommand{\bft}{\boldsymbol{\theta}}
\newcommand{\bx}{\mathbf{x}}
\newcommand{\bA}{\mathbf{A}}
\newcommand{\bG}{\mathbf{G}}
\newcommand{\bH}{\mathbf{H}}
\newcommand{\bb}{\mathbf{b}}
\newcommand{\bz}{\mathbf{z}}
\newcommand{\bd}{\boldsymbol{d}}
\newcommand{\fl}{f_{\mathsf{LP}}}
\renewcommand{\P}{\mathbb{P}}
\newcommand{\lp}{\left(}
\newcommand{\rp}{\right)}
\newcommand{\lb}{\left[}
\newcommand{\rb}{\right]}
\newcommand{\lc}{\left\{}
\newcommand{\rc}{\right\}}
\newcommand{\ov}[1]{\overline{ #1 }}
\newcommand{\wdt}[1]{\widetilde{ #1 }}
\newcommand{\by}{\mathbf{y}}
\newcommand{\comment}[1]{}
\newcommand{\cQ}{\mathcal{Q}}
\newcommand{\zeros}{\mathbf{0}}
\newcommand{\ones}{\mathbf{1}}
\newcommand{\pis}{\pi^\s}
\newcommand{\bof}{\boldsymbol{\ov{F}}}
\newcommand{\bfI}{\boldsymbol{I}}
\newcommand{\cent}{\cache^{\mathsf{cent}}}
\newcommand{\ccent}[1]{\bc^{\mathsf{cent}}_{#1}}
\newcommand{\tcent}{\bmu^{\mathsf{cent}}}
\newcommand{\tcents}{\mu^{\mathsf{cent}}}
\newcommand{\htf}{\hat{f}}
\newcommand{\bmu}{\boldsymbol{\mu}} 
\newcommand{\md}{\middle | }
\newcommand{\kw}{k_{\mathsf{w}}(\cS)}
\newcommand{\el}{\ell}
\begin{document}

\title{Cache-Aided $K$-User Broadcast Channels\\ with State Information at Receivers}

\author{%
  \IEEEauthorblockN{%
    Hadi~Reisizadeh, 
    Mohammad Ali Maddah-Ali, and  Soheil~Mohajer%
    \thanks{The work of H.~Reisizadeh and S.~Mohajer was supported in part by the National Science Foundation under Grants CCF-1749981. A preliminary version of this work was presented in part at the 2019 IEEE International Symposium on Information Theory~\cite{reisizadeh2019cache}.}
    \thanks{The authors are with the Department of Electrical and Computer Engineering, University of Minnesota, Minneapolis, MN, 55455 USA (e-mail: hadir@umn.edu; maddah@umn.edu; soheil@umn.edu).   Corresponding author: S.~Mohajer.}
  }%
  }
\maketitle
\date{}

\begin{abstract} 
We study a $K$-user coded-caching broadcast problem in a joint source-channel coding framework. The transmitter observes a database of files that are being generated at a certain rate per channel use, and each user has a cache, which can store a fixed fraction of the generated symbols. In the delivery phase, the transmitter broadcasts a message so that the users can decode their desired files using the received signal and their cache content. The communication between the transmitter and the receivers happens over a (deterministic) \emph{time-varying} erasure broadcast channel, and the channel state information is only available to the users. We characterize the maximum achievable source rate for the $2$-user and the degraded $K$-user problems. We provide an upper bound for any caching strategy's achievable source rates. Finally, we present a linear programming formulation to show that the upper bound is not a sharp characterization. Closing the gap between the achievable rate and the optimum rate remains open.  
\end{abstract}

\begin{IEEEkeywords}
Coded caching, joint source-channel coding, broadcast channel, wireless networks.
\end{IEEEkeywords}

\section{Introduction}
\IEEEPARstart{T}{he} number of active users of video streaming applications such as Netflix, YouTube, HBO, etc., is growing rapidly. Coded caching is a promising strategy to overcome this rapidly growing traffic load of networks during their peak traffic time by duplicating parts of the content in the caches distributed across the network. A caching system operates in two phases: (i) a placement (pre-fetching) phase, where each user has access to the database of the transmitter and stores some packets from the database during the off-load time, and (ii) a delivery (fetching) phase, during which each user demands a file from the database, and the transmitter broadcasts a signal over a (noisy) channel to all the users (receivers), such that each user is able to decode his desired file from his cache content and his received signal. Moreover, in this phase, the network is congested, and the transmitter exploits the content of users to serve their requested files.  

In practice, assuming a perfect broadcast channel fails, especially for the wireless communication setup. Therefore, we are dealing with a random time-varying channel between the transmitter and the users. In this paper, to model the randomness of the channel, we consider a binary deterministic version of a time-varying memoryless fading broadcast channel when the transmitter is serving users. However, the pre-fetching phase takes over the noiseless links. We study such a caching problem in a joint source-channel coding framework and analyze the limitations of the source rate for the transmitter. 

\noindent\textbf{Related Works.}
Coded caching schemes are proposed under the perfect channel assumption for the delivery phase and the uncoded cache placement where the placement performs on pure packets of the files for the centralized ~\cite{maddah2014fundamental} and the decentralized settings~\cite{maddah2014decentralized}. It is shown that a significant gain can be achieved by sending coded packets and simultaneously serving multiple users. A distributed source coding problem is presented in~\cite{lim2017information} to study the cache-aided networks. The database is viewed as a discrete memoryless source and the users' requests as side information that is available everywhere except at the cache encoder. The inner and outer bounds on the fundamental trade-off of cache memory size and update rate are provided. For file selection networks with uniform requests, the derived bounds recover the rates established by~\cite{maddah2014fundamental,maddah2014decentralized}. The exact trade-off between the memory and load of delivery is characterized~\cite{yu2018exact} for the uncoded placement. The coded caching problem has also been studied in various setups, including online caching~\cite{pedarsani2015online}, device-to-device caching~\cite{ji2015wireless,ji2015fundamental}, caching with nonuniform demands~\cite{niesen2016coded,zhang2017coded}, coded cache placement~\cite{reisizadeh2018erasure,reisizadeh2019subspace,chen2014fundamental,wei2017novel}.

All of the aforementioned works assume that the delivery phase takes over a perfect channel. However, in practice, we are dealing with noisy broadcast channels, especially for wireless communication systems. For the wireless setup, various types of channel models have been studied, such as cache-aided interference channels~\cite{maddah2019cache,naderializadeh2017fundamental,hachem2016layered}, caching on broadcast channels~\cite{timo2015joint,bidokhti2016erasure}, erasure and fading channels~\cite{bidokhti2017benefits, amiri2018cache, ngo2018scalable}, and channels with delayed feedback with channel state information~\cite{zhang2015coded,zhang2017fundamental}. The cache-aided communications problem is modeled as a joint cache-channel coding problem in~\cite{timo2015joint}. The delivery phase takes place over a memoryless erasure broadcast channel. It is shown that using unequal cache sizes and joint cache-channel coding improves system efficiency when the users experience different channel qualities. The capacity-memory trade-off of the $K$-user broadcast channel is studied when each user is equipped with a cache. It is optimal to assign all the cache memory to the weakest user for the small total cache size. On the other hand, for the large cache size, it is optimal to assign a positive portion of the cache to each user where weaker users have access to a larger cache memory than stronger users. Another wireless communication model is considered in~\cite{zhang2015coded} where a $K$-antenna transmitter communicates to $K$ single receiver antenna. It is shown that the combination of caching with a rate-splitting broadcast approach can reduce the need for channel state information at the transmitter.

Note that in a fast-fading environment sending the channel state information (CSI) from the receiver to the transmitter over a feedback link is difficult. So, it is more reasonable to consider broadcast channels with no CSI. The ergodic capacity region of a $K$-user binary deterministic version of the time-varying memoryless fading broadcast channel ($K$-DTVBC) introduced by~\cite{avestimehr2011wireless} is studied in~\cite{david2012fading,yates2011k}. Depending on the instantaneous channel strength, each user only receives the most significant bits of the transmit signal. Using the insight from the $K$-DTVBC model, an outer bound to the Gaussian fading BC capacity region is derived. 

\noindent\textbf{Contributions.} 
In this work, we study a communication model over the $K$-DTVBC for $n$ channel use where each user is equipped with a cache. The transmitter has some source rate per channel for each file in its database. A fixed fraction of each file is available in each user's cache. Here, we focus on a class of uncoded cache placement schemes. After the completion of this phase, each user demands a file. Then, the transmitter forms broadcasting messages such that each user can decode his desired file. The main challenge for the transmitter is to assign the signal levels to the (broadcasting) messages intended for each user without having access to the realization of the channels, which consists of the number of bits delivered to each user. Note that a fast-fading environment needs coding for reliable communication, where the capacity of the channel is achievable using channel codes with sufficiently large block lengths. Thus, we study the asymptotic behavior of the system, where we allow the size of messages in the pre-fetching and fetching phases will grow with the communication block length. This leads us to deal with a joint source-channel coding problem where the transmitter has a certain source rate per channel use. We characterize the maximum achievable source rate for the two-user and the degraded $K$-user problems.
Then, we provide an upper bound for the source rate. Finally, we discuss an achievable scheme with the linear programming (LP) formulation to show the looseness of the characterization for $K>2$.  

\noindent\textbf{Outline of the Paper.} In the following, we formulate the problem in Section~\ref{sec:problem}, and present the main results in Section~\ref{sec:results}, whose proofs are presented in Section~\ref{sec:proof_main_2}-\ref{sec:proof-degraded-conv}. In Section~\ref{sec:LP}, we provide an achievable source rate through the LP formulation and then show the achieved information-theoretic bound is not tight in general. Finally, we conclude the paper in Section~\ref{sec:conc}.

\noindent\textbf{Notation.} 
Throughout this paper, we denote the set of integers $\{1,2,\ldots,N\}$ by $[N]$ and the set of non-negative real numbers by $\mathbb{R}^{+}$. For a binary vector of length $\nB$, i.e.,  $X\in \mathbb{F}_2^\nB$, and a pair of integers $a<b$, we use the short hand notation $X(a:b)$ to denote $[X(a), X(a+1), \ldots, X(b)]$. We use $(a,b]$ to refer to the interval $(a,b]:=\{x\in \mathbb{R}: a<x\leq b\}$, and its scaled and shifted version is defined as ${\alpha+\beta(a,b]:=(\alpha+\beta a, \alpha+\beta b]}$. For a set of real numbers $\cI$, we use $|\cI|$ to denote its Lebesgue measure, e.g., $|(a,b]|:=b-a$ denotes the length of the interval.  The all-ones and all-zeros vectors are defined as ${\ones_n:=(1,1,\ldots,1)\in \mathbb{R}^{n\times 1}}$ and $\zeros_n:=(0,0,\ldots,0)\in \mathbb{R}^{n\times 1}$, respectively. For a real number $x\in \mathbb{R}$, we denote its floor and ceiling by $\lfloor x \rfloor$ and $\lceil x \rceil$, respectively. The  fractional part of $x$ is denote by $\{x\}:=x-\lfloor  x \rfloor$. Finally, for $n,k\in \mathbb{Z}$, the binomial coefficient is defined as $\binom{n}{k}:=\frac{n!}{k!(n-k)!}$, if $0\leq k\leq n$, and $\binom{n}{k}:=0$, otherwise.
\section{Problem Formulation}\label{sec:problem}
In this section, we first introduce the $K$-DTVBC, which is the core of this work. Then, we discuss the joint source-channel coding problem studied in this paper. 
\begin{figure}[t]
	\centering
	\includegraphics[width=0.4\textwidth]{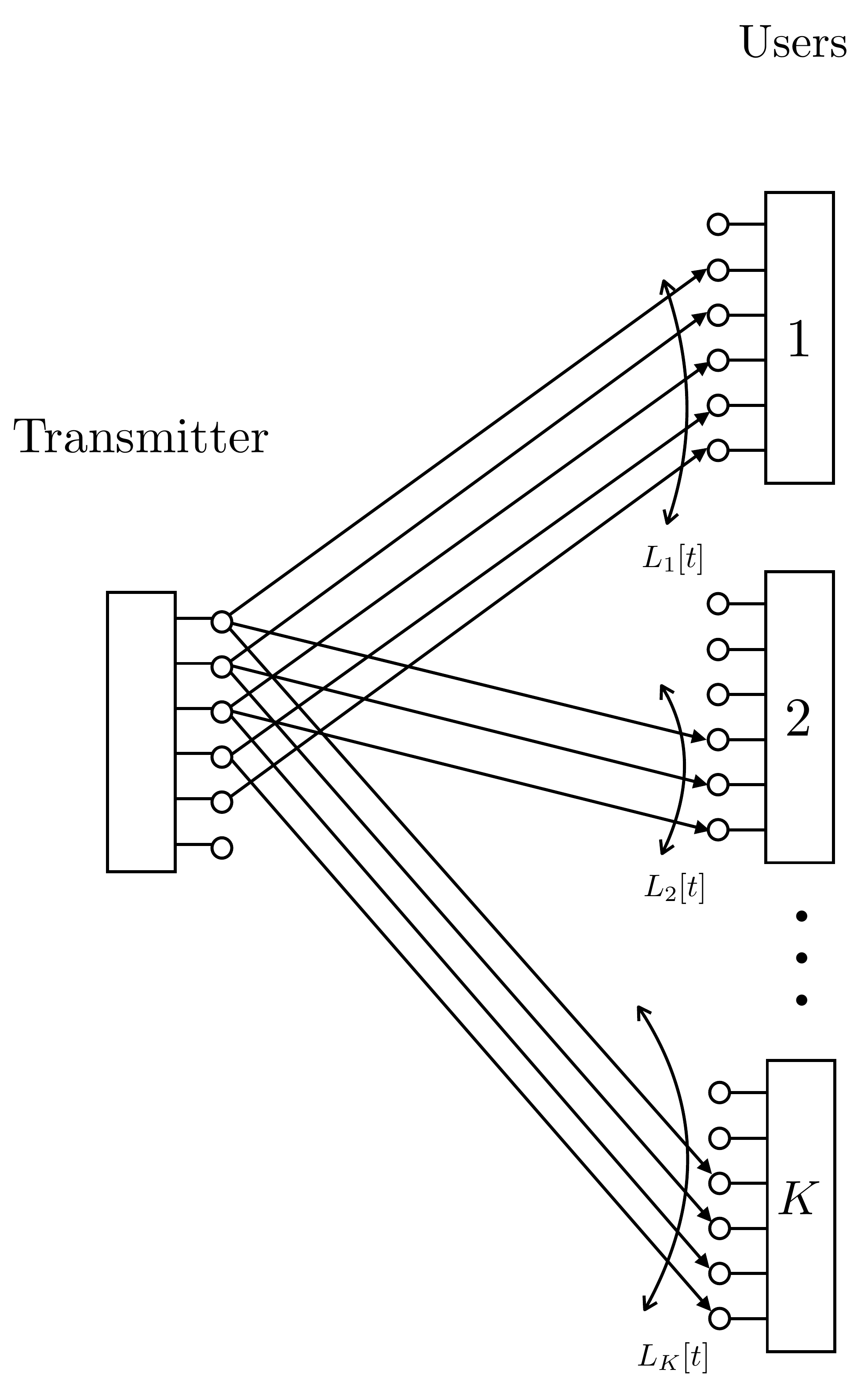}
	\caption{A $K$-user binary deterministic version of the time-varying memoryless fading broadcast channel. The transmitter only knows the statistics, but not the realizations of the generated i.i.d. random sequence $\{L_k[t]\}_{t=1}^{n}$.}
	\label{fig:chnl}
 \vspace{-2mm}
\end{figure} 
\vspace{-2mm}
\subsection{Channel Model}
We are interested in a time-varying broadcast channel, where a transmitter aims at sending one message to each of the $K$ users. We consider the $K$-DTVBC introduced by~\cite{avestimehr2011wireless} as shown in Figure~\ref{fig:chnl}. The  channel is modeled by
\begin{equation}
Y_{k,t} = D^{\nB-L_k[t]}X_{t} = X_t(1:L_k[t]),\qquad k \in [K],
\label{eq:ch-model-detl}
\end{equation} 
where $X_t, Y_{k,t} \in \mathbb{F}_2^\nB$ for $k\in [K]$, and $D$ is a $\nB\times \nB$  shift matrix, given by
\[
D = \begin{bmatrix} 
0 & 0 & 0 & \dots & 0\\
1 & 0 & 0 & \dots & 0\\
0 & 1 & 0 & \dots & 0\\
\vdots & \ddots & \ddots & \ddots & \vdots\\
0 & \dots & 0 & 1 & 0\\
\end{bmatrix}.
\]

\begin{figure}[h]
	\centering
	\includegraphics[width=0.55\textwidth]{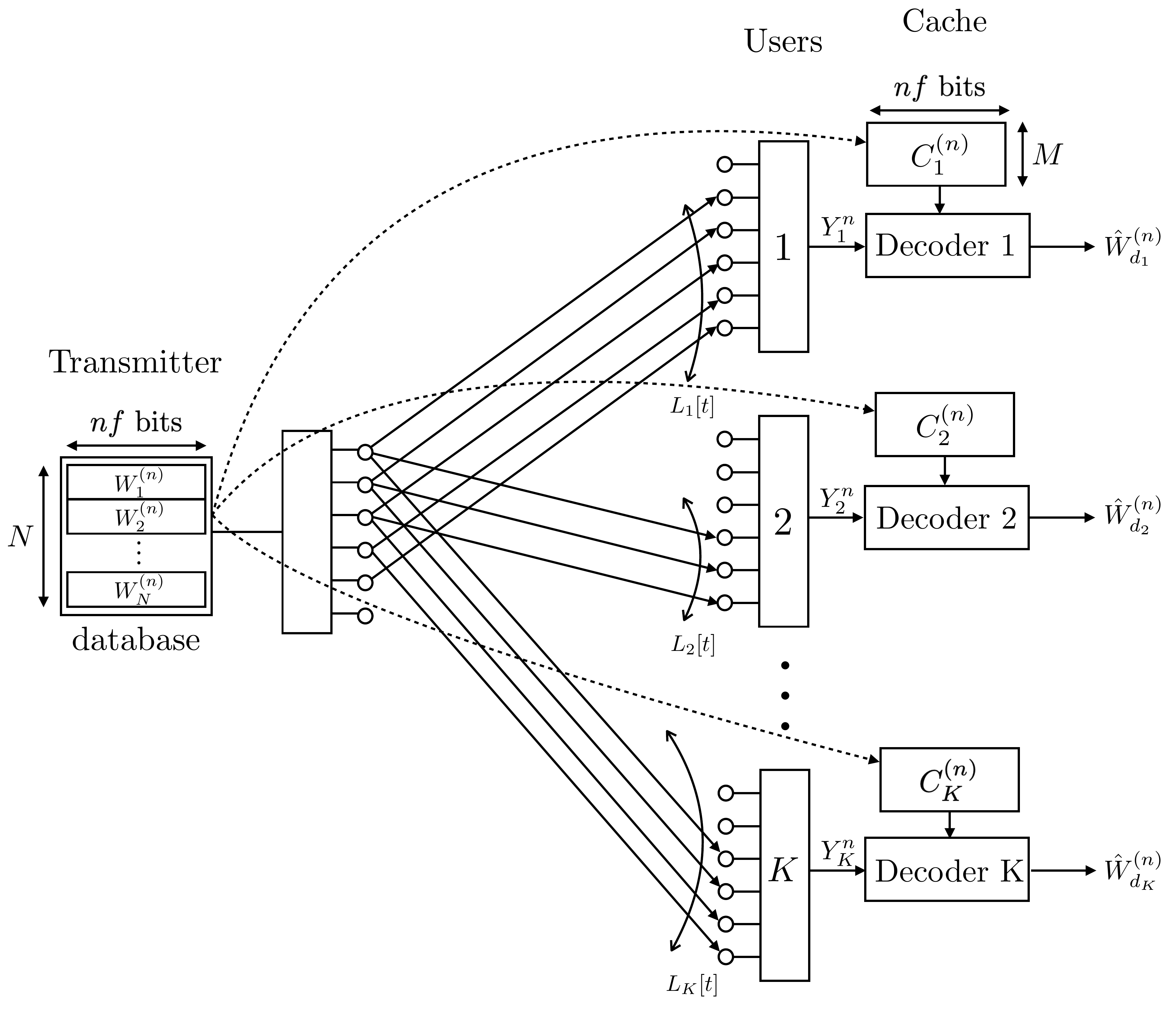}
	\caption{A transmitter containing $N$ files of size $nf$ bits each is connected through a $K$-DTVBC to users each with a cache of size $n M f$ bits. }
	\label{fig:Lu}
 \vspace{-2mm}
\end{figure}
Here $L_k[t]$ with $0\leq L_k[t]\leq \nB$ determines the number of bits delivered to user $k$ at time $t$. The channel state at user $k$, i.e.,  $\{L_k[t]: t = 1,\ldots,n\}$, is an i.i.d. random sequence generated according to some probability mass function (PMF) $P_{L_k}(\ell):= \mathbb{P}[L_k=\ell]$. Intuitively, sending a message $X_t$ of length $B$ bits over a channel with parameters $L_k[t]$, the receiver only receives the $L_k[t]$ most significant bits (MSBs) of $X_t$, and the remaining bits will be erased. This operation can be modeled as the multiplication of the message $X_t$ by $D^{B-L_k[t]}$, where $D$ is the shift matrix. We assume that the channel state information is casually known only to the receivers. However, the transmitter only knows the channel statistics $P_{L_k}(\ell)$, but not the channel realizations. 
\begin{defi}\label{def:ccdf}
We denote the complementary cumulative
distribution function (CCDF)  of $L_k$ for any $\ell \in [B]$ by \begin{align*}
\overline{F}_{L_k}(\ell):= \mathbb{P}[L_k \geq \ell].
\end{align*}
\end{defi}
\noindent For notational simplicity, let
\begin{align*}
        \bof_{L_k} :=
        \begin{bmatrix} 
        \overline{F}_{L_k}(1) \\ \vdots \\ \overline{F}_{L_k}(\nB)
        \end{bmatrix},
\end{align*}
for each user $k \in [K]$.
\begin{defi}\label{def:degraded}
The random variable $L_k$ is \textit{stochastically} larger than $L_v$, if $\ov{F}_{L_k}(\ell) \geq \ov{F}_{L_v}(\ell)$ for every $\ell \in [B]$ and we denote it by $L_k\geq_{\mathsf{st}} L_v $.
\end{defi}
The K-DTVBC channel model for wireless communication simplifies analysis compared to the Gaussian model while still capturing the important features of the problem. This model focuses on signal interactions rather than background noise since networks often operate in interference-limited scenarios. The deterministic model operates on a finite-field, makes it simpler, and provides a complete characterization of network capacity. The insights gained from the deterministic analysis can be applied to find approximately optimal communication schemes for Gaussian relay networks. The analysis of deterministic networks not only guides coding schemes for Gaussian channels but also offers useful proof techniques.
The capacity region of the $K$-DTVBC is derived in~\cite{yates2011k}. In this work, we focus on a cache-aided version of this problem, where the users are equipped with a cache that can pre-fetch part of the messages. In contrast, to~\cite{yates2011k}, where the capacity region is characterized, we are interested in the symmetric rate, as it is standard to consider equal file sizes in file delivery systems. 

In the majority of the existing literature on coded caching, a perfect channel is assumed between the transmitter and the users. Hence, the focus is on minimizing the design of the placement and delivery phases to minimize the load on the perfect channel~\cite{maddah2014fundamental,maddah2014decentralized,ji2015wireless,ji2015fundamental,yu2018exact}. Here, we are dealing with a fast fading channel which requires coding for reliable communication. Hence, we allow for a large code length and study the asymptotic behavior of the channel. Consequently, the size of the message(s) will grow with the communication block length. This leads to a joint source-channel coding problem~\cite{shannon1959coding, gray1974source, gastpar2003code}. More precisely, we consider a communication scenario over $n$ \emph{channel uses}, where the transmitter has a library of $N$ files, each of size $nf$ bits. Each user is equipped with a cache that can pre-fetch up to $n M f$ bits (before the actual request of the user is revealed), and the goal is to send one requested file to each user reliably. We are interested in characterizing the maximum source rate $f$ for which, and for sufficiently large block length $n$, a reliable communication scheme can be devised. A similar joint source-channel coding approach is used to study the original coded caching problem with common rate and side information in~\cite{lim2017information}. Further details of the cache model are discussed in the next section. 

\subsection{Joint Source-channel Coding Framework}
Let us consider a communication scenario over the \mbox{$K$-DTVBC} for $n$ channel uses. The transmitter has some source rate $f\in \mathbb{R}^+$ per channel use that generates $N$ files, namely, $W^{(n)}_i$ for $i\in[N]$. This means the transmitter has access to a database of $N$ mutually independent files $W^{(n)}_1,\ldots, W^{(n)}_N$ each of size $nf$ bits, i.e.,
\begin{equation*}
W^{(n)}_i\in \{1,2,\ldots,2^{nf}\}, \quad i\in [N].
\end{equation*} 
We assume each user $k$ is equipped with a cache, which can pre-fetch part of the files. The size of the content is proportional to the communication block length. More precisely, we assume that user $k$ has a cache $C^{(n)}_k$ of size $nMf$ bits, for $k\in[K]$. In the placement phase, the cache memory of each user is filled with \emph{uncoded} bits of the files; that is, the content of the cache $C^{(n)}_k$ can be partitioned into raw (uncoded) bits of the files. 
\begin{defi}
    A \emph{caching strategy} $\cache$ for a normalized cache size $\mu=M/N$ and a network with $K$ users consists of $K$ collections of intervals in $(0,1]$. More precisely, ${\cache=(\bc_1,\bc_2,\dots, \bc_K)}$ where 
\begin{itemize}
    \item $\bc_k = \bigcup_{\ell \in [\nn_k]} \cI_{k,\ell}$,
    \item $\nn_k$ is a finite positive integer number for every $k\in [K]$,
    \item $\cI_{k,\ell}=(a_{k,\ell}, b_{k,\ell}] \subseteq (0,1]$ where $b_{k,\ell}\leq a_{k,\ell+1}$ for every $\ell\in [K\!-\!1]$, and
    \item $\sum_{\ell\in[\nn_k]} |\cI_{k,\ell}| =\mu$, for every $k\in [K]$. 
\end{itemize}
For a file $W=(W(1),W(2), \dots, W(F))\in \mathbb{F}_2^F$ of length $F$ bits, we define 
\[W(\bc_k) \!:=\! \bigcup_{\ell\in [\nn_k]} \{W(\lceil a_{k,\ell}F\rceil+1), \cdots,  W(\lfloor b_{k,\ell}F\rfloor) \}.\]
\end{defi}
For a given source rate $f$, block length $n$, family of files $\left\{W_i^{(n)}\right\}$, and caching placement strategy $\cache$, the cache content of user $k\in [K]$ is given by 
\begin{align}\label{eq:cache}
C^{(n)}_k \!\! &:=\!\bigcup_{i\in [N]} C^{(n)}_{k,i} =\!\bigcup_{i\in [N]} W^{(n)}_i(\bc_k) \\
& =\! \!\bigcup_{i\in [N]} \hspace{-2pt}\bigcup_{\ \ell \in [\nn_k]} \!\!\!\hspace{-2pt}\left\{\hspace{-2pt}W^{(n)}_i(\lceil nf a_{k,\ell}\rceil \!+\!1),\dots, W^{(n)}_i(\lfloor nf b_{k,\ell}\rfloor) \hspace{-2pt}\right\}.\nonumber
\end{align}
This implies that 
\begin{align}
H\left(C^{(n)}_{k,i}\right) & \leq  \sum_{\ell\in[\nn_k]} \left(\lfloor nf b_{k,\ell} \rfloor - \lceil nf a_{k,\ell}\rceil \right)\nonumber\\
& \leq \sum_{\ell\in[\nn_k]} nf |\cI_{k,\ell}| =  \mu nf.
\end{align}
Therefore, we get
\begin{align*}
H\left(C_k^{(n)}\right) &= H\left( C^{(n)}_{k,1}, C^{(n)}_{k,2}, \dots, C^{(n)}_{k,N}\right)\\
&\leq H\left( C^{(n)}_{k,1}\right) +H\left(C^{(n)}_{k,2}\right)+ \cdots +  H\left(C^{(n)}_{k,N}\right)\\
&\leq \sum_{i=1}^N n \mu f = nM f. 
\end{align*}
Moreover, from the definition of the cache content in~\eqref{eq:cache} and the independence of files, we can write
\begin{align*}
    & H\lp C^{(n)}_{k,i}\middle| W^{(n)}_i \rp=0,\\
    & I\lp C^{(n)}_{k,j};W^{(n)}_i\rp=0, \quad j \neq i.
\end{align*}
We define $\bc_\cS:= \bigcup_{u\in \cS} \bc_u = \bigcup_{u\in \cS} \bigcup_{\ell\in [\nn_u]} \cI_{u,\ell}$ for every $\cS\subseteq [K]$ and the \emph{caching tuple} $\bmu :=(\mu_{\cS}: \cS\subseteq [K])$ where $\mu_{\cS}:= |\bc_\cS|$. We also use $C^{(n)}_{\cS,i}$ to refer to the collection of all the parts of file $i$ cached by the users in the subset $\cS\subseteq [K]$, i.e.,  $C^{(n)}_{\cS,i} = \bigcup_{u\in \cS} C^{(n)}_{u,i}$. Therefore, we have 
\begin{align}\label{eq:H-CS}
H\left(C^{(n)}_{\cS,i}\right) \leq \mu_\cS nf,
\end{align}
for every $i\in [N]$.
After the completion of the placement phase, each user requests one of the $N$ files, where all files are equally likely to be requested. We denote $d_k\in [N]$ as the index of the file requested by user $k\in[K]$ and the sequence of all requests by $\boldsymbol{d}=(d_1,\ldots,d_K)$. 
Once the requests are revealed to the transmitter, it forms a broadcasting message ${X^{n}=(X_1,X_2,\dots, X_n)=\psi_{\boldsymbol{d}}^{(n)} \left(W^{(n)}_1,\ldots,W^{(n)}_N; C^{(n)}_{[K]}\right)}$, where
\begin{equation*}
\psi_{\boldsymbol{d}}^{(n)}\!\!:\!\! \{1,2,\ldots,2^{nf}\} ^N \!\times\! \{1,2,\dots, 2^{nMf}\}^K \! \rightarrow \{1,2,\ldots,2^{\nB}\}^n\!,
\end{equation*} 
and transmits $X_t$ over the broadcast channel during the $t$th channel use  of the delivery phase, for $t=1,\dots, n$. Upon receiving  $Y_k^n=(Y_{k,1},Y_{k,2}, \ldots, Y_{k,n})$, user $k\in [K]$ should be able to decode its desired file using its cache content $C^{(n)}_k$ and the received message $Y^n_k$ (see Figure~\ref{fig:Lu}), i.e., 
\begin{equation*}
\hat{W}^{(n)}_{d_k} = \phi_k^{(n)}\left(Y^n_k,C^{(n)}_k\right).
\end{equation*}
Here, we define the overall decoding error probability as ${P_e^{(n)}\!:=\! \sum_{k=1}^{K}\! \mathbb{P}\left[\hat{W}^{(n)}_{d_k}\!\neq\! W^{(n)}_{d_k}\right].}$

\begin{defi}
    For a given caching strategy $\cache$ and a (distinct) request profile $\bd$, a source rate $f(\cache,\bd)$ is called achievable if there exists a sequence of encoding and decoding functions ${\left\{\left(\psi^{(n)}, \phi_1^{(n)}, \ldots,\phi_K^{(n)}\right)\right\}_n}$, for which $P_e^{(n)} \rightarrow 0$ as $n$ grows. 
\end{defi}

Here, our goal is to characterize the maximum achievable source rate $f(\cache,\bd)$ for a given $K$-DTVBC with channel statistics, $\bof_{L_k}$ for $k\in [K]$. Note that the cache placement is fixed prior to the users' demands, and we are not designing the cache contents of users based on the requested files. 

For every subset of users $\cS\subseteq [K]$ and file index $i\in[N]$, we define ${W_{i,\cS}^{(n)} = \bigcap_{k\in \cS} W_i^{(n)} (\bc_k) = W_i^{(n)} \left(\bigcap_{k\in \cS}  \bc_k\right)}$, to be the sections of file $W_i^{(n)}$ which are cached at all users in $\cS$.

Next, inspired by the central cache placement strategy of~\cite{maddah2014fundamental}, we introduce the \emph{central caching strategy} $\cent$. For a subset $\cS\subseteq [K]$ with $|\cS|=s$,  let $\n(\cS) \in \left\{1,2,\dots, \binom{K}{s}\right\}$ be the rank of $\cS$ among all subsets of $[K]$ of size $s$, according to the \emph{lexicographical order}.

\begin{defi}\label{def:cent}
    For every  $\cS\subseteq [K]$ with $|\cS| = s$, define
    \begin{align*}
        \mathcal{J}_{\cS} := \left( \frac{\n(\cS)}{\binom{K}{s}}, \frac{\n(\cS)+1}{\binom{K}{s}}\right].
    \end{align*}
    Then, for a network with $K$ users and normalized cache size $\mu\in[0,1]$, we define the central caching strategy ${\cent:=(\ccent{1},\dots, \ccent{K})}$ where 
\begin{align*}
    \ccent{k} \!:=\! \left(\!\bigcup_{\substack{\cS\subseteq [K] \\ |\cS|= \lfloor \mu K\rfloor  \\ \cS\ni k }} \!\!\!\!(1-\lambda)\mathcal{J}_{\cS}\right) \!\cup\! \left(\bigcup_{\substack{\cT\subseteq [K] \\ |\cT|= \lfloor \mu K\rfloor +1 \\ \cT\ni k }} \!\!\!\!\!((1-\lambda)\!+\!\lambda\mathcal{J}_{\cT}) \!\right)\!, 
\end{align*}
and\footnote{Note that  $\mu K = (1-\lambda)   \lfloor \mu K\rfloor + \lambda \lceil \mu K\rceil$} $\lambda = \{\mu K\}$.\label{def:central}
\end{defi}

Note that for any set of users $\cQ\subseteq[K]$,  we have 
\begin{align}\label{eq:cent-union} \mu_{\cQ}^{\mathsf{cent}} & = \left|\ccent{\cQ}\right|\nonumber \\
& = \left|\bigcup_{k\in \cQ} \ccent{k}\right| \nonumber\\
& =(1-\lambda) \left(1- \frac{1}{\binom{K}{\lfloor \mu K \rfloor}} \sum_{\substack{\cS\subseteq [K]\\|\cS|=\lfloor \mu K \rfloor \\ \cS \cap \cQ=\emptyset } } 1 \right)+   
  \lambda \left(1- \frac{1}{\binom{K}{\lfloor \mu K \rfloor+1}} \sum_{\substack{\cT\subseteq [K]\\|\cT|=\lfloor \mu K \rfloor+1 \\ \cT \cap \cQ=\emptyset } } 1 \right) \nonumber\\
  &=(1\!-\!\lambda) \left(1\!-\! \frac{\binom{K-|\cQ|}{\lfloor \mu K \rfloor}}{\binom{K}{\lfloor \mu K \rfloor}} \right)  \!+\!  \lambda  \left(1\!-\! \frac{\binom{K-|\cQ|}{\lfloor \mu K \rfloor+1}}{\binom{K}{\lfloor \mu K \rfloor+1}} \right).
\end{align}
Moreover, we have 
\begin{align}\label{eq:cent-intersection}
\left|\bigcap_{k\in \cQ} \ccent{k}\right|  &=\!(1\!-\!\lambda)   \frac{1}{\binom{K}{\lfloor \mu K \rfloor}} \!\!\!\sum_{\substack{\cS\subseteq [K]\\|\cS|\!=\lfloor \mu K \rfloor \\ \cQ  \subseteq \cS} } \! \! \! 1   +  \! 
  \lambda   \frac{1}{\binom{K}{\lfloor \mu K \rfloor+1}} \!\!\!\! \sum_{\substack{\cT\subseteq [K]\\|\cT|=\lfloor \mu K \rfloor+1 \\ \cQ  \subseteq \cT } } \!\!\!\! 1  \nonumber\\
  &=(1-\lambda)   \frac{\binom{K-|\cQ|}{\lfloor \mu K \rfloor-|\cQ|} }{\binom{K}{\lfloor \mu K \rfloor}}  +  \lambda   \frac{\binom{K-|\cQ|}{\lfloor \mu K \rfloor+1-|\cQ|}}{\binom{K}{\lfloor \mu K \rfloor+1}} .
\end{align}

\section{Main Results}\label{sec:results} 
In this section, we present the main results of this paper, organized according to the level of generalization of the setting.

We first characterize the maximum achievable source rate for the $2$-user DTVBC. 
\begin{theorem}[Two-User (Non-Degraded) BC]\label{thm:2-User}
    For a $2$-DTVBC with a caching strategy $\cache$, a distinct request profile $\bd$, and  $\mu\leq \frac{1}{2}$, any achievable source rate is upper bounded by
\begin{align}\label{eq:f-2User}
     \fs \!= \!\min &\left\{ \min_{\w \geq 1}\frac{\w R_1(\w) +  R_2(\w)}{\w (1\!-\!\mu)+(1\!-\!2\mu)}, \min_{0\leq \w \leq 1}\frac{R_1(\w) \!+\! \frac{1}{\w} R_2(\w)}{(1\!-\!2\mu) \!+\! \frac{1}{\w} (1\!-\!\mu)}  \right\}. 
\end{align}
Moreover, if $\frac{1}{2}\leq \mu\leq 1$, any achievable source rate is upper bounded by 
\begin{align}\label{eq:f-2User-2}
    \fs = \min \lc \frac{\sum_{\ell=1}^{B} \ov{F}_{L_1}(\ell)}{1-\mu}, \frac{\sum_{\ell=1}^{B} \ov{F}_{L_2}(\ell)}{1-\mu}\rc,
\end{align}
where 
\begin{align}\label{eq:R1-R2}
\begin{split}
        R_1(\w)&:=\!\!\! \sum_{\ell\in \cL_1(\w)} \overline{F}_{L_{1}}(\ell)\\
        R_2(\w) &:=\!\!\! \sum_{\ell\in \cL_2(\w) } \overline{F}_{L_{2}}(\ell),   
        \end{split}
    \end{align}
and summations are over $\cL_1 (\w) :=  \{\ell: \w \ov{F}_{L_1}(\ell) \geq \ov{F}_{L_2}(\ell)\}$, and ${\cL_2 (\w) := \{\ell: \w \ov{F}_{L_1}(\ell) < \ov{F}_{L_2}(\ell)\}}$. Moreover, the source rates in~\eqref{eq:f-2User} and~\eqref{eq:f-2User-2} are achievable for the central caching strategy $\cent$.
\end{theorem}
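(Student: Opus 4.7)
The plan is to establish the upper bound and matching achievability separately for each of the two cache-size regimes. For the \emph{converse} when $\mu \le 1/2$, I would combine Fano's inequality at both receivers with the residual-entropy estimates
\[
H\!\left(W_{d_1}^{(n)} \mid C_1^{(n)}\right) \ge (1-\mu)nf, \qquad H\!\left(W_{d_2}^{(n)} \mid W_{d_1}^{(n)}, C_1^{(n)}, C_2^{(n)}\right) \ge (1-2\mu)nf,
\]
the first following from the fact that $C_1^{(n)}$ contains only $\mu nf$ bits of $W_{d_1}^{(n)}$, and the second from the independence of the files together with the inclusion--exclusion bound $\mu_{\{1,2\}} \le 2\mu$. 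Supplying user~2 with the genie side information $W_{d_1}^{(n)}$ and $C_1^{(n)}$ cannot reduce its decoding capability, so Fano gives $I(W_{d_2}^{(n)}; Y_2^n \mid W_{d_1}^{(n)}, C_1^{(n)}, C_2^{(n)}) \ge (1-2\mu)nf - n\epsilon_n$. Taking the $\w$-weighted sum (for $\w\ge 1$) of this with the analogous inequality for user~1 and applying data processing reduces the converse to a single-letter upper bound on the weighted mutual information delivered by the DTVBC.

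This per-channel-use bound is the main technical step and the main obstacle. I would exploit the bit-level structure of the DTVBC: each level $\ell\in[B]$ of $X_t$ is received by user~$k$ precisely when $L_k[t]\ge \ell$, an event of probability $\ov{F}_{L_k}(\ell)$. A level-by-level allocation argument---assigning each level to whichever user yields the larger weighted contribution---upper-bounds the weighted mutual information by
\[
n \sum_{\ell=1}^{B} \max\!\left\{\w \ov{F}_{L_1}(\ell),\, \ov{F}_{L_2}(\ell)\right\} = n\bigl(\w R_1(\w) + R_2(\w)\bigr),
\]
where the equality uses the partition $\cL_1(\w), \cL_2(\w)$ in \eqref{eq:R1-R2}. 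The rigor of this level-by-level allocation mirrors the outer-bound technique established for the capacity region of the $K$-DTVBC in \cite{yates2011k}. Combining the ingredients yields the first branch of \eqref{eq:f-2User}; interchanging the roles of the two users and taking $\w\in(0,1]$ produces the second.

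For \emph{achievability} under $\cent$ with $\mu \le 1/2$, Definition~\ref{def:central} partitions each file into three disjoint subfiles of normalized sizes $\mu$, $\mu$, and $1-2\mu$, cached at $\{1\}$, $\{2\}$, and $\emptyset$ respectively. I would pick the optimizer $\w^\star$ in \eqref{eq:f-2User} and design a transmit signal that, on each channel use, superposes on the levels $\cL_1(\w^\star)$ a coded-caching XOR of the user-2-only piece of $W_{d_1}^{(n)}$ with the user-1-only piece of $W_{d_2}^{(n)}$, together with the uncached piece of $W_{d_1}^{(n)}$, while the levels $\cL_2(\w^\star)$ carry the uncached piece of $W_{d_2}^{(n)}$. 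Each user cancels the XOR using its own cache and decodes the uncached piece intended for it by a standard random-coding / joint-typicality argument over the ergodic channel states, delivering the pieces at rates $R_1(\w^\star)$ and $R_2(\w^\star)$. This achieves source rate $\fs - \delta$ for any $\delta>0$ and sufficiently large $n$, matching the converse.

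The case $\mu \ge 1/2$ is considerably simpler: under $\cent$ every bit of every file is cached at at least one of the two users, so user~$k$ needs only the $(1-\mu)nf$ bits of $W_{d_k}^{(n)}$ held by the other user. A single XOR transmission over all $B$ bit levels serves both users, reducing each user's decoding to a point-to-point problem over its own channel. The achievable rate is $\sum_\ell \ov{F}_{L_k}(\ell)/(1-\mu)$ for the bottleneck user, and the matching converse is the single-user cut-set bound applied to whichever user has the smaller $\sum_\ell \ov{F}_{L_k}(\ell)$, yielding \eqref{eq:f-2User-2}.
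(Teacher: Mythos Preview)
Your converse for $\mu\le 1/2$ has the right architecture, but the ``level-by-level allocation'' step is where the real difficulty lies and your sketch does not supply it. The weighted sum $\w\, I(W_{d_1}^{(n)};Y_1^n\mid C_1^{(n)})+I(W_{d_2}^{(n)};Y_2^n\mid W_{d_1}^{(n)},C_{[2]}^{(n)})$ does not single-letterize to $n\sum_\ell \max\{\w\,\ov F_{L_1}(\ell),\ov F_{L_2}(\ell)\}$ by any direct assignment argument: after chain-ruling, the two terms are conditioned on incompatible past sequences ($Y_1^{i-1}$ versus $Y_2^{i-1}$), and without degradedness these do not telescope. The paper resolves this by first \emph{enhancing} user~2's channel via~\eqref{eq:enh-2user} to force a degraded structure $X\leftrightarrow \wdt Y_2\leftrightarrow Y_1$, then introducing an auxiliary $U_1$ with $U_1\leftrightarrow X\leftrightarrow \wdt Y_2\leftrightarrow Y_1$ (Lemma~\ref{lm:2user-deg-BC}) so that the per-level conditional entropies align and the weighted sum collapses as claimed. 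Your citation of \cite{yates2011k} points at exactly this machinery, but the prose (``assigning each level to whichever user yields the larger weighted contribution'') describes a simpler argument that does not go through for a non-degraded channel.

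Your achievability for $\mu\le 1/2$ has a more concrete gap. You propose a \emph{two}-region split, placing the XOR together with user~1's private piece on $\cL_1(\w^\star)$ and user~2's private piece on $\cL_2(\w^\star)$. But the XOR must be decoded by \emph{both} users, and $\cL_1(\w^\star)$ consists precisely of levels where user~2 is (weighted) weaker; nothing in your scheme guarantees that user~2 can extract the XOR at rate $\mu f$ from those levels, and your claimed delivered rates $R_1(\w^\star),R_2(\w^\star)$ account only for each user's own-side capacity. The paper instead uses a \emph{three}-region split: after sorting levels by $g(\ell)=\ov F_{L_2}(\ell)/\ov F_{L_1}(\ell)$, the lowest-$g$ levels carry $W_{d_1,\emptyset}^{(n)}$, the highest-$g$ levels carry $W_{d_2,\emptyset}^{(n)}$, and the middle levels carry the common XOR, with fractional boundaries $(u^\star,\alpha^\star)$ and $(v^\star,\beta^\star)$ optimized separately. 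Matching this to the converse is itself nontrivial and occupies Lemma~\ref{lm:2user_f_ach}. Your treatment of $\mu\ge 1/2$ is correct and matches the paper.
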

The proof of Theorem~\ref{thm:2-User} is provided in Section~\ref{sec:proof_main_2}.

In the upper bound presented in Theorem~\ref{thm:2-User}, we intuitively enhance both the cache and channel strengths for each user and compare the two possible settings. 

Now, let us consider a more general setting where a network is serving $K$ users. In a $K$ user setting, we can characterize the maximum achievable source rate when the channels from the transmitter to the users are degraded (see Definition~\ref{def:degraded}). In the following theorem, we provide an LP optimization problem for the maximum achievable source rate of the degraded $K$-DTVBC with the caching strategy $\cent$.
\begin{theorem}[$K$-User Degraded BC]\label{thm:degraded}
     For a degraded $K$-DTVBC ${L_K\geq_{\mathsf{st}}\cdots \geq_{\mathsf{st}} L_1}$ and a normalized cache sizes $\mu$ satisfying $K\mu \in \mathbb{N}$, 
     with the central caching strategy $\cent$ and a distinct request profile $\bd$, the maximum achievable source rate
     is given by
    \begin{align}\label{eq:fwKUser_degraded}
    &\max_{\{z_{\ell,k}\}} \ \bar{f},\\
      \begin{split}\label{eq:const_thm2}
     &\textrm{s.t.}\quad  \lp 1-\tcents_{[k]}\rp \bar{f} \leq  \sum_{\ell=1}^{B} z_{\ell,k}\ov{F}_{L_k}(\ell), \quad \forall k\in [K],\\
     &\phantom{\textrm{s.t.}\quad} z_{\ell,k}  \geq  0, \quad \forall k\in [K], \forall \ell\in [B],\\
     &\phantom{\textrm{s.t.}\quad} \sum_{k=1}^{K} z_{\ell,k} \leq  1, \quad \forall\ell\in [B].
      \end{split}
\end{align}
\end{theorem}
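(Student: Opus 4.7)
I would split the proof into achievability and converse. Both parts rely on a preliminary reduction: since $L_K\geq_{\mathsf{st}}\cdots\geq_{\mathsf{st}} L_1$, I couple the channel realizations so that $L_1[t]\leq\cdots\leq L_K[t]$ holds pointwise, turning the model into a physically degraded BC in which $Y_j^n$ is a prefix of $Y_k^n$ for every $j\leq k$.

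For achievability, fix an optimal LP solution $(\bar f^\star,\{z^\star_{\ell,k}\})$. My plan is to run the classical coded-multicast delivery on top of the central placement $\cent$, but to \emph{index each multicast XOR by the weakest user in its intended receiving subset}. Concretely, for each $k\in[K]$ and each $\cS\subseteq\{k,\ldots,K\}$ with $\min\cS=k$ and $|\cS|=K\mu+1$, form the XOR $\bigoplus_{j\in \cS} W^{(n)}_{d_j,\cS\setminus\{j\}}$, which every $j\in\cS$ decodes from its cache content. I then dedicate fraction $z^\star_{\ell,k}$ of the $n$ channel uses at bit-level $\ell$ to transmitting these user-$k$-indexed XORs, so user $k$ recovers them at rate $\sum_\ell z^\star_{\ell,k}\ov F_{L_k}(\ell)$ per channel use and every stronger member of $\cS$ also recovers them by degradation. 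Identity \eqref{eq:cent-union} applied to $\cQ=[k]$ shows that these user-$k$-indexed XORs carry exactly the sub-files $\{W_{d_k,\cS'}:\cS'\subseteq\{k+1,\ldots,K\},\,|\cS'|=K\mu\}$, accounting for the $(1-\tcents_{[k]})$-fraction of $W_{d_k}$ uncached by any user in $[k]$; the remaining parts of $W_{d_k}$, which are cached by some $j<k$, are carried by user-$j$-indexed XORs and recovered by user $k$ through degradation.

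For the converse, fix any achievable scheme. Fano's inequality at user $k$ yields $H(W^{(n)}_{d_k}\mid Y_k^n,C^{(n)}_k)\leq n\epsilon_n$, and enlarging the conditioning preserves the bound: $H(W^{(n)}_{d_k}\mid Y_k^n,C^{(n)}_{[k]},W^{(n)}_{d_{[k-1]}})\leq n\epsilon_n$. Independence of files and the structure of $\cent$ give
\[
H\!\left(W^{(n)}_{d_k}\,\middle|\,C^{(n)}_{[k]},W^{(n)}_{d_{[k-1]}}\right)=H\!\left(W^{(n)}_{d_k}\,\middle|\,C^{(n)}_{[k],d_k}\right)\geq(1-\tcents_{[k]})nf,
\]
so the effective rate $R_k^\dagger:=\tfrac{1}{n}I(W^{(n)}_{d_k};Y_k^n\mid C^{(n)}_{[k]},W^{(n)}_{d_{[k-1]}})$ satisfies $R_k^\dagger\geq(1-\tcents_{[k]})\bar f-\epsilon_n$. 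I would then invoke the single-letter capacity characterization of the $K$-DTVBC from~\cite{yates2011k}, which for this deterministic model gives the region $\{(R_k): R_k\leq \sum_\ell z_{\ell,k}\ov F_{L_k}(\ell),\,\sum_k z_{\ell,k}\leq 1,\,z_{\ell,k}\geq 0\}$, to conclude that $(R_1^\dagger,\ldots,R_K^\dagger)$ lies in this region. Taking $n\to\infty$ recovers the LP constraints on $\bar f$.

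The main obstacle is the final step of the converse: embedding the cache-conditioned tuple $(R_k^\dagger)$ into the capacity region of~\cite{yates2011k}. The delicate point is that the ``messages" $W^{(n)}_{d_k}$ are correlated both with the caches $C^{(n)}_{[k]}$ and with one another through the shared placement and request structure, rather than being jointly independent as in the standard BC setup. To reduce to the standard setting one must establish the Markov chain $(W^{(n)}_{d_{[k-1]}},C^{(n)}_{[k]})\to X^n\to Y_k^n$ --- using that the caches are independent of the channel state --- and then apply the chain-rule decomposition underlying the degraded BC converse so that the auxiliary-variable characterization remains valid after conditioning on the side information. The remaining bookkeeping on the achievability side, namely counting sub-files via \eqref{eq:cent-union}--\eqref{eq:cent-intersection} and matching per-level channel codes to the induced erasure subchannel at rate $\ov F_{L_k}(\ell)$, is routine once these main reductions are in place.
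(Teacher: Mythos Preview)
Your proposal is correct and takes essentially the same approach as the paper. The paper's achievability also indexes each multicast XOR by its weakest user (setting $y_{\ell,\cS}=z_{\ell,k_{\mathsf{w}}(\cS)}/\binom{K-k_{\mathsf{w}}(\cS)}{t}$ and checking the constraints of Proposition~\ref{prop:LP}), and its converse is precisely the cache-aware degraded-BC argument you sketch, carried out explicitly as Lemma~\ref{lm:Kuser-deg-BC} with auxiliaries $U_k=(W^{(n)}_{d_{[k]}},C^{(n)}_{[k]},Y_{[k]}^{Q-1},Q)$, whose entropy differences $Q_{\ell,k}=H(X(\ell)\mid X(1{:}\ell{-}1),U_{k-1})-H(X(\ell)\mid X(1{:}\ell{-}1),U_k)$ play the role of your $z_{\ell,k}$.
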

The proof of Theorem~\ref{thm:degraded} is presented in two parts. The proof of achievability part is presented in Section~\ref{sec:proof-degraded-ach}, and its converse proof is provided in Section~\ref{sec:proof-degraded-conv}. 
We note that the achievability proof of Theorem~\ref{thm:degraded} is based on the LP-based method which is discussed in Section~\ref{sec:LP}.
Next, we will now present an illustrative example of the degraded $K$-DTVBC with three users ($k=3$). This example will be helpful in establishing the notation and following the proof techniques. 

\addtocounter{equation}{3}
\begin{figure*}
\begin{align}
      &\bA_{\pi}\!=\! \begin{bmatrix}
      \begin{array}{c c  c c c|c}
      \bof_{L_{\pi(1)}} & \zeros_B & \cdots & \zeros_B & \zeros_B & \lp \mu_{\pi(1)}-1\rp \!\bfI\\
      \zeros_B & \bof_{L_{\pi(2)}} & \cdots & \zeros_B & \zeros_B & \lp \mu_{\pi([2])}-1\rp\!\bfI\\
      \vdots & \vdots & \vdots & \vdots & \vdots &  \vdots\\
      \zeros_B & \zeros_B & \cdots & \zeros_B & \bof_{L_{\pi(K)}} & \lp \mu_{\pi([K])}-1\rp\!\bfI\\
      \hline
       \mu_{\pi([2])}\!-\!1 &  1\!-\!\mu_{\pi(1)} & \cdots & 0 & 0 & 0\\
      \vdots & \vdots & \vdots & \vdots & \vdots & \vdots \\
      0 & 0 & \cdots & \mu_{\pi([K])}\!-\!1 & 1\!-\!\mu_{\pi([K\!-\!1])} & 0
    \end{array}
    \end{bmatrix}\!,\label{eq:Matrix-A}
 \end{align}
 \hrule
 \end{figure*}
 \addtocounter{equation}{-4}
 
\begin{example}[\textit{Degraded Channel Case}]\label{ex:deg_LP-mu=1/3}
We consider a network with $K=3$ users $N=3$ files, namely $\{W_1,W_2,W_3\}$, and $\nB=3$ signal levels. The channel statistics of the three users are given by the cumulative  distribution  functions as
	\begin{align*}
	&\overline{F}_{L_1} =[0.5,0.4,0.3]^T, \cr
	&\overline{F}_{L_2} = [0.7,0.5,0.4]^T,\cr
	&\overline{F}_{L_3} = [0.9,0.6,0.5]^T.
	\end{align*}
    That is, the first user receives the top level with probability $0.5$, but the bits sent over all three $\nB=3$ levels are delivered at this same user with probability $0.3$. 
    Considering the caching strategy $\cent$ with a caching factor of $\mu = 1/3$, it can be defined as follows
    \begin{align*}
        \bc_1 = (0,1/3],\quad \bc_2 = (1/3, 2/3], \quad \bc_3 = (2/3,1].   
    \end{align*}
    Hence, the placement strategy $\cent$ implies that the cached parts of the files at different users are \textit{disjoint}, i.e., ${W^{(n)}_{i} = \bigcup_{k=1}^{3} W^{(n)}_{i,k}}$ for every $i\in [N]$ where $W^{(n)}_{i,k}$ the part of file $W^{(n)}_i$ cached \textit{exactly} by user $k$.  
	
    Without loss of generality, assume user $k$ is interested in file $W^{(n)}_k$, for $k\in\{1,2,3\}$. Here, we have ${\tcents_{\{1\}}=1/3}$, ${\tcents_{\{1,2\}}=2/3}$, and $\tcents_{\{1,2,3\}}=1$. 
    The coefficients ${[\bz]_{(\ell,k)} := z_{\ell,k}}$ that provide the optimum solution of~\eqref{eq:fwKUser_degraded} are give by 
    \begin{align*}
    &\hspace{5mm}\begin{tabular}{ p{5mm}p{5mm}p{4mm}p{4mm}p{4mm}p{4mm}p{4mm}p{4mm}p{4mm}} 
    	 {$\scriptstyle(1,1)$} &  {$\scriptstyle(1,2)$} & {$\scriptstyle(1,3)$} &  {$\scriptstyle(2,1)$} &  {$\scriptstyle(2,2)$} &  {$\scriptstyle(2,3)$} &  {$\scriptstyle(3,1)$} &  {$\scriptstyle(3,2)$} &  {$\scriptstyle(3,3)$} \end{tabular}\nonumber\\
	    &\bz \!=\! \left[ 
     \begin{tabular}{p{5mm}p{5mm}p{4mm}p{4mm}p{4mm}p{4mm}p{4mm}p{4mm}p{4mm}}
	    $\!\!0.37$ & $0.63$ & $\ 0$ & $\ 1$ & $\ 0$ & $\ 0$ & $\ 1$& $\ 0$ & $0\!\!$  
     \end{tabular}
	   \!\!\! \right]^T\!\!,
	\end{align*}
    with the optimum source rate
    \begin{align*}
        \fs \!=\! \frac{\sum_{\ell=1}^{3} z_{\ell,k}\ov{F}_{L_1}(\ell)}{1-\tcents_{\{1\}}} \!=\! \frac{\sum_{\ell=1}^{3} z_{\ell,k}\ov{F}_{L_2}(\ell)}{1-\tcents_{\{1,2\}}} \!=\! 1.326.
    \end{align*}
     \hfill $\diamond$
\end{example}
Finally, we can present our result for the most general case, which is the (non-degraded) $K$-DTVBC. The following theorem provides an upper bound for the source rate of the $K$-DTVBC with \textit{any} given caching strategy $\cache$.
\begin{theorem}[$K$-User (Non-Degraded) BC]\label{thm:Kuser-deg-BC}
    Any achievable source rate of the $K$-DTVBC for a given cache placement strategy $\cache$ and a (distinct) request profile $\bd$ is upper-bounded by 
\begin{align}\label{eq:fwKUser}
    \! f(\cache,\bd) \!\leq\! \fs(\cache,\bd) \!:=\! \min_{\bfw \geq 0} \frac{\sum_{\ell=1}^\nB \max_{k}  \w_{\pi(k)} \ov{F}_{L_{\pi(k)}} (\ell)}{\sum_{k=1}^{K}
	\omega_{\pi(k)}\lp 1-\mu_{\pi([k])} \rp }, 
\end{align}
	where $\bfw := (\w_1,\ldots,\w_K) \in [0,\infty)^K$ is a non-negative vector of length $K$, ${\pi([k]):=\{\pi(1),\ldots, \pi(k)\}}$, and $\pi:[K]\rightarrow [K]$ is the permutation that sorts $\bfw$ in the non-increasing order.
\end{theorem}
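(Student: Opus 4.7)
The strategy is a weighted-sum converse. Fix $\bfw=(\w_1,\ldots,\w_K)\in[0,\infty)^K$ and let $\pi$ be the permutation sorting $\bfw$ in non-increasing order, so $\w_{\pi(1)}\geq\cdots\geq\w_{\pi(K)}$. I will establish the single information-theoretic inequality
\begin{equation*}
nf\sum_{k=1}^K \w_{\pi(k)}\bigl(1-\mu_{\pi([k])}\bigr) \;\leq\; n\sum_{\ell=1}^B \max_{k} \w_{\pi(k)}\ov{F}_{L_{\pi(k)}}(\ell) \;+\; o(n),
\end{equation*}
from which the theorem follows by dividing through by the denominator (which is strictly positive whenever $\bfw\neq 0$ and not all files are fully cached), letting $n\to\infty$, and then minimizing over $\bfw\geq 0$.

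\textbf{Source/cache side.} For each $k\in[K]$ I take as genie side information $V_{\pi(k)}:=C^{(n)}_{\pi([k])}=\bigcup_{j\leq k} C^{(n)}_{\pi(j)}$ at user $\pi(k)$. By~\eqref{eq:H-CS} the portion of $W^{(n)}_{d_{\pi(k)}}$ contained in $V_{\pi(k)}$ has entropy at most $nf\mu_{\pi([k])}$, and since files are independent and each file's cache depends only on that file, $H(W^{(n)}_{d_{\pi(k)}}\mid V_{\pi(k)})\geq nf(1-\mu_{\pi([k])})$. Because $C^{(n)}_{\pi(k)}\subseteq V_{\pi(k)}$, enlarging the decoder's side information cannot increase the decoding error; combined with Fano's inequality this gives
\begin{equation*}
nf\bigl(1-\mu_{\pi([k])}\bigr) - n\epsilon_n \;\leq\; I\bigl(W^{(n)}_{d_{\pi(k)}};\, Y^n_{\pi(k)} \,\mid\, V_{\pi(k)}\bigr).
\end{equation*}
Multiplying by $\w_{\pi(k)}\geq 0$ and summing over $k$ produces the LHS of the target inequality up to $o(n)$.

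\textbf{Channel side and main obstacle.} The remaining task is to prove
\begin{equation*}
\sum_{k=1}^K \w_{\pi(k)}\, I\bigl(W^{(n)}_{d_{\pi(k)}};\, Y^n_{\pi(k)} \,\mid\, V_{\pi(k)}\bigr) \;\leq\; n\sum_{\ell=1}^B \max_{k} \w_{\pi(k)}\ov{F}_{L_{\pi(k)}}(\ell) + o(n).
\end{equation*}
I decompose each received sequence into its $B$ bit-levels: letting $\tilde Y^{(\ell)}_{k,t}$ denote the level-$\ell$ bit of $Y_{k,t}$ (an erasure whenever $L_k[t]<\ell$), a chain-rule expansion writes each mutual information as a double sum over $(\ell,t)$, and at each fixed level $\ell$ the symbols $\tilde Y^{(\ell)}_{k,t}$ are erasure-channel outputs of the common input bit $X_t(\ell)$ delivered with probability $\ov{F}_{L_{\pi(k)}}(\ell)$ to user $\pi(k)$. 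The hard step is to turn this single-bit broadcast-erasure structure into the pointwise weighted inequality $\sum_k\w_{\pi(k)} z_{\ell,k}\ov{F}_{L_{\pi(k)}}(\ell)\leq \max_k \w_{\pi(k)}\ov{F}_{L_{\pi(k)}}(\ell)$ for the fractional level allocations $z_{\ell,k}\geq 0$ with $\sum_k z_{\ell,k}\leq 1$ induced by any coding scheme --- despite that a general encoder $\psi^{(n)}_{\bd}$ may couple bits across levels and exploit the cache contents in $V_{\pi(k)}$ arbitrarily. My plan is to introduce per-level auxiliary random variables $U^{(\ell)}_{\pi(k)}$, invoke a single-bit broadcast-erasure converse whose weighted-sum rate region is the convex hull of the per-user corner points (dual to the allocation LP of Section~\ref{sec:LP}), and then sum over $\ell$ and minimize over $\bfw\geq 0$ to obtain~\eqref{eq:fwKUser}.
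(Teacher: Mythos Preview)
Your source/cache side is correct; the gap is in the channel side. The bound you need,
\[
\sum_{k=1}^K \w_{\pi(k)}\, I\bigl(W^{(n)}_{d_{\pi(k)}}; Y^n_{\pi(k)} \mid V_{\pi(k)}\bigr) \;\leq\; n\sum_{\ell=1}^B \max_k \w_{\pi(k)}\ov{F}_{L_{\pi(k)}}(\ell),
\]
does not follow from a per-level decomposition alone. When you chain-rule each mutual information over $(\ell,t)$, the summand for user $\pi(k)$ is conditioned on $\bigl(X_t(1{:}\ell{-}1),\,Y^{t-1}_{\pi(k)},\,V_{\pi(k)}\bigr)$. Although the cache genies $V_{\pi(k)}$ are nested, the past received sequences $Y^{t-1}_{\pi(k)}$ are \emph{not} nested across $k$ for a non-degraded channel, so the resulting terms do not telescope and there is no off-the-shelf ``single-bit broadcast-erasure converse'' that packages them as an allocation $\{z_{\ell,k}\}$ with $\sum_k z_{\ell,k}\leq 1$. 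The per-level erasure BC is indeed degraded, but its user ordering varies with $\ell$; level-specific auxiliaries built from different orderings do not assemble into a single bound on the weighted sum.

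The paper supplies the missing idea: \emph{channel enhancement}. For each fixed $\bfw$ (sorted by $\pi$) it replaces $L_{\pi(k)}$ by an enhanced $\widetilde L_{\pi(k)}$ via the recursion~\eqref{eq:chnl_ench}, so that the enhanced broadcast channel is stochastically degraded in the order $\pi(1),\ldots,\pi(K)$. Degradedness then allows nested auxiliaries $U_k=\bigl(W^{(n)}_{d_{\pi([k])}},C^{(n)}_{\pi([k])},Y^{Q-1}_{\pi([k])},Q\bigr)$ (Lemma~\ref{lm:Kuser-deg-BC}), producing telescoping quantities $Q_{\ell,k}=H(X(\ell)\mid X(1{:}\ell{-}1),U_{k-1})-H(X(\ell)\mid X(1{:}\ell{-}1),U_k)$ with $\sum_k Q_{\ell,k}\leq 1$; the weighted sum is then immediately bounded by $\max_k \w_{\pi(k)}\ov{F}_{\widetilde L_{\pi(k)}}(\ell)$. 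The enhancement is crafted precisely so that $\max_k \w_{\pi(k)}\ov{F}_{\widetilde L_{\pi(k)}}(\ell)=\max_k \w_{\pi(k)}\ov{F}_{L_{\pi(k)}}(\ell)$ (Lemma~\ref{lm:chnl_ench}(iv)), which is how the bound returns to the original channel statistics. Your proposal needs this enhancement-then-telescope mechanism (or an equivalent device) to close; as written, the plan identifies the right target inequality but lacks the tool to prove it.
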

The proof of Theorem~\ref{thm:Kuser-deg-BC} is presented in Section~\ref{sec:proof_main_1}.

Note that the upper bound in the theorem is given as a min-max problem, for which the evaluation of the optimum point can be computationally challenging. In the following proposition, we show that the upper bound in~\eqref{eq:fwKUser} can be evaluated by solving $K!$ linear programming problems, each corresponding to a permutation of users. Hence, denoting $t_{\mathsf{LP}}$ as the run time for each LP, the complexity of evaluation of the upper-bound in~\eqref{eq:fwKUser} is $K!\times t_{\mathsf{LP}}$. 
\begin{proposition}\label{pro:LP-fbnd}
The min-max problem in~\eqref{eq:fwKUser} is equivalent to 
\begin{align}\label{eq:LP_form}
    \fs(\cache,\bd) = \min_{\pi\in \Pi} \fs_{\pi}(\cache,\bd),
\end{align}
where 
\begin{align}\label{eq:LP_form_2}
    \fs_{\pi}(\cache,\bd) := & \min_{\bx \in \mathbb{R}^{K\!+\!B}} \ [\zeros_{K}^T,\ones_{B }^T]\bx,\\
    & \ \ \ \ \textrm{s.t.}\quad \bA_{\pi} \bx \leq \zeros,\nonumber\\
    & \ \ \ \ \ \ \phantom{\textrm{s.t.}\quad} \bb \bx = 1,\nonumber\\ 
    & \ \ \ \ \phantom{\textrm{s.t.}\quad} -\bx \leq \zeros\nonumber, 
\end{align}
and $\Pi$ is the set of all permutations over $[K]$. The matrix $\bA_{\pi}\in \mathbb{R}^{(KB + K\!-\!1)\times(K+B)}$ is given in~\eqref{eq:Matrix-A} at the top of this page. Moreover,  
\addtocounter{equation}{1}
\begin{align}
\bb=[\ones_K ^T, \zeros_B^T]\in \mathbb{R}^{1\times (K+B)},
    \label{eq:vector-b}
\end{align}
and $\bfI \in \mathbb{R}^{B\times B}$ is the identity matrix. Note that for each permutation $\pi\in\Pi$, the LP problem in~\eqref{eq:LP_form_2} consists of $K\!+\!B$ variables and $K(B\!+\!2)\!+\!B$ constraints. 
\end{proposition}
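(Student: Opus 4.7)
The plan is to reduce the min-max problem \eqref{eq:fwKUser} to $K!$ standard linear programs via two reformulations: first, partition the feasible domain by the sort order of $\bfw$, turning the outer minimization into one linear-fractional program per permutation; second, apply a Charnes--Cooper-style change of variables that linearizes both the ratio and the pointwise maxima, and then verify that the resulting LP coincides with $\fs_\pi(\cache,\bd)$ with data $(\bA_\pi,\bb)$ as displayed.

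For the first step, I will use that the objective in \eqref{eq:fwKUser} is homogeneous of degree zero in $\bfw$ and that, for every $\bfw\in[0,\infty)^K$, the permutation $\pi$ appearing in the formula is any one sorting $\bfw$ non-increasingly; vectors with ties may be assigned to at least one such $\pi$, which suffices for the partition. This gives
\[
\fs(\cache,\bd) \;=\; \min_{\pi\in\Pi}\ \min_{\substack{\bfw\geq 0\\ \w_{\pi(1)}\geq\cdots\geq \w_{\pi(K)}}}\ \frac{\sum_{\ell=1}^{B}\max_{k}\w_{\pi(k)}\ov{F}_{L_{\pi(k)}}(\ell)}{\sum_{k=1}^{K}\w_{\pi(k)}(1-\mu_{\pi([k])})},
\]
so the task reduces to identifying each inner minimization with $\fs_\pi(\cache,\bd)$ from \eqref{eq:LP_form_2}.

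For the second step, fix $\pi$ and introduce the change of variables $x_k := \w_{\pi(k)}(1-\mu_{\pi([k])})$ for $k\in[K]$ together with auxiliary variables $z_\ell$, $\ell\in[B]$, meant to bound the $\ell$th pointwise maximum from above. Scale invariance allows the normalization $\sum_k x_k = 1$, which is exactly $\bb\bx=1$, and turns the objective ratio into the linear functional $\sum_\ell z_\ell = [\zeros_K^T,\ones_B^T]\bx$. Each constraint $z_\ell\geq \w_{\pi(k)}\ov{F}_{L_{\pi(k)}}(\ell)$ rewrites, after multiplying by $1-\mu_{\pi([k])}\geq 0$ and rearranging, as $\ov{F}_{L_{\pi(k)}}(\ell)\, x_k + (\mu_{\pi([k])}-1)\, z_\ell \leq 0$, producing the block-diagonal top portion of $\bA_\pi$. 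Similarly, the monotonicity $\w_{\pi(k)}\geq \w_{\pi(k+1)}$ becomes $(\mu_{\pi([k+1])}-1)\, x_k + (1-\mu_{\pi([k])})\, x_{k+1}\leq 0$ for $k\in[K-1]$, matching the bottom $K-1$ rows of $\bA_\pi$. Non-negativity is captured by $-\bx\leq\zeros$, and the dimension bookkeeping stated in the proposition then follows by inspection.

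The main obstacle is to certify that the LP is tight in both directions. In the forward direction, because each $z_\ell$ enters the objective with a strictly positive coefficient, at optimality $z_\ell=\max_k \w_{\pi(k)}\ov{F}_{L_{\pi(k)}}(\ell)$, so the LP value equals the value of the linear-fractional program. In the reverse direction, any LP-feasible $\bx$ with $\mu_{\pi([k])}<1$ lifts back to $\w_{\pi(k)}:=x_k/(1-\mu_{\pi([k])})$, which satisfies the required ordering and max constraints; the degenerate case $\mu_{\pi([k])}=1$ (meaning $\pi([k])$ collectively caches the whole library) forces $x_{k'}=0$ for $k'\geq k$ through the ordering rows and can be absorbed by a continuity argument or by passing to a lower-dimensional LP over the non-degenerate users. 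Taking the minimum over $\pi\in\Pi$ of the resulting LP values then yields \eqref{eq:LP_form}, and the complexity claim $K!\times t_{\mathsf{LP}}$ is immediate.
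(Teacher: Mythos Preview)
Your proposal is correct and follows essentially the same approach as the paper: partition the feasible domain of $\bfw$ by its sorting permutation, apply a Charnes--Cooper normalization $x_k=\w_{\pi(k)}(1-\mu_{\pi([k])})$ with $\sum_k x_k=1$ (the paper writes this as $\sigma_k$), and introduce epigraph variables $z_\ell$ (the paper's $\theta_\ell$) to linearize the pointwise maxima. Your explicit treatment of the reverse bijection and of the degenerate case $\mu_{\pi([k])}=1$ is more careful than the paper, which asserts the correspondence between $\Omega_\pi$ and $\Sigma_\pi$ without addressing degeneracy.
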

The proof of Proposition~\ref{pro:LP-fbnd} is provided in Section~\ref{sec:proof_main_1}. 

\section{An Achievable Scheme: LP Formulation} \label{sec:LP}
In this section, we provide an achievable scheme, which is based on Linear programming. This scheme is optimum for the degraded broadcast channels (as claimed in Theorem~\ref{thm:Kuser-deg-BC}). However, in an illustrative example, we show that there is a gap between the achievable rate of the proposed scheme and the upper bound in~\eqref{eq:fwKUser}. This implies that either the achievable scheme is not optimum, or the upper bound is not tight. Consequently, the exact characterization of the optimum source rate of a non-degraded $K$-DTVBC with $K>2$ remains as an open problem for future works. 

Similar to~\cite{maddah2014fundamental}, we focus on specific normalized cache sizes $\mu$ such that  $t:= K\mu =KM/N \in \mathbb{N}$. Let us assume that each user $k\in[K]$ requests file $W^{(n)}_k$. The delivery scheme of~\cite{maddah2014fundamental}  consists of broadcasting coded packets to serve multiple users simultaneously. Each coded packet is intended for a group of users $\cS\subseteq \{1,\dots, K\}$ with $|\cS|=t+1$. More precisely, we have
\begin{align*}
    W^{(n)}_\mathcal{\cS} = \bigoplus_{k\in \cS} W^{(n)}_{d_k,\cS\setminus\{k\}}.
\end{align*}
We aim at sending each coded packet $W^{(n)}_\mathcal{\cS}$ to all users $k\in \cS$. To this end, we devise a bit allocation strategy ${\by\!=\!\{y_{\ell,\cS}\!:\! \ell\in [\nB], \cS \subseteq [K], |\cS|\!=\!t\!+\!1\}}$, where {$0 \leq y_{\ell,\cS} \leq 1$} is 
a variable indicating the fraction of time that level $\ell$ of the channel is used to transmit coded message $W^{(n)}_{\cS}$. Note that a signal level $\ell$ can be shared between  \textit{multiple} coded message  $W^{(n)}_{\cS}$. For a feasible allocation policy $\by$, in each level, the sum of time {fractions} allocated to all coded messages should not exceed $1$, that is, 
\begin{align}
  \sum_{\substack{\cS\subseteq [K]\\|\cS|=t+1 }}  y_{\ell,\cS} \leq 1, \qquad \forall \ell\in [\nB].
    \label{eq:LP-level}
\end{align}
To ensure successful decoding of the sub-message $W^{(n)}_{\cS}$ by every user in $k\in\cS$, it is necessary to assign sufficiently large values to the coefficients $y_{\ell,\cS}$. Recall that for a given common source rate
$f$, the rate of $W^{(n)}_\cS$ is given by 
$f/\binom{K}{t}$. 
Then, a coded message $W^{(n)}_{\cS}$ is decodable at user $k\in \cS$ if
\begin{align}
    \sum_{\ell=1}^\nB \overline{F}_{L_k}(\ell) y_{\ell,\cS}  \geq \frac{f}{\binom{K}{t}}.
    \label{eq:LP-decode}
\end{align}
\addtocounter{equation}{3}
\begin{figure*}
\begin{align}\label{eq:mat_G}
\bG = 
\begin{array}{cc} &
\begin{array}{cccccccccccc} \hspace{-15pt}& \!\!
	     {\scriptstyle(1,\{1,2\})} &\!\! {\scriptstyle(1,\{1,3\})} &\!\!\!\! {\scriptstyle(1,\{2,3\})} &\!\! {\scriptstyle(2,\{1,2\})} &\!\!\!\! {\scriptstyle(2,\{1,3\})} &\!\! {\scriptstyle(2,\{2,3\})} &\!\! {\scriptstyle(3,\{1,2\})} &\!\! {\scriptstyle(3,\{1,3\})} &\!\! {\scriptstyle(3,\{2,3\})} & \!\!
	     m \\
\end{array}
\\
\begin{array}{cccccc}
{\scriptstyle(1,\{2\})} \\ {\scriptstyle(2,\{1\})} \\ {\scriptstyle(1,\{3\})} \\ {\scriptstyle(3,\{1\})} \\ {\scriptstyle(2,\{3\})} \\ {\scriptstyle(3,\{2\})} 
\end{array}
&
\hspace{-5pt}\left[\hspace{-10pt}
\begin{array}{cccccccccccc}
&\hspace{7pt} -0.9	&\hspace{7pt} 0	 &\hspace{7pt} 0	&\hspace{7pt}-0.3	&\hspace{7pt}0	&\hspace{7pt}0	&\hspace{7pt}-0.3	&\hspace{7pt} 0	&\hspace{7pt} 0	&\hspace{7pt} 0.33\\
&\hspace{7pt} -0.7 &\hspace{7pt}	0&\hspace{7pt}	0&\hspace{7pt}	-0.4&\hspace{7pt}	0	&\hspace{7pt} 0	&\hspace{7pt}-0.4	&\hspace{7pt}0	&\hspace{7pt}0	&\hspace{7pt} 0.33\\
&\hspace{7pt} 0&\hspace{7pt}	-0.9&\hspace{7pt}	0&\hspace{7pt}  0&\hspace{7pt}	-0.3&\hspace{7pt}	0&\hspace{7pt}	0&\hspace{7pt}	-0.3&\hspace{7pt}	0&\hspace{7pt}	0.33\\
&\hspace{7pt} 0 &\hspace{7pt}	-0.5&\hspace{7pt}	0&\hspace{7pt}	0&\hspace{7pt}	-0.5&\hspace{7pt}	0&\hspace{7pt}	0&\hspace{7pt}	-0.5&\hspace{7pt}	0&\hspace{7pt}	0.33\\
&\hspace{7pt} 0 &\hspace{7pt}	0&\hspace{7pt}	-0.7&\hspace{7pt}	0&\hspace{7pt}	0&\hspace{7pt}	-0.4&\hspace{7pt}	0&\hspace{7pt}	0&\hspace{7pt}	-0.4&\hspace{7pt}	0.33\\
&\hspace{7pt} 0 &\hspace{7pt}	0&\hspace{7pt}	-0.5&\hspace{7pt}	0&\hspace{7pt}	0&\hspace{7pt}	-0.5&\hspace{7pt}	0&\hspace{7pt}	0&\hspace{7pt}	-0.5&\hspace{7pt}	0.33
\end{array}
\right],
\end{array}
\end{align}
\begin{align}\label{eq:mat_H}
\bH = 
\begin{array}{cc} &
\begin{array}{cccccccccccc} \hspace{-22pt}& \!\!
	     {\scriptstyle(1,\{1,2\})} &\!\! {\scriptstyle(1,\{1,3\})} &\!\!\!\! {\scriptstyle(1,\{2,3\})} &\!\! {\scriptstyle(2,\{1,2\})} &\!\!\!\! {\scriptstyle(2,\{1,3\})} &\!\! {\scriptstyle(2,\{2,3\})} &\!\! {\scriptstyle(3,\{1,2\})} &\!\! {\scriptstyle(3,\{1,3\})} &\!\! {\scriptstyle(3,\{2,3\})} & \!\!
	     m \\
\end{array}
\\
\begin{array}{cccccc}
{1} \\ {2} \\ {3}  
\end{array}
&
\hspace{-10pt}\left[\hspace{-22pt}
\begin{array}{cccccccccccc}
&\hspace{22pt} 1	&\hspace{22pt} 1	 &\hspace{22pt} 1	&\hspace{22pt}0	&\hspace{22pt}0	&\hspace{22pt}0	&\hspace{22pt}0	&\hspace{22pt} 0	&\hspace{22pt} 0	&\hspace{18pt} 0\\
&\hspace{22pt} 0 &\hspace{22pt}	0&\hspace{22pt}	0&\hspace{22pt}	1&\hspace{22pt}	1	&\hspace{22pt} 1	&\hspace{22pt}0	&\hspace{22pt}0	&\hspace{22pt}0	&\hspace{18pt} 0\\
&\hspace{22pt} 0&\hspace{22pt}	0&\hspace{22pt}	0&\hspace{22pt}  0&\hspace{22pt}	0&\hspace{22pt}	0&\hspace{22pt}	1&\hspace{22pt}	1&\hspace{22pt}	1&\hspace{18pt}	0
\end{array}
\right].
\end{array}
\end{align}
\begin{align}
	    \begin{array}{cccccccccccc}
	     & \!
	     {\scriptstyle(1,\{1,2\})} &\! {\scriptstyle(1,\{1,3\})} &\! {\scriptstyle(1,\{2,3\})} &\! {\scriptstyle(2,\{1,2\})} &\! {\scriptstyle(2,\{1,3\})} &\! {\scriptstyle(2,\{2,3\})} &\! {\scriptstyle(3,\{1,2\})} &\! {\scriptstyle(3,\{1,3\})} &\! {\scriptstyle(3,\{2,3\})} & \!
	     m \\
	    \by = \Big[ &\!
	    \frac{2}{3} &  
	    \frac{ 1}{3} &
	    0 &
	    \frac{1}{12} &
	    0 & 
	    \frac{11}{12} &
	    0 & 
	    \frac{2}{3} &
	    \frac{1}{3} &
	    \frac{3}{2} &
	    \Big]^T\!\!.
	    \end{array}
	    \label{eq:vect_y} 
	\end{align}
\hrule
\end{figure*}
\addtocounter{equation}{-6}

So, we have an optimization problem given by 
\begin{align}
\begin{split}
    \fl := & \max \  f\\
    & \textrm{s.t.} \  \eqref{eq:LP-level}-\eqref{eq:LP-decode}\\
    &\ y_{\ell,\cS} \geq 0, \quad \forall \ell\in [\nB], \forall \cS\subseteq [K], |\cS|=t+1.
\end{split}
\label{eq:LP}
\end{align}
We can write the optimization problem as a linear program. To this end, we define a vector $\by$ of length ${\ml = \nB \binom{K}{t+1}\!+\!1}$, forming by stacking all variables in  ${\{y_{\ell,\cS}\!:\! \ell\!\in\! [\nB], \cS\subseteq [K], |\cS| = t\!+\!1\}}$ along with $f$ at the very last position. The first $\nB\binom{K}{t+1}$ entries of $\by$ are labeled by pairs $(\ell,\cS)$ and we set $\by_m = -f$ as the last entry of $\by$.

To write the constraint 
in~\eqref{eq:LP-level} in matrix form, we can define a matrix $\bH\in \mathbb{R}^{\nB \times m}$, where its rows indexed by $\ell\in [B]$, and its columns are labeled similar to $\by$. Moreover, we have 
\begin{align*}
    \bH_{\b,(\ell,\cS)} &=  
    \begin{cases}
    1 & \textrm{if $\b=\ell$}\\
    0 & \textrm{if $\b\neq\ell$},
    \end{cases}\\
    \bH_{\b,m} &= 0, \qquad \forall \b\in [\nB].
\end{align*}
Thus, the constraint in~\eqref{eq:LP-level} is equivalent to $\bH \by \leq \ones$.

Similarly, to write the constraint in~\eqref{eq:LP-decode}, we define a matrix $\bG\in \mathbb{R}^{K\binom{K-1}{t}\times \ml}$. %
The columns of $\bG$ are labeled similar the entries of $\by$, and each row in $\bG$ is labeled by a pair $(k,\cT)$ where 
$\cT\subseteq [K]\setminus \{k\}$, and $|\cT|=t$. An entry at row $(k,\cT)$ and column $(\ell,\cS)$ is given by 
\begin{align*}
    \bG_{(k,\cT),(\ell,\cS)} &=  
    \begin{cases}
    -\overline{F}_{L_k}(\ell)  & \textrm{if $\cS = \cT \cup \{k\}$},\\
    0 & \textrm{otherwise},
    \end{cases}
\end{align*}
and the entries in the $m$th column are
\begin{align*}
    \bG_{(k,\cT),m} &= \frac{1}{\binom{K}{t}}, \qquad \forall (k,\cT). 
\end{align*}
With this, the constraint~\eqref{eq:LP-decode} is reduced to $\bG \by \leq \zeros$. 

We can conclude the following proposition by rephrasing the coding scheme devised above and its constraints in a linear form.  

\begin{proposition}\label{prop:LP}
For any  $K$-DTVBC with cache placement strategy $\cache$ and a (distinct) request profile $\bd$, the source rate $\fl$ given by 
\begin{align}\label{eq:LP_ach_schm}
    \fl = & \min \ [\zeros_{m-1}^T , -1] \by\\
    & \ \textrm{s.t.} \quad \bG \by \leq \zeros\nonumber \\
    &\phantom{\textrm{s.t.} \quad} \ \bH \by \leq \ones\nonumber \\ 
    &\phantom{\textrm{s.t.} \quad} -\by \leq \zeros\nonumber,
\end{align}
is achievable. 
\end{proposition}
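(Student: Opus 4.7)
The plan is to convert an optimal solution of the LP~\eqref{eq:LP_ach_schm} into an explicit coding scheme, combining the coded-caching construction of~\cite{maddah2014fundamental} on the source side with a level- and message-indexed time-sharing schedule on the channel side dictated by $\{y^*_{\ell,\cS}\}$. Fix any $f<\fl$ together with an optimal LP solution $\by^*$. Using the central placement structure for $t=K\mu$, split each file $W^{(n)}_i$ into $\binom{K}{t}$ equal-length pieces $\{W^{(n)}_{i,\cT}:|\cT|=t\}$ of $nf/\binom{K}{t}$ bits each, with user $k$ caching $W^{(n)}_{i,\cT}$ iff $k\in\cT$. For the distinct request vector $\bd$, the transmitter forms the XOR packets $W^{(n)}_{\cS}=\bigoplus_{k\in\cS} W^{(n)}_{d_k,\cS\setminus\{k\}}$ for every $(t+1)$-subset $\cS$; every user in $\cS$ holds all but one of the summands in its cache, so recovering $W^{(n)}_{\cS}$ is equivalent to decoding its missing subfile.

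For the channel schedule, fix a small $\epsilon>0$ and partition the $n$ channel uses so that, at each level $\ell\in[B]$, a fraction $y^*_{\ell,\cS}-\epsilon$ of the block is dedicated to sending an erasure-channel-coded version of $W^{(n)}_{\cS}$ through that particular level. The constraint $\bH\by^*\leq\ones$, i.e.~\eqref{eq:LP-level}, guarantees such a partition exists. Within the time slots assigned to $(\ell,\cS)$, user $k\in\cS$ sees each transmitted bit iff $L_k[\cdot]\geq\ell$, an i.i.d.\ event of probability $\ov{F}_{L_k}(\ell)$. A Chernoff bound on $\{L_k[t]\}_{t=1}^{n}$ delivers at least $n(y^*_{\ell,\cS}-2\epsilon)\ov{F}_{L_k}(\ell)$ un-erased bits to user $k$ with probability approaching one; summing over $\ell$ and invoking $\bG\by^*\leq\zeros$, i.e.~\eqref{eq:LP-decode}, the total number of delivered bits per message is at least $nf/\binom{K}{t}-O(n\epsilon)$. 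Choosing a capacity-approaching erasure code in each sub-block then allows every user in $\cS$ to decode $W^{(n)}_{\cS}$ and hence extract its desired subfile; a union bound over the $K\binom{K}{t+1}$ user-message pairs controls the overall error probability, and letting $n\to\infty$ followed by $\epsilon\to 0^+$ yields any rate $f<\fl$.

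The main obstacle is handling the two layers of randomness (channel states and codebook realizations) simultaneously, across all users and all coded messages. This is resolved by keeping the number of $(\ell,\cS)$ sub-blocks fixed at $O(B\binom{K}{t+1})$, independent of $n$, so that a single Chernoff bound per $(k,\ell,\cS)$ triple together with a finite union bound suffice; the integer rounding of block lengths $\lfloor n(y^*_{\ell,\cS}-\epsilon)\rfloor$ is absorbed into the $O(n\epsilon)$ slack, and we do not need the channel state sequences to be independent across users because each user's concentration argument is self-contained.
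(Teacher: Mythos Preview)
Your proposal is correct and follows essentially the same approach as the paper: the paper's justification of Proposition~\ref{prop:LP} is precisely the informal discussion preceding it, namely the XOR packets of~\cite{maddah2014fundamental} combined with the level/time allocation $\{y_{\ell,\cS}\}$ subject to~\eqref{eq:LP-level}--\eqref{eq:LP-decode}. You have simply made the channel-coding layer (per-level erasure codes, Chernoff concentration on the i.i.d.\ state sequence, and the finite union bound over $(k,\ell,\cS)$ triples) explicit where the paper leaves it implicit.
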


In the following example, we evaluate $\fl$ by solving~\eqref{eq:LP_ach_schm} for a non-degraded broadcast channel with $K=3$ users.  We also solve the LP in~\eqref{eq:fwKUser} and show that the achievable rate and the upper bound do not match. This shows that our result does not provide an exact characterization for the maximum source rate, when the channel is not degraded, and the number of users is more than $2$. 
\begin{example}[\textit{Non-Degraded Channel Case}]\label{ex:LP-mu=1/3}
Consider  a network with $K=3$ users $N=3$ files, namely $\{W_1,W_2,W_3\}$, and $\nB=3$ transmit levels. The channel statistics of the three users are given by the CCDFs as
	\begin{align*}
	&\overline{F}_{L_1} =[0.9,0.3,0.3]^T, \cr
	&\overline{F}_{L_2} = [0.7,0.4,0.4]^T,\cr
	&\overline{F}_{L_3} = [0.5,0.5,0.5]^T.
	\end{align*} 
    Consider the caching strategy $\cent$ with $\mu = 1/3$, i.e.,
    \begin{align*}
        \bc_1 = (0,1/3],\quad \bc_2 = (1/3, 2/3], \quad \bc_3 = (2/3,1].   
    \end{align*}  
    Again,  we assume that user $k$ is interested in file $W^{(n)}_k$, for $k\in\{1,2,3\}$.  
    Here, the LP in~\eqref{eq:LP_ach_schm} is given by
    \begin{align}\label{eq:LP_ach_ex}
    \fl = & \min \ [\zeros_{1\times (m-1)} , -1] \by\\
    &\  \textrm{s.t.} \quad \bG \by \leq \zeros\nonumber \\
    &\ \phantom{\textrm{s.t.} \quad} \bH \by \leq \ones\nonumber \\ 
    &\phantom{\textrm{s.t.} \quad} -\by \leq \zeros\nonumber,
\end{align}
\addtocounter{equation}{3}
where the matrices $\bG$ and $\bH$ are given in~\eqref{eq:mat_G} and~\eqref{eq:mat_H}, at the top of this page. 
   
    The optimum solution of the LP in~\eqref{eq:LP_ach_ex} is also presented in~\eqref{eq:vect_y}
 where $\by_{(\ell,\cS)}$ indicates the fraction of time that the transmitter uses signal level $\ell$ to send a coded message $W^{(n)}_{\cS}$. The transmitter has to send coded messages $W^{(n)}_{\{1,2\}} = W^{(n)}_{1,\{2\}} \oplus W^{(n)}_{2,\{1\}}$, $W^{(n)}_{\{1,3\}} = W^{(n)}_{1,\{3\}} \oplus W^{(n)}_{3,\{1\}}$, and $W^{(n)}_{\{2,3\}} = W^{(n)}_{2,\{3\}} \oplus W^{(n)}_{3,\{2\}}$. Note that the source rate of ${\fl =\by_m= 3/2}$ is achievable. 
 Therefore, since the rate of each coded message is $1/3$ of the rate of the original files, the source rate for each coded message is $\fl/3 = 1/2$.
	
The transmission scheme devised by~\eqref{eq:vect_y} suggests that $W^{(n)}_{\{1,2\}}$ will be broadcast over the top level ($\ell=1$) for $2/3$ fraction of time, and the second level ($\ell=2$) for $1/12$ fraction. Thus, user~$1$ is able to decode the message $W^{(n)}_{\{1,2\}}$, as
	\begin{align*}
	    0.9 \times \frac{2}{3} + 0.3 \times \frac{1}{12} = \frac{5}{8} \geq \frac{1}{2} = \frac{\fl}{3}.
	\end{align*}
	Similarly, user $2$ decodes the message, since 
	\begin{align*}
	   0.7 \times \frac{2}{3} + 0.4 \times \frac{1}{12} = \frac{1}{2} \geq \frac{1}{2} = \frac{\fl}{3}.
	\end{align*} 
    A similar argument holds for decodability of $W^{(n)}_{\{1,3\}}$ at users $1$ and $3$, as well as decodability of $W^{(n)}_{\{2,3\}}$ at users $2$ and $3$. 
    
    Next, we evaluate the 
    upper-bound $\fs(\cent,\bd)$. We solve the LP problems in~\eqref{eq:fwKUser}  
     for all possible permutations. The bound corresponding to each permutation is given in Table~\ref{tab:f_perm}.
		\begin{table}[h]
	    \centering
	    \begin{tabular}{|c|c|}
	    \hline
	    $\pi$ & $\fs_{\pi}(\cent,\bd)$ \\
	    \hline\hline
	    $(3,2,1)$ & $1.66$ \\
	   \hline
	    $(3,1,2)$ & $1.76$ \\
	   \hline
	   $(2,3,1)$ & $\textbf{1.61}$ \\
	   \hline
	   $(2,1,3)$ & $1.62$ \\
	   \hline
	   $(1,3,2)$ & $1.73$ \\
	   \hline
	   $(1,2,3)$ & $1.64$ \\
	   \hline
	   \end{tabular}
    \vspace{5pt}
	   \caption{The upper bound on the source rate for each permutation with the caching strategy $\cent$ and the normalized cache size $\mu=1/3$.}
	    \label{tab:f_perm}
	\end{table}
 
    Therefore, we get
    \begin{align*}
        \htf & :=\fs(\cent,\bd)= \min_{\pi\in \Pi}\fs_{\pi}(\cent,\bd)\\
        & = \min \{1.66,    1.76, \textbf{1.61}, 1.62,    1.73,    1.64\} = 1.61,
	\end{align*}
    where the minimum value is attained for the permutation ${\pis = (2,3,1)}$ with ${\bfw^\s=(0,1.25,1)}$. Clearly, we have ${\fl = 1.5 < 1.61 = \htf} $, and there is a gap between the achievable rate and the upper bound. \hfill $\diamond$
\end{example}
\section{Proof of Theorem~\ref{thm:2-User}}\label{sec:proof_main_2}
In this section, we present the proof of Theorem~\ref{thm:2-User}. To do this, we first provide the converse proof and then discuss the achievability part. For the converse proof, we first enhance the channel to achieve a degraded broadcast channel. Next, we introduce a lemma that allows us to establish an upper bound on the maximum achievable source rate for a physically degraded BC. By applying this lemma to the degraded channel we have obtained, we are able to characterize the maximum achievable source rate.
\subsection{Converse}
The converse proof of Theorem~\ref{thm:2-User} is based on the result of~\cite{david2012fading}. We need to construct a degraded broadcast channel. In this regard, we replace $L_2$, the channel of User 2, with an enhanced channel~$\widetilde{L}_2$. For a given $\w\geq 1$, we define 
\begin{align}\label{eq:enh-2user}
    \ov{F}_{\wdt{L}_2}(\ell) \!:=\! \min \lb 1, \max \lb \ov{F}_{L_2}(\ell), \w \ov{F}_{L_1}(\ell) \rb\rb, \!\quad \ell\in [B],
\end{align}
which is the CCDF of random variable $\widetilde{L}_2$. Moreover, we define ${\wdt{Y}_2:=D^{B-\wdt{L}_2}X = X(1:\wdt{L}_2)}$. Hence, we have a degraded broadcast channel, i.e., $X\leftrightarrow \wdt{Y}_2\leftrightarrow Y_{1}$.

Next, we derive an upper bound on the maximum achievable source rate for a physically degraded BC. We refer to the appendix for the proof of Lemma~\ref{lm:2user-deg-BC}.

\begin{lm}\label{lm:2user-deg-BC}
    Consider a physically degraded memoryless BC described by $P_{Y_1,Y_2|X}$ for a given cache placement strategy $\cache$ and a distinct request profile $\bd$. Then, any achievable  source rate $f(\cache,\bd)$ satisfies 
	\begin{align}
	& f(\cache,\bd)\leq \frac{ I(U_1;Y_{1})}{1-\mu},\label{eq:f-I-2User}\\
	& f(\cache,\bd)\leq  \frac{I(X;Y_2|U_1)}{1-2\mu}.\label{eq:f-I-2User-2}
	\end{align}
for some $U_1$ satisfying $U_1\leftrightarrow X\leftrightarrow Y_2\leftrightarrow Y_{1}$.
\end{lm}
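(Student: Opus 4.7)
The plan is a Bergmans-style converse for the physically degraded broadcast channel, adapted to the cache-aided source-coding setting. I will first use Fano's inequality together with a cache-based lower bound on the residual file entropy, and then single-letterize with an auxiliary random variable built from past outputs of the weaker user.

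\medskip
\noindent\textbf{Step 1 (Fano and cache bookkeeping).}
For user~$1$, since only the portion $C_{1,d_1}$ of the cache is correlated with $W_{d_1}$ and all other files are independent of $W_{d_1}$, one has $H(W_{d_1}\mid C_1)\ge nf(1-\mu)$. Combining with Fano's inequality $H(W_{d_1}\mid Y_1^n,C_1)\le n\epsilon_n$ and then adding $C_1$ into the first argument of the mutual information yields
\[
nf(1-\mu)\;\le\; I(W_{d_1};Y_1^n\mid C_1)+n\epsilon_n\;\le\;I(W_{d_1},C_1;Y_1^n)+n\epsilon_n.
\]
For user~$2$, I enlarge the conditioning by $(W_{d_1},C_1)$. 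Because the request profile is distinct and files are mutually independent, only the part of $C_1\cup C_2$ overlapping with file $d_2$ reduces $H(W_{d_2})$, and its entropy is at most $\mu_{\{1,2\}}nf\le 2\mu\, nf$ (the union bound $\mu_{\{1,2\}}\le 2\mu$ is valid since $\mu\le 1/2$). Applying Fano's inequality to user~$2$'s decoder (augmenting the conditioning with $(W_{d_1},C_1)$, which can only decrease entropy) gives
\[
nf(1-2\mu)\;\le\;I(W_{d_2};Y_2^n\mid W_{d_1},C_1,C_2)+n\epsilon_n.
\]

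\medskip
\noindent\textbf{Step 2 (single-letterization).}
I set the auxiliary process $U_{1,t}:=(W_{d_1},C_1,Y_1^{t-1})$. Standard chain-rule manipulations yield
$I(W_{d_1},C_1;Y_1^n)\le \sum_{t=1}^n I(U_{1,t};Y_{1,t})$.
For the second bound I expand
\[
I(W_{d_2};Y_2^n\!\mid\! W_{d_1},C_1,C_2)=\sum_{t=1}^{n}\!\bigl[H(Y_{2,t}\!\mid\! W_{d_1},C_1,C_2,Y_2^{t-1})-H(Y_{2,t}\!\mid\! W_{d_1},W_{d_2},C_1,C_2,Y_2^{t-1})\bigr].
\]
The first term I upper-bound by $H(Y_{2,t}\mid U_{1,t})$ using the physical degradation Markov chain $Y_1^{t-1}\!\to\!(W_{d_1},C_1,Y_2^{t-1})\!\to\!Y_{2,t}$ (so that inserting $Y_1^{t-1}$ into the conditioning is free) and then dropping the extra variables $C_2,Y_2^{t-1}$ to enlarge the entropy. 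The second term I lower-bound by $H(Y_{2,t}\mid X_t,U_{1,t})=H(Y_{2,t}\mid X_t)$ by adjoining all remaining files and caches (which together determine $X_t$) and invoking the memoryless channel, so that $Y_{2,t}$ becomes independent of the past given $X_t$. Combining, $I(W_{d_2};Y_2^n\mid W_{d_1},C_1,C_2)\le\sum_t I(X_t;Y_{2,t}\mid U_{1,t})$.

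\medskip
\noindent\textbf{Step 3 (time-sharing and Markov verification).}
Introducing a time-sharing index $T$ uniform on $[n]$ and independent of everything else, and defining $U_1:=(U_{1,T},T)$, $X:=X_T$, $Y_k:=Y_{k,T}$, the usual averaging produces the single-letter inequalities \eqref{eq:f-I-2User} and \eqref{eq:f-I-2User-2} up to an $\epsilon_n\to 0$ term. The Markov chain $U_1\!\to\! X\!\to\! Y_2\!\to\! Y_1$ follows since (i) there is no feedback, so $X_t$ is a function of the files and caches and hence $(Y_{2,t},Y_{1,t})$ is conditionally independent of $U_{1,t}$ given $X_t$ by the memoryless channel; and (ii) physical degradation supplies $X_t\!\to\!Y_{2,t}\!\to\!Y_{1,t}$. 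Letting $n\to\infty$ completes the proof.

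\medskip
The main obstacle I anticipate is the cache bookkeeping inside Step~2: the first entropy term is conditioned on $(W_{d_1},C_1,C_2,Y_2^{t-1})$ while the target upper bound $H(Y_{2,t}\mid U_{1,t})$ is conditioned on $(W_{d_1},C_1,Y_1^{t-1})$, which is neither a subset nor a superset. Handling this requires the chain $H(Y_{2,t}\mid W_{d_1},C_1,C_2,Y_2^{t-1})\le H(Y_{2,t}\mid W_{d_1},C_1,C_2,Y_1^{t-1},Y_2^{t-1})\le H(Y_{2,t}\mid W_{d_1},C_1,Y_1^{t-1})$, where the first step is the degradedness Markov chain and the second is the usual conditioning-monotonicity applied in the correct direction. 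Once this is in place, the remainder of the proof is routine.
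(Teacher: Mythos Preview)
Your proposal is correct and follows essentially the same approach as the paper's proof: Fano's inequality combined with the cache bound $I(W_{d_k};C_{[k]})\le n\mu_{[k]}f$, chain-rule single-letterization with the auxiliary $U_{1,t}=(W_{d_1},C_1,Y_1^{t-1})$, insertion of the weaker user's past outputs via physical degradedness, and a time-sharing variable. Two cosmetic remarks: in your ``obstacle'' chain the first inequality is in fact an equality (degradedness makes $Y_1^{t-1}$ redundant once $Y_2^{t-1}$ is given), and the union bound $\mu_{\{1,2\}}\le 2\mu$ holds for all $\mu$, not only for $\mu\le 1/2$.
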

Now, we are ready for the proof of Theorem~\ref{thm:2-User}. We can use Lemma~\ref{lm:2user-deg-BC} for the degraded channel obtained by the enhancement procedure in~\eqref{eq:enh-2user}. Let $U_1$ be the random variable satisfying the claim of Lemma~\ref{lm:2user-deg-BC}, and form a Markov chain $U_1\leftrightarrow X\leftrightarrow \wdt{Y}_2\leftrightarrow Y_{1}$. Using the fact that CSI is available at the receivers, for the terms in~\eqref{eq:f-I-2User} and~\eqref{eq:f-I-2User-2}, we can write 
\begin{align}\label{eq:I-U_Y1}
     I(U_1;Y_1,L_1)&= I(U_1;X(1:L_1),L_1) \nonumber\\
    & = \sum_{j=1}^{B}\P_{L_1}(j)I(U;X(1:j),L_1=j)\nonumber\\
    & = \sum_{j=1}^{B}\lb \P_{L_1}(j)\sum_{\ell=1}^{j}I(U_1;X(\ell)|X(1:\ell-1)) \rb\nonumber\\
    & =\sum_{\ell=1}^{B}\lb I(U_1;X(\ell)|X(\ell-1))\sum_{j=1}^{\ell}\P_{L_1}(j)\rb\nonumber \\
    & =\sum_{\ell=1}^{B}\ov{F}_{L_1}(\ell)I(U_1;X(\ell)|X(\ell-1))\nonumber\\
    & = \sum_{\ell=1}^{B}\ov{F}_{L_1}(\ell)\bigg[ H(X(\ell)|X(1\!:\!\ell\!-\!1))- H(X(\ell)|X(1\!:\!\ell\!-\!1)),U_1)\bigg].
        \end{align}
Similarly, we have
\begin{align}\label{eq:I-U_Y2}
    I(X;\wdt{Y},\wdt{L}_2&|U_1)= I(X;X(1:\wdt{L}_2),\wdt{L}_2|U_1) \nonumber\\
    & = I(X;\wdt{L}_2|U_1) + I(X;X(1:\wdt{L}_2)|U_1,\wdt{L}_2)\nonumber \\
    & \stackrel{\rm (a)}{=} \sum_{j=1}^{B}\P_{\wdt{L}_2}(j)I(X;X(1:j)|U_1,\wdt{L}_2=j)\nonumber\\
    & = \sum_{j=1}^{B}\lb \P_{\wdt{L}_2}(j)\sum_{\ell=1}^{j}I(X;X(\ell)|X(1:\ell-1),U_1) \rb\nonumber\\
    & =\sum_{\ell=1}^{B}\lb I(X;X(\ell)|X(\ell-1),U_1)\sum_{j=1}^{\ell}\P_{\wdt{L}_2}(j)\rb\nonumber \\
    & =\sum_{\ell=1}^{B}\ov{F}_{\wdt{L}_2}(\ell)I(X;X(\ell)|X(\ell-1),U_1)\nonumber\\
    & = \sum_{\ell=1}^{B}\ov{F}_{\wdt{L}_2}(\ell)\big[H(X(\ell)|X(1:\ell-1),U_1) -H(X(\ell)|X,X(1:\ell-1)),U_1)\big]\nonumber\\
    & = \sum_{\ell=1}^{B}\ov{F}_{\wdt{L}_2}(\ell) H(X(\ell)|X(1:\ell-1),U_1),
\end{align}
where $\rm{(a)}$ follows from the fact that $\wdt{L}_2$ is independent from $U_1$ and $X$. Therefore, from Lemma~\ref{lm:2user-deg-BC}, we get
\begin{align}
    & \w (1-\mu) f(\cache,\bd) \!\leq\!\w \!\sum_{\ell=1}^{B}\ov{F}_{L_1}(\ell)\!\big[ H(X(\ell)|X(1\!:\!\ell\!-\!1)) -H(X(\ell)|X(1\!:\!\ell\!-\!1)),U_1)\big],\label{eq:f-w-2User}
 \end{align}
 and
 \begin{align}\label{eq:f-w-2User:0}
    & (1-2\mu) f(\cache,\bd) \leq \sum_{\ell=1}^{B}\ov{F}_{\wdt{L}_2}(\ell) H(X(\ell)|X(1:\ell-1),U_1).
\end{align}

Taking the sum of the two inequalities in~\eqref{eq:f-w-2User} and~\eqref{eq:f-w-2User:0}, we arrive at
\begin{align}\label{eq:fw-2User}
    &\!(\w(1\!-\!\mu) \!+\! (1\!-\!2\mu)) f(\cache,\bd) \nonumber\\
    & \!\leq\! \w \sum_{\ell=1}^B \ov{F}_{L_1}(\ell)H(X(\ell)|X(1\!:\!\ell\!-\!1)) \nonumber\\
    &\phantom{\leq} \!+\! \sum_{\ell=1}^{B}(\tilde{g}(\ell)\!-\!\w)\ov{F}_{L_1}(\ell)H(X(\ell)|X(1\!:\!\ell\!-\!1),U_1),
\end{align}
where $\tilde{g}(\ell):=\ov{F}_{\wdt{L}_2}(\ell)/\ov{F}_{L_1}(\ell)$. The summands of the first summation in~\eqref{eq:fw-2User} will be maximized by an i.i.d. Bernoulli random variable choice for $X_1,\ldots, X_{B}$. Moreover, the terms in the second summation can be maximized if
\begin{align*}
   H(X(\ell)|X(1\!:\!\ell\!-\!1),U_1) = 
    \begin{cases}
    1 & \tilde{g}(\ell)>\w,\\
    0 & \tilde{g}(\ell)\leq \w,
    \end{cases}
\end{align*}
which can be satisfied by an optimum choice for $U_1$, given by
\[U_1 = \{X(\ell)| \tilde{g}(\ell)\leq \w\}. \]
Hence, for~\eqref{eq:fw-2User} we can write
\begin{align}\label{eq:fw-2User-1}
    & \!(\w(1\!-\!\mu) \!+\! (1\!-\!2\mu)) f(\cache,\bd) \nonumber \\
    &\leq \w \sum_{\ell:\tilde{g}(\ell)\leq \w} \ov{F}_{L_1}(\ell) + \sum_{\ell:\tilde{g}(\ell)>\w}\tilde{g}(\ell)\ov{F}_{L_1}(\ell)\nonumber\\
    & \stackrel{\rm{(a)}}{=} \w \sum_{\ell:\tilde{g}(\ell)\leq \w} \ov{F}_{L_1}(\ell) + \sum_{\ell:\tilde{g}(\ell)>\w}\ov{F}_{\wdt{L}_2}(\ell)\nonumber\\
    & \stackrel{\rm{(b)}}{=}  \w \sum_{\ell:\tilde{g}(\ell)\leq \w} \ov{F}_{L_1}(\ell) + \sum_{\ell:\tilde{g}(\ell)>\w}\ov{F}_{L_2}(\ell)\nonumber\\
    & = \w R_1(\w)+R_2(\w),
\end{align}
where $R_1(\w)$ and $R_2(\w)$ are defined in~\eqref{eq:R1-R2}. We note that in the chain of inequalities in~\eqref{eq:fw-2User-1}, the step $\rm{(a)}$ follows from ${\tilde{g}(\ell)\ov{F}_{L_1}(\ell) = \ov{F}_{L_2}(\ell)}$ and ${\{\ell|\tilde{g}(\ell)>\w\} = \{\ell|g(\ell)>\w\}}$ and~$\rm{(b)}$ holds since $\ov{F}_{\wdt{L}_2}(\ell) = \ov{F}_{L_2}(\ell)$ whenever $g(\ell)>\w$.
Dividing both sides of~\eqref{eq:fw-2User-1} by $\w(1\!-\!\mu) + (1\!-\!2\mu)$ and minimizing over all $\w\geq 1$, we arrive at the the first minimization in~\eqref{eq:f-2User}.

For $0\leq \w \leq 1$, we can repeat the steps in~\eqref{eq:enh-2user} through~\eqref{eq:fw-2User-1}
by swapping the labels of the users and replacing $\w$ by $\frac{1}{\w}$. Under these reversed labels, we now enhance the channel of User 1 and get the second minimization in~\eqref{eq:f-2User}.

Finally, we characterize the maximum achievable source rate when $\mu>\frac{1}{2}$. Starting~\eqref{eq:f-I-2User} and using~\eqref{eq:I-U_Y1}, we can write
\begin{align}\label{eq:f-mu-2}
    (1-\mu) f(\cache,\bd)  & \leq I(U_1;Y_1,L_1)\nonumber\\
    & = \sum_{\ell=1}^{B}\ov{F}_{L_1}(\ell)\big[ H(X(\ell)|X(1:\ell-1)) -H(X(\ell)|X(1\!:\!\ell\!-\!1)),U_1)\big]\nonumber\\
    & \leq \sum_{\ell=1}^{B}\ov{F}_{L_1}(\ell).
\end{align}
Similarly, by swapping the labels of the users, we can repeat the steps in~\eqref{eq:f-mu-2} that leads us to
\begin{align*}
    (1-\mu)f(\cache,\bd) \leq \sum_{\ell=1}^{B}\ov{F}_{L_2}(\ell).
\end{align*}
This completes the converse proof.

\subsection{Achievability}
     The achievability proof of Theorem~\ref{thm:2-User} is based on a linear scheme in which the transmitter only broadcasts a raw and linear combination of the messages. To show that $\fs$ is achievable, we split the messages into three messages, including two private messages (one for each user) and a common message, which is intended for both users. Then, we allocate the (signal) levels in $[B]$ to each of these messages and prove that both users can decode their desired file. 
     
     Without loss of generality, we assume the source rate $\fs$ is obtained by the first minimization in~\eqref{eq:f-2User} and~\eqref{eq:f-2User-2} for $0\leq \mu \leq\frac{1}{2}$ and $\frac{1}{2}\leq\mu \leq 1$, respectively. Now, we present the achievability proof for each regime of $\mu$.
     
     \noindent\underline{$0\leq \mu \leq\frac{1}{2}:$}\\[2mm]
     Consider the central cache placement of Definition~\ref{def:central}. Let us denote the file requested by user $k$  by $W_{d_k}$ for $k\in[2]$. Recall that user $1$ needs $W_{d_1}^{(n)}$, which is partitioned into ${\left(W_{d_1,\emptyset}^{(n)}, W_{d_1,\{1\}}^{(n)}, W_{d_1,\{2\}}^{(n)}, W_{d_1,\{1,2\}}^{(n)}\right)}$. Note that since $\mu\leq \frac{1}{2}$, from ~\eqref{eq:cent-intersection} we can conclude that 
     $|\ccent{\{1,2\}}|=0$, and hence ${W_{d_1,\{1,2\}}^{(n)}=\emptyset}$.   While user $1$ has $W_{d_1,\{1\}}^{(n)}$ in its cache, the subfiles 
    $W_{d_1,\emptyset}^{(n)}$ and $W_{d_1,\{2\}}^{(n)}$, need to be delivered. Similarly, the  user $2$ will be served be by sending $W_{d_2,\emptyset}^{(n)}$, $W_{d_2,\{1\}}^{(n)}$. Instead of sending these message separately,  we send \emph{individual} messages  $W_{d_1,\emptyset}^{(n)}$ and $W_{d_2,\emptyset}^{(n)}$, as well as the \emph{common} message $W_{d_1,\{2\}}^{(n)} \oplus W_{d_2,\{1\}}^{(n)}$. 
    We aim to send each individual message to the intended user and the common message to both receivers. To this end, we need to allocate the \emph{levels} and \emph{time} among the messages. 
     We first define $g(\ell):= \overline{F}_{L_2}(\ell)/\overline{F}_{L_1}(\ell)$ for each level $\ell\in [B]$, and sort all the $B$ levels of the channel in an non-decreasing order according to $g(\cdot)$. This leads to a one-to-one mapping $\lambda :[B]\rightarrow [B]$ that sorts the level, and thus, ${g(\lambda(1)) \leq g(\lambda(2)) \leq \cdots \leq g(\lambda(B-1)) \leq g(\lambda(B))}$. For notational simplicity, we rename the levels and define $\el_i:=\lambda(i)$ and $\gamma_{i}: = g(\el_i)$,  for every $i \in [B]$. We clearly have ${\gamma_1 \leq \gamma_2\leq \cdots \leq \gamma_B}$. We refer to Figure~\ref{fig:u_v_star} for clarification. Our proposed level (and time) allocation scheme is parameterized by $(u,v; \alpha, \beta)$, where $u,v\in [B]$ with $u\leq v$, and $0<\alpha, \beta\leq 1$: We use levels $\{\ell_1,\ell_2\dots, \ell_{u-1}\}$ for the entire communication block and level $\ell_u$ for an $\alpha$ fraction of time to send the individual message $W_{d_1,\emptyset}^{(n)}$. Similarly, the individual message $W_{d_2,\emptyset}^{(n)}$ will be sent on levels $\{\ell_{v+1}, \dots, \ell_B\}$ for the entire communication block and on level $\ell_v$ for $\beta$ fraction of time. The remaining levels (including the remaining $(1-\alpha)$ fraction of $\ell_u$ and $(1-\beta)$ fraction of $\ell_v$) will be used to send the common message. Such a delivery strategy can support any source rate $f$ that satisfies 
\begin{align}\label{eq:cent-2:ind-com}
    &\sum_{i<u} \ov{F}_{L_1}(\el_i) +\alpha \ov{F}_{L_1}(\el_u) \geq \frac{1}{n}\left|W_{d_1,\emptyset}^{(n)}\right| = (1-2\mu) f,\nonumber\\
    &\sum_{i>v} \ov{F}_{L_2}(\el_i) +\beta \ov{F}_{L_2}(\el_v) \geq \frac{1}{n}\left|W_{d_2,\emptyset}^{(n)}\right| = (1-2\mu) f,\nonumber\\
    &
    (1-\alpha) \ov{F}_{L_1}(\el_u) + \sum_{u<i<v}
    \ov{F}_{L_1}(\el_i) +(1-\beta) \ov{F}_{L_1}(\el_v)  \geq \frac{1}{n}\left|W_{d_1,\{2\}}^{(n)} \oplus W_{d_2,\{1\}}^{(n)}\right| = \mu f,\\
    &
    (1-\alpha) \ov{F}_{L_2}(\el_u) + \sum_{u<i<v}
    \ov{F}_{L_2}(\el_i) +(1-\beta) \ov{F}_{L_2}(\el_v)  \geq \frac{1}{n}\left|W_{d_1,\{2\}}^{(n)} \oplus W_{d_2,\{1\}}^{(n)}\right| = \mu f\nonumber.
\end{align}
It is easy to verify that constraints in~\eqref{eq:cent-2:ind-com} are feasible if and only if the constraints 
\begin{align}\label{eq:cent-2:ind-sum}
\begin{split}
    &\sum_{i<u} \ov{F}_{L_1}(\el_i) +\alpha \ov{F}_{L_1}(\el_u) \geq  (1-2\mu) f,\\
    &
    (1-\alpha) \ov{F}_{L_2}(\el_u) + \sum_{i>u }
    \ov{F}_{L_2}(\el_i) \geq (1-\mu) f\\
    &\sum_{i>v} \ov{F}_{L_2}(\el_i) +\beta \ov{F}_{L_2}(\el_v) \geq (1-2\mu) f,\\
    &
    \sum_{i<v}
    \ov{F}_{L_1}(\el_i) +(1-\beta) \ov{F}_{L_1}(\el_v)  \geq (1-\mu) f,
    \end{split}
\end{align}
are satisfied. 
Note that we can optimize the allocation parameters $(u,v; \alpha, \beta)$. Moreover, the first and the second constraints in~\eqref{eq:cent-2:ind-sum} only depend on $(u,\alpha)$, and the third and the fourth constraints only depend on $(v,\beta)$. These motivate defining 
\begin{align}
    & \! f_1(u,\alpha) \!: =\!  \min \left(\!\frac{1}{1\!-\!2\mu}\left[\sum_{i<u} \ov{F}_{L_1}(\el_i) \!+\!\alpha \ov{F}_{L_1}(\el_u) \right],\frac{1}{1\!-\!\mu}\!\left[\sum_{i>u} \ov{F}_{L_2}(\el_i) \!+\!(1\!-\!\alpha) \ov{F}_{L_2}(\el_u) \right]\right)\!, \label{eq:f_1_u_a}\\
    & \! f_2(v,\beta) \!: =\! \min \left(\!\frac{1}{1-\mu}\left[\sum_{i<v} \ov{F}_{L_1}(\el_i) + (1-\beta) \ov{F}_{L_1}(\el_v) \right],\frac{1}{1-2\mu}\left[\sum_{i>v} \ov{F}_{L_2}(\el_i) \!+\! \beta \ov{F}_{L_2}(\el_v) \right] \right)\!.\label{eq:f_2_v_b}
\end{align}
Our goal would be to maximize $\min(f_1(u,\alpha),f_2(v,\beta))$. The following lemma formally presents the properties of the optimum solution of $f_1(u,\alpha)$ and $f_2(v,\beta)$. We show that the maximum of $\min(f_1(u,\alpha),f_2(v,\beta))$ over the choice of $(u,v;\alpha,\beta)$ meets the upper bound of source rate in~\eqref{eq:f-2User}. This completes the achievability proof for the regime $0\leq\mu\leq \frac{1}{2}$

\begin{lm}\label{lm:2user_f_ach}
Consider a $2$-DTVBC with a distinct request profile $\bd$, a normalized cache size $\mu\leq \frac{1}{2}$, and the central caching strategy $\cent$. Let $\fs_1 := \max_{u,\alpha} f_1(u,\alpha)$ and $\fs_2 := \max_{v,\beta}f_2(v,\beta)$, where $f_1(u,\alpha)$ and $f_2(v,\beta)$ are defined in~\eqref{eq:f_1_u_a} and~\eqref{eq:f_2_v_b}, respectively. Then, the following properties hold:
\begin{enumerate}
    \item[(i)] The source rate $ \min(f^\s_1,f^\s_2)$ is achievable;
    \item[(ii)] If ${(u^\s,\alpha^\s):=\arg\max f_1(u,\alpha)}$ be the maximizer of $f_1$ and   ${(v^\s,\beta^\s):=\arg\max f_2(v,\beta)}$ be the maximizer of $f_2$,  then we have $u^\star \leq v^\star$;
    \item[(iii)] If $f^\s_1\leq f^\s_2$  then $g(\ell_{u^\s}) \leq 1$. Alternatively, if $f^\s_1\geq f^\s_2$, then we have $g(\ell_{v^\s}) \geq 1$; 
    \item[(iv)] For $\fs$ defined in~\eqref{eq:f-2User}, we have  $\fs \leq \min(f^\s_1,f^\s_2)$. 
\end{enumerate}
\end{lm}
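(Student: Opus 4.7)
The plan is to prove the four parts in sequence, with (ii) feeding (i), and (iii) feeding (iv).

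For (i) and (ii), the achievability of $\min(f^\s_1, f^\s_2)$ in (i) is essentially a packaging of the feasibility discussion preceding the lemma: with parameters $(u^\s,\alpha^\s,v^\s,\beta^\s)$ and common rate $f=\min(f^\s_1,f^\s_2)$, the first two constraints of~\eqref{eq:cent-2:ind-sum} hold by the definition of $f^\s_1$ and the last two by that of $f^\s_2$. The only remaining concern is level-consistency $u^\s\le v^\s$, which is exactly (ii). For (ii), I introduce the balance functions
\begin{align*}
\Phi(j,\alpha) &:= (1-\mu)\bigl[h_1(j)+\alpha \ov{F}_{L_1}(\el_j)\bigr]-(1-2\mu)\bigl[h_2(j)+(1-\alpha)\ov{F}_{L_2}(\el_j)\bigr],\\
\Psi(j,\beta) &:= (1-2\mu)\bigl[h_1(j)+(1-\beta)\ov{F}_{L_1}(\el_j)\bigr]-(1-\mu)\bigl[h_2(j)+\beta\ov{F}_{L_2}(\el_j)\bigr],
\end{align*}
where $h_1(j):=\sum_{i<j}\ov{F}_{L_1}(\el_i)$ and $h_2(j):=\sum_{i>j}\ov{F}_{L_2}(\el_i)$. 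Equalising the two arguments of each $\min$ at the respective optima gives $\Phi(u^\s,\alpha^\s)=0$ and $\Psi(v^\s,\beta^\s)=0$, and a short substitution yields
\[
\Phi(v^\s,1-\beta^\s)=\bigl[h_2(v^\s)+\beta^\s\ov{F}_{L_2}(\el_{v^\s})\bigr]\cdot\frac{\mu(2-3\mu)}{1-2\mu}\ge 0
\]
whenever $\mu\le 1/2$. Since $\Phi(j,\alpha)$ is non-decreasing separately in $j$ and in $\alpha$, with continuity across the boundary $\Phi(j,1)=\Phi(j+1,0)$, the inequality $\Phi(v^\s,1-\beta^\s)\ge\Phi(u^\s,\alpha^\s)=0$ forces $(u^\s,\alpha^\s)\le(v^\s,1-\beta^\s)$ in the induced order, giving $u^\s\le v^\s$.

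For (iii), I will consider the monotone path from $(u^\s,\alpha^\s)$ to $(v^\s,1-\beta^\s)$, and let $\Delta A$ and $-\Delta B$ denote the total changes of the user-$1$ sum and minus the user-$2$ sum along this path. From the balance equations one obtains
\[
\Delta A=(1-\mu)f^\s_2-(1-2\mu)f^\s_1,\qquad -\Delta B=(1-\mu)f^\s_1-(1-2\mu)f^\s_2,
\]
so a one-line computation yields $\Delta A/(-\Delta B)\ge 1 \iff f^\s_2\ge f^\s_1$. On the other hand, both $\Delta A$ and $-\Delta B$ are the \emph{same} weighted partial sums of $\ov{F}_{L_1}(\el_i)$ and $\ov{F}_{L_2}(\el_i)$ over the marginal indices $i\in\{u^\s,\ldots,v^\s\}$, so $\Delta A/(-\Delta B)$ is a convex combination of $1/g(\el_i)$ over this range. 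Since $g(\el_i)$ is non-decreasing in $i$, the ratio lies in $[1/g(\el_{v^\s}),1/g(\el_{u^\s})]$. Hence $f^\s_1\le f^\s_2$ forces $1\le 1/g(\el_{u^\s})$, i.e., $g(\el_{u^\s})\le 1$; symmetrically, $f^\s_1\ge f^\s_2$ forces $g(\el_{v^\s})\ge 1$.

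For (iv), assume without loss of generality $f^\s_1\le f^\s_2$, so by (iii) $\w:=\gamma_{u^\s}\in(0,1]$ is admissible for the second minimisation in~\eqref{eq:f-2User}. At this $\w$, $\cL_1(\w)\supseteq\{\el_1,\ldots,\el_{u^\s}\}$ with $\el_{u^\s}$ on the boundary $\w\ov{F}_{L_1}(\el_{u^\s})=\ov{F}_{L_2}(\el_{u^\s})$, which makes $\w R_1(\w)+R_2(\w)$ indifferent to whether $\el_{u^\s}$ is assigned to $\cL_1$ or to $\cL_2$, so that
\[
\w R_1(\w)+R_2(\w)=\gamma_{u^\s}\bigl[h_1(u^\s)+\ov{F}_{L_1}(\el_{u^\s})\bigr]+h_2(u^\s).
\]
Substituting the $f^\s_1$ balance equations for $h_1(u^\s)$ and $h_2(u^\s)$ and invoking $\gamma_{u^\s}\ov{F}_{L_1}(\el_{u^\s})=\ov{F}_{L_2}(\el_{u^\s})$ produces the cancellation
\[
\w R_1(\w)+R_2(\w)=f^\s_1\bigl[\w(1-2\mu)+(1-\mu)\bigr],
\]
so the ratio in the second minimisation equals $f^\s_1$, yielding $\fs\le f^\s_1=\min(f^\s_1,f^\s_2)$. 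The case $f^\s_1\ge f^\s_2$ is symmetric, using $\w=\gamma_{v^\s}\ge 1$ in the first minimisation. The main obstacle is identifying the right $\w$ and the correct minimisation branch starting from the achievable-scheme optimum; once these are in place and the boundary equality is invoked, the final cancellation is a short algebraic step.
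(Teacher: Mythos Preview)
Your proof is correct. Parts (i) and (iv) essentially match the paper: (i) packages the feasibility discussion preceding the lemma together with $u^\s\le v^\s$, and (iv) plugs $\w=g(\el_{u^\s})$ (resp.\ $g(\el_{v^\s})$) into the appropriate branch of~\eqref{eq:f-2User} and simplifies via the balance equations, exactly as the paper does in~\eqref{eq:pr:lm:2:4}--\eqref{eq:pr:lm:2:4_2}.

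Parts (ii) and (iii), however, take a genuinely different route. For (ii), the paper splits into the two cases $f_1^\s\lessgtr f_2^\s$ and builds separate chains of inequalities from the balance equations (see~\eqref{eq:u-v:1}--\eqref{eq:u-v:2}); you instead introduce a single balance function $\Phi$, show $\Phi(v^\s,1-\beta^\s)\ge 0=\Phi(u^\s,\alpha^\s)$ via the identity $(1-\mu)^2-(1-2\mu)^2=\mu(2-3\mu)$, and invoke monotonicity of $\Phi$ along the one-dimensional path. This avoids the case split and incidentally gives the extra feasibility information $\alpha^\s+\beta^\s\le 1$ when $u^\s=v^\s$. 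One small point worth stating explicitly: to deduce the ordering from $\Phi(v^\s,1-\beta^\s)\ge\Phi(u^\s,\alpha^\s)$ you need \emph{strict} monotonicity of $\Phi$, which holds since $\partial_\alpha\Phi=(1-\mu)\ov{F}_{L_1}(\el_j)+(1-2\mu)\ov{F}_{L_2}(\el_j)>0$ on every nontrivial level. For (iii), the paper argues by contradiction through the long telescoping computation~\eqref{eq:f_1_inq2}--\eqref{eq:pr:lm:2:3}; your argument is more direct: writing $\Delta A$ and $-\Delta B$ as identically-weighted sums of $\ov{F}_{L_1}(\el_i)$ and $\ov{F}_{L_2}(\el_i)$ over $i\in\{u^\s,\ldots,v^\s\}$ makes $\Delta A/(-\Delta B)$ a weighted mean of $1/g(\el_i)$, hence sandwiched between $1/g(\el_{v^\s})$ and $1/g(\el_{u^\s})$, and the identity $\Delta A/(-\Delta B)\ge 1\iff f_2^\s\ge f_1^\s$ (since $2-3\mu>0$) yields both conclusions at once.
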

The proof of Lemma~\ref{lm:2user_f_ach} is presented in the appendix.
It is worth noting that parts (i) and (iv) of the lemma above immediately yield the achievability proof of Theorem~\ref{thm:2-User} for $0\leq \mu \leq \frac{1}{2}$.

\begin{figure}
    \centering
    \includegraphics[width=0.5\linewidth]{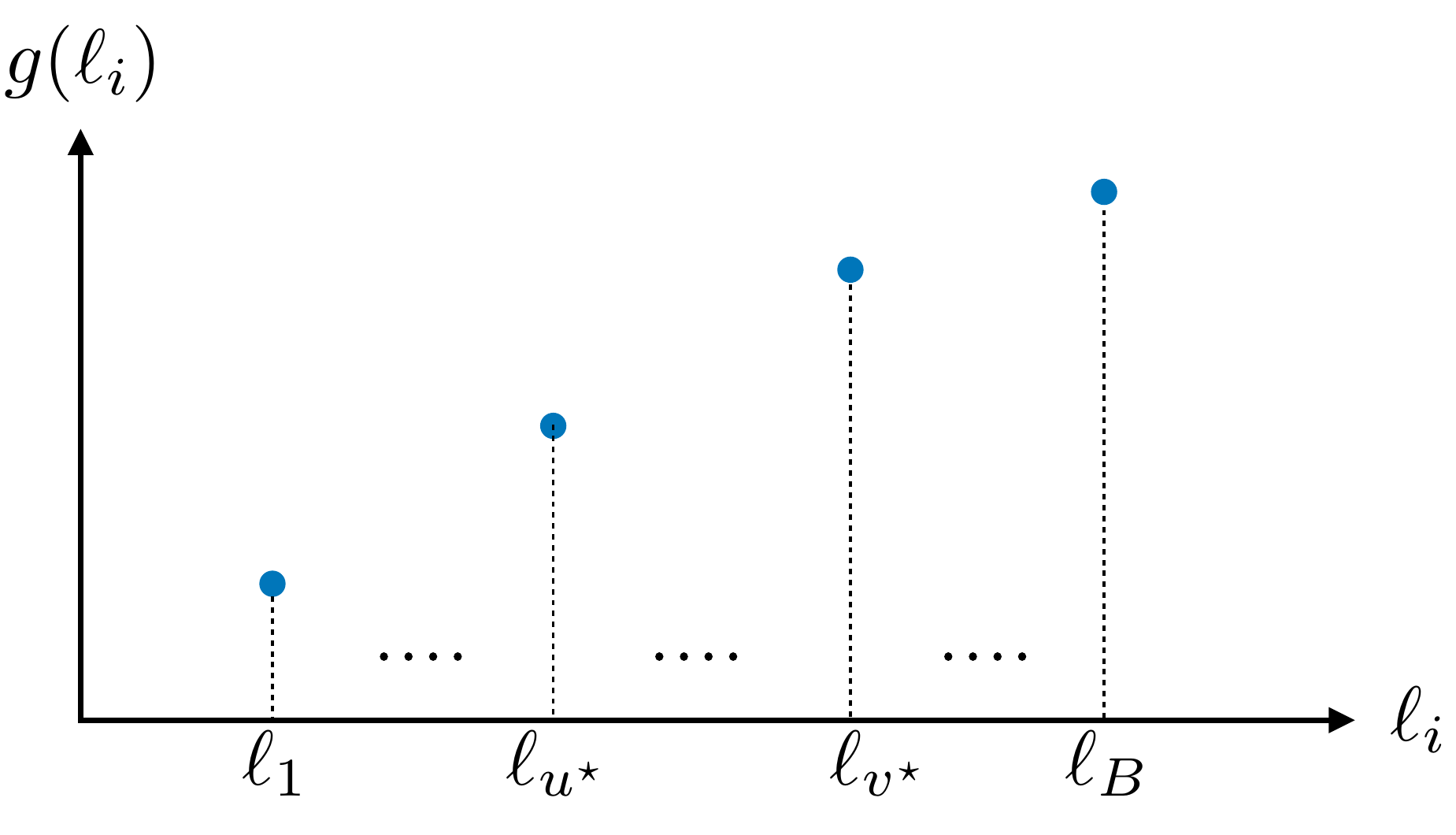}
    \caption{Sorting the signal levels of a 2-user system according to their ratio. }
    \label{fig:u_v_star}
\end{figure} 

  \vspace{5mm}
    \noindent\underline{$\frac{1}{2}\leq \mu \leq 1:$}\\[2mm]
    We need to show that the minimum attained in~\eqref{eq:f-2User-2} is achievable. Let  $\bd =(d_1,d_2)$ be the demand profile. The file $W_{d_1}^{(n)}$ is partitioned into ${\left(W_{d_1,\emptyset}^{(n)}, W_{d_1,\{1\}}^{(n)}, W_{d_1,\{2\}}^{(n)}, W_{d_1,\{1,2\}}^{(n)}\right)}$. Note that since $\mu\geq \frac{1}{2}$, from ~\eqref{eq:cent-intersection} we can conclude that 
     $|\ccent{\emptyset}|=0$, and hence ${W_{d_1,\emptyset}^{(n)}=\emptyset}$. This means user $1$ has $W_{d_1,\{1\}}^{(n)}$ and  $W_{d_1,\{1,2\}}^{(n)}$ in its cache, and only $W_{d_1,\{2\}}^{(n)}$ need to be delivered over the channel. Similarly, user $2$ will be served by $W_{d_2,\{1\}}^{(n)}$. The transmitter only needs to multicast a common message $W_{d_1,\{2\}} \oplus W_{d_2,\{1\}}$ to both users over the signal levels in $[B]$. The size of this common message is $\left(1-\mu\right)\fs$. Hence, the maximum achievable source rate is given the minimum of the capacities of the channels to two users. For user~$i$, the maximum rate is 
        \begin{align*}
            \fs = \frac{\sum_{\ell=1}^{B}\ov{F}_{L_i}(\ell)}{1-\mu}.
        \end{align*}
        Taking the minimum over $i\in\{1,2\}$, we get the rate in~\eqref{eq:f-2User-2}.   
	This completes the proof of the theorem.
	\hfill $\square$ 
	
\section{Achievability Proof of Theorem~\ref{thm:degraded}}\label{sec:proof-degraded-ach}

The achievability proof of Theorem~\ref{thm:degraded} is based on the achievable scheme in Section~\ref{sec:LP} and Proposition~\ref{prop:LP}.  
We consider a specific normalized cache sizes $\mu$ for which  ${t= K\mu =KM/N \in \mathbb{N}}$. Therefore, from~\eqref{eq:cent-union}, we have
\begin{align}\label{eq:m_k_t}
    \tcents_{[k]} = \frac{\binom{K-k}{t}}{\binom{K}{t}},
\end{align}
for every $k\in [K]$.
Now, we show that every set $\{z_{\ell,k}\}$ that satisfies~\eqref{eq:const_thm2} provides a feasible set of $\{y_{\ell,\cS}\}$ for the achievable scheme of Proposition~\ref{prop:LP}. 
Recall that we assume the channel is degraded (i.e., $L_K\!\geq_{\mathsf{st}}\!\cdots \geq_{\mathsf{st}} L_1$). Let $\kw := \min_{k\in \cS} k$ be the index of the \emph{weakest} user in the set $\cS$. For every $\cS\subseteq [K]$ with $|\cS|=t+1$, we set
\begin{align}\label{eq:y-in-z}
     y_{\ell,\cS} = \frac{1}{\binom{K-\kw}{t}}z_{\ell,\kw}.
\end{align}
First, note that $y_{\ell,\cS}\geq 0$ for every $\ell \!\in\! [B]$.

Next, we have
\begin{align}\label{eq:const_2}
  \sum_{\substack{\cS\subseteq [K]\\|\cS|= t+1 }}  y_{\ell,\cS}  & \stackrel{\rm{(a)}}{=} \sum_{\substack{\cS\subseteq [K]\\|\cS|= t+1 }} \frac{z_{\ell,\kw}}{\binom{K-\kw}{t}}\nonumber\\
  & = \sum_{k=1}^{K} \sum_{\substack{\cS\subseteq [K]\\|\cS|= t+1 \\ \kw=k}} \frac{z_{\ell,k}}{\binom{K-k}{t}}\nonumber\\
    & = \sum_{k=1}^{K} \frac{z_{\ell,k}}{\binom{K-k}{t}}\sum_{\substack{\cS\subseteq [K]\\|\cS|= t+1 \\ \kw=k}} 1\nonumber\\
    & \stackrel{\rm{(b)}}{=} \sum_{k=1}^{K} \frac{z_{\ell,k}}{\binom{K-k}{t}} \binom{K-k}{t} \nonumber\\
    & = \sum_{k=1}^{K} z_{\ell,k} 
     \stackrel{\rm{(c)}}{\leq} 1,
\end{align}
where $\rm{(a)}$ follows from~\eqref{eq:y-in-z}, in $\rm{(b)}$ we used the fact that any 
$\cS$ satisfying $\cS\subseteq [K]$, $|\cS|= t+1$ and $\kw=k$  should be of the form of $\cS = \{k\} \cup \cT$, where $\cT\subseteq \{k+1, k+2, \dots, K\}$, with $|\cT|=t$. Hence, the number of such $\cS$'s is $\binom{K-k}{t}$. Lastly, $\rm{(c)}$ follows from the last constraint in~\eqref{eq:const_thm2}. Thus, ~\eqref{eq:const_2} shows that the $\{y_{\ell, \cS}\}$ introduced above satisfies~\eqref{eq:LP-level}.

Furthermore, the degradedness of the channel implies that 
\begin{align}\label{eq:const_1}
    \sum_{\ell=1}^\nB \overline{F}_{L_k}(\ell) y_{\ell,\cS} & \stackrel{\rm{(a)}}{\geq} \sum_{\ell=1}^\nB \overline{F}_{L_{\kw}}(\ell) y_{\ell,\cS} \nonumber\\
    & = \frac{1}{\binom{K-\kw}{t}}\sum_{\ell=1}^\nB \overline{F}_{L_{\kw}}(\ell) z_{\ell,\kw}\nonumber\\
    &\stackrel{\rm{(b)}}{\geq} \frac{1}{\binom{K-\kw}{t}}  \lp 1-\tcents_{[\kw]}\rp \bar{f} \nonumber\\
    & \stackrel{\rm{(c)}}{\geq}  \frac{1}{\binom{K-\kw}{t}} \frac{\binom{K-\kw}{t}}{\binom{K}{t}} \bar{f}\nonumber\\
    & = \frac{\bar{f}}{\binom{K}{t}},
\end{align}
for every $k\in \cS$. 
Here, $\rm{(a)}$ holds since for every user ${k\in \cS}$ and every $\ell \in [B]$, we have $\overline{F}_{L_k}(\ell) \geq \overline{F}_{L_{\kw}}(\ell)$,
$\rm{(b)}$ follows from the first constraint in~\eqref{eq:const_thm2},  and we used~\eqref{eq:m_k_t} with ${k=\kw}$ in $\rm{(c)}$. This shows that the $\{y_{\ell,\cS}\}$ sequence introduced in~\eqref{eq:y-in-z} satisfies~\eqref{eq:LP-decode}. %

We proved that the $\{y_{\ell,\cS}\}$ introduced in~\eqref{eq:y-in-z} satisfies all constraints of the LP problem in Proposition~\ref{prop:LP}. In other words, every $\{z_{\ell,k}\}$ satisfying~\eqref{eq:const_thm2} provides a feasible solution $\{y_{\ell,\cS}\}$ for the (achievable) optimization method in~\eqref{eq:LP}. This, together with Proposition~\ref{prop:LP} (that any feasible solution of $\{y_{\ell,\cS}\}$ leads to an achievable source rate) completes the achievability proof of Theorem~\ref{thm:degraded}. 
\hfill $\square$ 

\section{Proof of Theorem~\ref{thm:Kuser-deg-BC} and Proposition~\ref{pro:LP-fbnd}}\label{sec:proof_main_1}
In this section, we provide the proof of Theorem~\ref{thm:Kuser-deg-BC} and Proposition~\ref{pro:LP-fbnd}. First, we present some auxiliary lemmas whose proofs are provided in the appendix.
\subsection{Preliminary results}
The proof of Theorem~\ref{thm:Kuser-deg-BC} is built based on the result of~\cite{yates2011k}, in which the rate region of an erasure $K$-user broadcast channel is characterized. We first need to enhance the channels to convert the network to a degraded broadcast channel. To this end, we will replace $L_k$, the channel of User~$k$, by a \emph{stronger} channel $\widetilde{L}_k$, so that the channel of user $k$ statistically degrades that of user $k-1$, for $k=2,3,\dots, K$. More precisely, for a given weight vector ${\bfw = (\omega_1,\ldots,\omega_K) \in [0,\infty)^K}$ with \textit{sorted} entries ${\w_1\geq \w_2\geq \cdots \geq \w_K}$, we define 
\begin{equation}\label{eq:chnl_ench}
	  \overline{F}_{\widetilde{L}_{k}}(\ell):=\min\left[1,\max\left(\overline{F}_{L_{k}}(\ell),\frac{\omega_{{k-1}}}{\omega_{k}}\overline{F}_{\widetilde{L}_{{k-1}}}(\ell)\right)\right],
\end{equation}
for every $\ell\in [B]$ and $k\in \{2,3,\dots, K\}$.
with an initialization given by $\overline{F}_{\widetilde{L}_{1}}(\ell) = \overline{F}_{{L}_{1}}(\ell)$,
for every $\ell \in [B]$.

The following lemma demonstrates some of the properties of the enhanced channel, which will be useful in the proof of Theorem~\ref{thm:Kuser-deg-BC}. 
\begin{lm}\label{lm:chnl_ench} 
The CCDF of $\wdt{L}_{k}$ providing in~\eqref{eq:chnl_ench} has the following properties
\begin{enumerate}
    \item[(i)] If $\overline{F}_{\widetilde{L}_{k}}(\ell) =1$, then
      \begin{align*}
          \overline{F}_{\widetilde{L}_{u}}(\ell) =1,
      \end{align*}
      for every $ u\geq k$. 
      \item[(ii)] If $\omega_{k} \overline{F}_{\widetilde{L}_{k}}(\ell)> \omega_{k-1}\overline{F}_{\widetilde{L}_{k-1}}(\ell)$, then
      \begin{align*}
         \w_k \overline{F}_{L_{k}}(\ell) \!=\! \w_k \overline{F}_{\wdt{L}_{k}}\!(\ell)  \!>\! \w_{k-1}\ov{F}_{\wdt{L}_{k-1}}\!(\ell) \!\geq\! \cdots \!\geq\! \w_{1}\ov{F}_{\wdt{L}_{1}}\!(\ell).
    \end{align*}
	   \item[(iii)] If $\omega_{k} \overline{F}_{\widetilde{L}_{k}}(\ell) < \omega_{k-1} \overline{F}_{\widetilde{L}_{k-1}}(\ell)$, then
	   \begin{align*}
	       \w_K \ov{F}_{\wdt{L}_K}(\ell) \leq \cdots \leq \w_k \ov{F}_{\wdt{L}_k}(\ell) < \w_{k-1} \ov{F}_{\wdt{L}_{k-1}}(\ell).
	   \end{align*}
	   \item[(iv)] The maximum of the weighted channel parameters satisfy
	     \begin{align*}
	      \max_k \w_k \ov{F}_{\wdt{L}_k}(\ell) = \max_k \omega_k \overline{F}_{{L}_{k}}(\ell),
	     \end{align*}
	     for every $\ell \in [B]$.
	     \end{enumerate}
\end{lm}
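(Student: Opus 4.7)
The plan is to exploit the recursive structure of the enhancement in~\eqref{eq:chnl_ench} by tracking, for each level $\ell\in[\nB]$, whether the outer $\min$ with $1$ is binding and which term realizes the inner $\max$. Once these bindings are understood, each of the four claims corresponds to a specific pattern: (i) says that an outer clip propagates to all later indices, (iii) identifies the regime where the outer clip must be active at index $k$, (ii) identifies the complementary regime in which the inner $\max$ is realized by $\ov{F}_{L_k}(\ell)$ and there is no clipping, and (iv) is then derived from (ii) at the smallest index that achieves the maximum.

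I would dispatch (i) and (iii) first since they are direct. For (i), substituting $\ov{F}_{\wdt{L}_k}(\ell)=1$ into the recursion for index $k+1$ gives $\tfrac{\w_k}{\w_{k+1}}\ov{F}_{\wdt{L}_k}(\ell)=\tfrac{\w_k}{\w_{k+1}}\geq 1$ because $\w_k\geq \w_{k+1}$, so the outer $\min$ returns $1$; a trivial induction on $u\geq k$ then yields the claim. For (iii), write $\w_k\ov{F}_{\wdt{L}_k}(\ell)=\min\bigl[\w_k,\max(\w_k\ov{F}_{L_k}(\ell),\w_{k-1}\ov{F}_{\wdt{L}_{k-1}}(\ell))\bigr]$; if the outer $\min$ did not bind, the RHS would be at least $\w_{k-1}\ov{F}_{\wdt{L}_{k-1}}(\ell)$, contradicting the hypothesis. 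Hence the outer clip is active, giving $\ov{F}_{\wdt{L}_k}(\ell)=1$, and part~(i) extends this to every $u\geq k$; the required ordering then follows from $\w_K\leq\cdots\leq\w_k$.

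The bulk of the work is in (ii). Under the premise, the same display as above forces the inner $\max$ to be realized by $\w_k\ov{F}_{L_k}(\ell)$ (otherwise $\w_k\ov{F}_{\wdt{L}_k}(\ell)$ would be $\leq\w_{k-1}\ov{F}_{\wdt{L}_{k-1}}(\ell)$), and the strict inequality rules out an outer clip, yielding $\ov{F}_{\wdt{L}_k}(\ell)=\ov{F}_{L_k}(\ell)$. Next I would show $\ov{F}_{\wdt{L}_j}(\ell)<1$ for every $j\leq k-1$: if not, part~(i) gives $\ov{F}_{\wdt{L}_{k-1}}(\ell)=1$, and then $\w_{k-1}\ov{F}_{\wdt{L}_{k-1}}(\ell)=\w_{k-1}\geq \w_k\geq\w_k\ov{F}_{\wdt{L}_k}(\ell)$ contradicts the premise. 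Since the outer $\min$ is thus inactive at every $j\leq k-1$, the recursion simplifies to $\ov{F}_{\wdt{L}_j}(\ell)\geq\tfrac{\w_{j-1}}{\w_j}\ov{F}_{\wdt{L}_{j-1}}(\ell)$, equivalently $\w_j\ov{F}_{\wdt{L}_j}(\ell)\geq\w_{j-1}\ov{F}_{\wdt{L}_{j-1}}(\ell)$, and chaining these inequalities downwards produces the monotone tail in~(ii). I expect this propagation argument to be the main obstacle, since it requires the contrapositive via (i) together with $\w_1\geq\cdots\geq\w_K$ to certify that no outer clip sneaks in before index $k$.

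Finally, (iv) falls out of (ii). The inequality $\max_k\w_k\ov{F}_{\wdt{L}_k}(\ell)\geq\max_k\w_k\ov{F}_{L_k}(\ell)$ is immediate from $\ov{F}_{\wdt{L}_k}(\ell)\geq\ov{F}_{L_k}(\ell)$, which is visible from the recursion. For the reverse, let $k^\star$ be the smallest index attaining $\max_k\w_k\ov{F}_{\wdt{L}_k}(\ell)$. If $k^\star=1$, then $\ov{F}_{\wdt{L}_1}(\ell)=\ov{F}_{L_1}(\ell)$ closes the argument. Otherwise minimality gives $\w_{k^\star-1}\ov{F}_{\wdt{L}_{k^\star-1}}(\ell)<\w_{k^\star}\ov{F}_{\wdt{L}_{k^\star}}(\ell)$, and invoking~(ii) at index $k^\star$ yields $\w_{k^\star}\ov{F}_{\wdt{L}_{k^\star}}(\ell)=\w_{k^\star}\ov{F}_{L_{k^\star}}(\ell)\leq\max_k\w_k\ov{F}_{L_k}(\ell)$. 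Combining the two directions establishes~(iv) and completes the proof.
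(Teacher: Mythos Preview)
Your proposal is correct and follows essentially the same approach as the paper: both proofs dispatch (i) by one-step induction using $\w_k/\w_{k+1}\geq 1$, establish (iii) by showing the outer clip must bind and then invoking (i), prove (ii) by first identifying $\ov{F}_{\wdt{L}_k}(\ell)=\ov{F}_{L_k}(\ell)$ and then ruling out any earlier clip via the contrapositive of (i) so that $\w_j\ov{F}_{\wdt{L}_j}(\ell)\geq\w_{j-1}\ov{F}_{\wdt{L}_{j-1}}(\ell)$ for $j<k$, and finally deduce (iv) from (ii) at a distinguished index. The only cosmetic differences are that you treat (iii) before (ii) and, for (iv), you pick the smallest maximizer $k^\star$ directly (handling $k^\star=1$ explicitly), whereas the paper takes $s^\star=\max\{j:\w_j\ov{F}_{\wdt{L}_j}(\ell)>\w_{j-1}\ov{F}_{\wdt{L}_{j-1}}(\ell)\}$; these two indices coincide, and your formulation is arguably cleaner since it avoids the edge case of that set being empty.
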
  
The following corollary is based on the properties presented in Lemma~\ref{lm:chnl_ench} and provides a better understanding of the enhancement procedure. Note that for a given $\el \in [B]$, the quantity $\w_k \ov{F}_{\wdt{L}_k}(\ell)$ may be equal for $k\neq k'$. Hence, $\arg \max_k \w_k \ov{F}_{\wdt{L}_k}(\ell)$ is a set, with possibly many elements. 
\begin{corollary}\label{cor:enh_chnl}
  Let  ${k^\s := \min \arg \max_k \w_k \ov{F}_{\wdt{L}_k}(\ell)}$ and ${u^\s := \max \arg \max_k \w_k \ov{F}_{\wdt{L}_k}(\ell)}$ for any fixed level $\ell \in[B]$. Then, we can decompose the set of users $[K]$ into the following non-overlapping subsets
  \begin{align*}
      [K] = \{1,\ldots,k^\s\!-\!1\} \cup \{k^\s,\ldots,u^\s\} \cup \{u^\s\!+\!1,\ldots, K\}.
  \end{align*}
  Then, we arrive at the below properties 
  \begin{enumerate}
      \item[(i)] The sequence $\{\w_{k} \overline{F}_{L_{k}}(\ell)\}_{k=1}^{k^\s\!-\!1}$ is non-decreasing, i.e.,
      \begin{align*}
      \w_{1}\ov{F}_{\wdt{L}_{1}}(\ell) \leq \cdots \leq 
           \w_{k^\s-1}\ov{F}_{\wdt{L}_{k^\s-1}}(\ell)  <\w_{k^\s} \overline{F}_{\wdt{L}_{k^\s}}(\ell).
    \end{align*}
    \item[(ii)] The sequence $\{\w_{k} \overline{F}_{L_{k}}(\ell)\}_{k=k^\s}^{u^\s}$ satisfies 
    \begin{align*}
        \w_{k^\s} \overline{F}_{L_{k^\s}}(\ell) = \w_{k^\s} \overline{F}_{\wdt{L}_{k^\s}}(\ell) = \cdots = \w_{u^\s} \overline{F}_{\wdt{L}_{u^\s}}(\ell).
    \end{align*}
    \item[(iii)] The sequence $\{\w_{k} \overline{F}_{L_{k}}(\ell)\}_{k=u^\s\!+1}^{K}$ is non-increasing, i.e.,
    \begin{align*}
    \w_{u^\s} \ov{F}_{\wdt{L}_{u^\s}}(\ell) > \w_{u^\s\!+1} \ov{F}_{\wdt{L}_{u^\s\!+1}}(\ell) \geq \cdots \geq 
        \w_K \ov{F}_{\wdt{L}_K}(\ell).
    \end{align*}
  \end{enumerate}
\end{corollary}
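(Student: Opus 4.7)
The plan is to exploit parts~(ii) and~(iii) of Lemma~\ref{lm:chnl_ench} together with the extremality of $k^\s$ and $u^\s$ inside the argmax set. My strategy is to analyze the sequence $\{\w_k \ov{F}_{\wdt{L}_k}(\ell)\}_{k=1}^{K}$ consecutive-pair by consecutive-pair, noting that Lemma~\ref{lm:chnl_ench}(ii) propagates any \emph{strict increase} backward to all smaller indices, while Lemma~\ref{lm:chnl_ench}(iii) propagates any \emph{strict decrease} forward to all larger indices. Together, these imply a unimodality property: once the sequence strictly decreases at some step, it must stay non-increasing from that point on, and once a value is strictly below some later value, everything earlier must also lie strictly below that value. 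This forces the argmax set to be the contiguous block $\{k^\s, k^\s+1,\dots, u^\s\}$ and shapes the three regimes claimed in the corollary.

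For part~(i), I would argue by contradiction: if $\w_{k+1}\ov{F}_{\wdt{L}_{k+1}}(\ell) < \w_k\ov{F}_{\wdt{L}_k}(\ell)$ for some $k < k^\s$, then Lemma~\ref{lm:chnl_ench}(iii) applied at index~$k+1$ forces the whole tail $\{\w_j\ov{F}_{\wdt{L}_j}(\ell)\}_{j=k+1}^{K}$ to be non-increasing and strictly smaller than $\w_k \ov{F}_{\wdt{L}_k}(\ell)$; in particular $\w_{k^\s}\ov{F}_{\wdt{L}_{k^\s}}(\ell)$ would fall below $\w_k\ov{F}_{\wdt{L}_k}(\ell)$, contradicting $k^\s\in\arg\max_j \w_j\ov{F}_{\wdt{L}_j}(\ell)$. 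The strict inequality at the right end, $\w_{k^\s-1}\ov{F}_{\wdt{L}_{k^\s-1}}(\ell) < \w_{k^\s}\ov{F}_{\wdt{L}_{k^\s}}(\ell)$, is then immediate from the \emph{minimality} of $k^\s$ inside the argmax set.

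For part~(ii), every consecutive pair inside $[k^\s, u^\s]$ must be equal: a strict increase at step~$k$ triggers Lemma~\ref{lm:chnl_ench}(ii) and produces the chain $\w_{k^\s}\ov{F}_{\wdt{L}_{k^\s}}(\ell) \leq \w_{k-1}\ov{F}_{\wdt{L}_{k-1}}(\ell) < \w_k\ov{F}_{\wdt{L}_k}(\ell)$, contradicting that $k^\s$ attains the maximum; a strict decrease triggers Lemma~\ref{lm:chnl_ench}(iii) and yields $\w_{u^\s}\ov{F}_{\wdt{L}_{u^\s}}(\ell) < \w_{k-1}\ov{F}_{\wdt{L}_{k-1}}(\ell)$, again contradicting maximality. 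For part~(iii), the strict inequality $\w_{u^\s+1}\ov{F}_{\wdt{L}_{u^\s+1}}(\ell) < \w_{u^\s}\ov{F}_{\wdt{L}_{u^\s}}(\ell)$ follows from the \emph{maximality} of $u^\s$ in the argmax set, and then a single direct application of Lemma~\ref{lm:chnl_ench}(iii) at index~$u^\s+1$ immediately delivers the required non-increasing tail.

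The proof is essentially bookkeeping once Lemma~\ref{lm:chnl_ench} is in hand; the main obstacle is conceptual rather than computational, namely recognizing that parts~(ii) and~(iii) of that lemma precisely encode the unimodality needed to pin down the argmax as a contiguous block and split $[K]$ into three monotone regimes. No calculation beyond chaining a couple of applications of Lemma~\ref{lm:chnl_ench} is required.
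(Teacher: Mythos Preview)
Your proposal is correct and follows essentially the same approach as the paper: both argue from Lemma~\ref{lm:chnl_ench}(ii)--(iii) together with the extremality of $k^\s$ and $u^\s$ in the argmax set. The only cosmetic difference is that for part~(i) the paper applies Lemma~\ref{lm:chnl_ench}(ii) once at $k=k^\s$ to get the whole non-decreasing chain directly, whereas you argue by contradiction via Lemma~\ref{lm:chnl_ench}(iii); for part~(ii) the paper simply says ``directly from the definitions of $k^\s$ and $u^\s$'', while your contradiction argument (strict increase $\Rightarrow$ Lemma~\ref{lm:chnl_ench}(ii), strict decrease $\Rightarrow$ Lemma~\ref{lm:chnl_ench}(iii)) actually fills in why every intermediate index must also attain the maximum.
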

The proof of Corollary~\ref{cor:enh_chnl} is presented in the appendix.
Using the properties discussed  in Corollary~\ref{cor:enh_chnl}, we can visualize the behavior of the enhanced channels $\ov{F}_{\wdt{L}_k}(\ell)$ and their weighted versions  $\w_k \ov{F}_{\wdt{L}_k}(\ell)$ as in Figure~\ref{fig:enh_ch}.
 \begin{figure}[t]
    \centering
    \includegraphics[width=0.5 \linewidth]{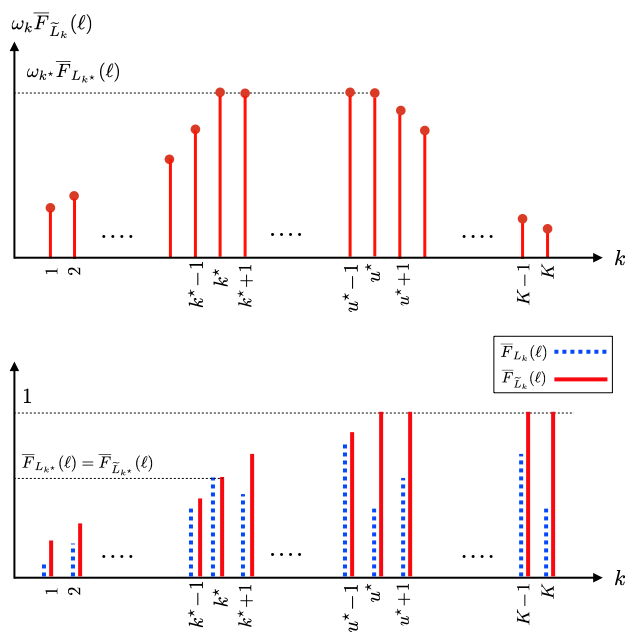}
    \caption{Channel enhancement: The behavior of the  $\w_k \ov{F}_{\wdt{L}_k}(\ell)$ for the enhanced deterministic broadcast channel (top), and comparison of the channel parameters before and after enhancement (bottom).}\label{fig:enh_ch}
\end{figure}

In the next lemma, we provide an important property of the cache placement strategy.
\begin{lm}\label{lm:I-mu}
For a given caching strategy and mutually independent files $W^{(n)}_1,\ldots,W^{(n)}_N$, we get
  \begin{align}\label{eq:I-f}
      I\lp W_i;C^{(n)}_\cS\rp \leq n\mu_{\cS}f,\quad \cS\subseteq[K].
  \end{align}
\end{lm}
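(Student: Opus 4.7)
The plan is to reduce the mutual information on the left-hand side to an entropy of the portion of file $W_i^{(n)}$ cached by the users in $\cS$, and then apply the entropy bound~\eqref{eq:H-CS} that was already established in the problem formulation.

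First I would group the cache contents by file index. Since each user's cache is a union over files of file-specific cached portions, I can rewrite
\[
C^{(n)}_\cS = \bigcup_{u\in\cS} C^{(n)}_u = \bigcup_{u\in\cS}\bigcup_{j\in[N]} C^{(n)}_{u,j} = \bigcup_{j\in[N]} C^{(n)}_{\cS,j},
\]
using the definition $C^{(n)}_{\cS,j} = \bigcup_{u\in\cS} C^{(n)}_{u,j}$. This lets me view $C^{(n)}_\cS$ as the tuple $(C^{(n)}_{\cS,1},\dots, C^{(n)}_{\cS,N})$, in which each component $C^{(n)}_{\cS,j}$ is a deterministic function of the single file $W_j^{(n)}$.

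Next, using the mutual independence of the files $W_1^{(n)},\dots,W_N^{(n)}$, I would isolate the $i$-th component via
\[
I\!\left(W_i;C^{(n)}_\cS\right) = I\!\left(W_i;\{C^{(n)}_{\cS,j}\}_{j\neq i}\right) + I\!\left(W_i;C^{(n)}_{\cS,i}\,\big|\,\{C^{(n)}_{\cS,j}\}_{j\neq i}\right).
\]
The first term vanishes because $\{C^{(n)}_{\cS,j}\}_{j\neq i}$ is a function of $\{W_j^{(n)}\}_{j\neq i}$, which is independent of $W_i$. For the second term, I would use the same independence to drop the conditioning (since $C^{(n)}_{\cS,i}$ and $\{C^{(n)}_{\cS,j}\}_{j\neq i}$ are functions of independent random variables), which yields
\[
I\!\left(W_i;C^{(n)}_\cS\right) = I\!\left(W_i;C^{(n)}_{\cS,i}\right) \leq H\!\left(C^{(n)}_{\cS,i}\right).
\]

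Finally, I would invoke~\eqref{eq:H-CS}, which gives $H(C^{(n)}_{\cS,i}) \leq \mu_\cS n f$, completing the argument. There is no real obstacle here; the only subtlety is being careful that the file-wise decomposition of $C^{(n)}_\cS$ together with file independence collapses the mutual information down to a single file's cached portion, so that the pre-existing bound in~\eqref{eq:H-CS} can be applied directly.
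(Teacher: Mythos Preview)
Your proposal is correct and follows essentially the same route as the paper: decompose $C^{(n)}_\cS$ file-by-file, use the chain rule and file independence to reduce $I(W_i;C^{(n)}_\cS)$ to $I(W_i;C^{(n)}_{\cS,i})\leq H(C^{(n)}_{\cS,i})$, and then invoke~\eqref{eq:H-CS}. The only cosmetic difference is that the paper bounds the first chain-rule term via the data-processing step $I(W_i;C^{(n)}_{\cS,[N]\setminus\{i\}})\leq I(W_i;W_{[N]\setminus\{i\}})=0$ and then drops the conditioning as an inequality, whereas you argue the independence of $(W_i,C^{(n)}_{\cS,i})$ from $\{C^{(n)}_{\cS,j}\}_{j\neq i}$ directly to obtain the equality $I(W_i;C^{(n)}_\cS)=I(W_i;C^{(n)}_{\cS,i})$.
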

Finally, we provide the extension of Lemma~\ref{lm:2user-deg-BC} to the $K$-users system in the following result.
\begin{lm}\label{lm:Kuser-deg-BC}
    If a source rate $f(\cache,\bd)$ for a given caching strategy $\cache$ and a distinct request profile $\bd$ is achievable on a physically degraded broadcast channel, i.e.,
	\[
	X\leftrightarrow Y_{K}\leftrightarrow\cdots\leftrightarrow Y_{1},
	\]
	then $f(\cache,\bd)$ satisfies 
	\begin{align}
	& f(\cache,\bd)\leq \frac{ I(U_{1};Y_{1})}{1-\mu_{[1]}}  ,\label{eq:lm:f_upper_1}\\
	& f(\cache,\bd) \leq  \frac{ I(U_{k};Y_{k}|U_{k-1})}{1-\mu_{[k]}},\quad k\in[2:K-1]\label{eq:lm:f_upper_k}\\
	& f(\cache,\bd)\leq  \frac{I(X;Y_{K}|U_{K-1})}{1-\mu_{[K]}},\label{eq:lm:f_upper_K}
	\end{align}
	for some random variables $(U_{1},U_{2},\ldots,U_{K-1})$ that form a Markov chain
	\[
	X\leftrightarrow U_{K-1}\leftrightarrow\cdots\leftrightarrow U_{1}.
	\]
\end{lm}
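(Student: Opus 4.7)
The plan is to mimic the classical converse for a physically degraded broadcast channel and inject the cache-as-side-information bounds at the appropriate stages. The starting point is Fano's inequality applied to each decoder: since $\widehat{W}_{d_k}^{(n)} = \phi_k^{(n)}(Y_k^n, C_k^{(n)})$ has vanishing error probability, one gets $H(W_{d_k}^{(n)} \mid Y_k^n, C_k^{(n)}) \leq n\epsilon_n$ with $\epsilon_n \to 0$. Because the files are mutually independent and the request profile is distinct, a small extension of Lemma~\ref{lm:I-mu} yields
\[ I\bigl(W_{d_k}^{(n)}; C_{[k]}^{(n)} \bigm| W_{d_1}^{(n)}, \ldots, W_{d_{k-1}}^{(n)}\bigr) \leq n\mu_{[k]} f, \]
while enlarging the conditioning set from $C_k^{(n)}$ to $C_{[k]}^{(n)}$ and adjoining the previously decoded messages only decreases the Fano entropy. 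Combining the two ingredients gives the key building block
\[ nf\bigl(1-\mu_{[k]}\bigr) \leq I\bigl(W_{d_k}^{(n)}; Y_k^n \bigm| W_{d_1}^{(n)}, \ldots, W_{d_{k-1}}^{(n)}, C_{[k]}^{(n)}\bigr) + n\epsilon_n. \]

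I would then single-letterize the right-hand side in the classical degraded-BC style. Introduce the auxiliaries
\[ U_{k,t} := \bigl(W_{d_1}^{(n)}, \ldots, W_{d_k}^{(n)}, C_{[k]}^{(n)}, Y_k^{t-1}\bigr), \qquad k = 1, \ldots, K-1, \]
together with a time-sharing variable $T$ uniform on $[n]$ and independent of everything else, and set $U_k := (U_{k,T}, T)$, $X := X_T$, and $Y_k := Y_{k,T}$. A chain-rule expansion of $I(W_{d_k}^{(n)}; Y_k^n \mid \cdot)$ over $t$, together with the degradedness $Y_K \to Y_{K-1} \to \cdots \to Y_1$ and standard rearrangement, should recast the right-hand side as $n I(U_k; Y_k \mid U_{k-1})$ for $k \in \{2,\ldots,K-1\}$, as $n I(U_1; Y_1)$ when $k = 1$ (empty conditioning), and as $n I(X; Y_K \mid U_{K-1})$ when $k = K$ (since $W_{d_K}^{(n)}$ together with $C_{[K]}^{(n)}$ determines $X_t$ via the encoder $\psi^{(n)}_\bd$).

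The Markov chain $X \leftrightarrow U_{K-1} \leftrightarrow \cdots \leftrightarrow U_1$ is essentially automatic, since $U_{k-1,t}$ is a sub-tuple of $U_{k,t}$ and $X_t$ is a deterministic function of the entire file-cache database at the transmitter. The main obstacle I anticipate is the bookkeeping during single-letterization: unlike a textbook degraded BC converse, here $C_{[k]}^{(n)}$ is a function of \emph{all} $N$ files (not only $W_{d_1}^{(n)}, \ldots, W_{d_k}^{(n)}$), so one must carefully verify that its inclusion in $U_{k,t}$ preserves the conditional Markov relation $U_{k-1,t} \to U_{k,t} \to X_t \to Y_{k,t}$ and does not interfere with the telescoping of sums across consecutive users. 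The memoryless structure of the channel and the independence of the i.i.d.\ state sequence $\{L_k[t]\}$ from the source and caches are what ultimately make the final step go through.
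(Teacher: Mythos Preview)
Your plan matches the paper's proof at the level of strategy: Fano, enlarge the conditioning to $(W_{d_{[k-1]}}^{(n)}, C_{[k]}^{(n)})$, bound the cache contribution via (an easy extension of) Lemma~\ref{lm:I-mu}, and single-letterize with a uniform time-sharing index. The one concrete slip is in your choice of auxiliary and the ``sub-tuple'' justification for the Markov chain. The paper sets
\[
U_k \;=\; \bigl(W_{d_{[k]}}^{(n)},\; C_{[k]}^{(n)},\; Y_{[k]}^{Q-1},\; Q\bigr),
\]
carrying the past outputs of \emph{all} users $1,\ldots,k$; with this choice $U_{k-1}$ really is a coordinate sub-tuple of $U_k$, so $X\leftrightarrow U_{K-1}\leftrightarrow\cdots\leftrightarrow U_1$ is immediate. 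Your $U_{k,t}$ keeps only $Y_k^{t-1}$, so $Y_{k-1}^{t-1}$ sits in $U_{k-1,t}$ but not in $U_{k,t}$, and your claim ``$U_{k-1,t}$ is a sub-tuple of $U_{k,t}$'' is simply false as stated. The Markov chain does still hold with your definition---given $Y_k^{t-1}$, physical degradedness makes $Y_{k-1}^{t-1}$ conditionally independent of $X_T$ and of $Y_j^{t-1}$ for $j>k$---but that is a genuine argument, not the triviality you describe. Either adopt the paper's $U_k$ with $Y_{[k]}^{Q-1}$ so that the sub-tuple reasoning becomes valid (this is also what makes the telescoping across users clean in the paper's single-letterization), or keep your definition and replace the sub-tuple sentence by the degradedness argument just sketched.
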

The proof of Lemma~\ref{lm:Kuser-deg-BC} is presented in Appendix.
\begin{remark}
The global capacity of a broadcast channel depends on the underlying transition probability  $\P(Y_1,\ldots,Y_K|X)$ only through its marginal conditional probabilities $\P(Y_1|X),\ldots,\P(Y_K|X)$. Therefore, the claim of Lemma~\ref{lm:Kuser-deg-BC} also applies to stochastically degraded BCs.
\end{remark}
Now, we are ready to present the proof of Theorem~\ref{thm:Kuser-deg-BC}.
\subsection{An upper-bound on the achievable source rate}
The main steps of the proof of Theorem~\ref{thm:Kuser-deg-BC} are twofold: 
We first enhance and replace the arbitrary $L_1,\ldots, L_K$ by the \textit{degraded} $\wdt{L}_1,\ldots,\wdt{L}_K$ and then by exploiting Lemma~\ref{lm:Kuser-deg-BC}, we derive an upper-bound on the achievable source rate.
\begin{proof}[Proof of Theorem~\ref{thm:Kuser-deg-BC}]
We first note that for arbitrary channels $L_1,\ldots, L_K$, the $K$-DTVBC is not degraded. Hence, we recursively enhance the channel of the users to obtain a set of degraded channels. In this regard, we consider a weight vector ${\bfw = (\omega_1,\ldots,\omega_K) \in [0,\infty)^K}$ with sorted entries ${\w_1\geq \w_2\geq \cdots \geq \w_K}$. We define the enhanced channel output of user $k$ as $\widetilde{Y}_{k}=D^{\nB-\widetilde{L}_{k}}X=X(1:\widetilde{L}_{k})$ where $\widetilde{L}_{k}$ is a random variable drawn according to $\overline{F}_{\widetilde{L}_{k}}$, given by~\eqref{eq:chnl_ench}, independent of all other users. Since $\w_{k-1} \geq \w_{k}$, from~\eqref{eq:chnl_ench} we have $\overline{F}_{\widetilde{L}_{k}}(\ell) \geq \overline{F}_{\widetilde{L}_{k-1}}(\ell)$. Thus, from~\cite[Lemma~1]{yates2011k}, we can conclude that the enhanced broadcast channel is (stochastically) degraded. 
    Now, we can use  Lemma~\ref{lm:Kuser-deg-BC} for the degraded channel obtained by the enhancement procedure. Let $U_1,\ldots, U_{K-1}$ be the random variables satisfying the claim of the lemma, and form a Markov chain ${X\leftrightarrow U_{K-1}\leftrightarrow\cdots\leftrightarrow U_{1}}$.  
    Then, given the fact that CSI is available at the receivers, the terms in Lemma~\ref{lm:Kuser-deg-BC} for the deterministic channel of interest will be simplified to 
\begin{align}\label{eq:I_bnd}
    & I(U_{k};\widetilde{Y}_{k},\widetilde{L}_{k}| U_{{k-1}}) \nonumber\\
    & = I(U_{k};X(1 : \widetilde{L}_{k}),\widetilde{L}_{k}|U_{{k-1}})\nonumber\\
	&= I(U_{k}; \widetilde{L}_{k}|U_{{k-1}})+ I(U_{k};X(1 : \widetilde{L}_{k}),|U_{{k-1}}, \widetilde{L}_{k})\nonumber\\
	&\stackrel{\rm (a)}{=}  \sum_{j=1}^\nB  \P_{\widetilde{L}_{k}}\!(j)  I(U_{k};X(1:j)|U_{{k-1}}, \widetilde{L}_{k} = j)\nonumber\\
	&= \sum_{j=1}^\nB \left[ \P_{\widetilde{L}_{k}}\!(j) \sum_{\ell=1}^j I(U_{k};X(\ell) | X(1:\ell-1), U_{{k-1}})   \right]\nonumber\\
	&= \sum_{\ell=1}^\nB \left[ I\left(U_{k};X(\ell) | X(1 :\ell-1), U_{{k-1}}\right)  \cdot \sum_{j=\ell}^\nB  \P_{\widetilde{L}_{k}}\!(j)\right]\nonumber\\
	&= \sum\nolimits_{\ell=1}^\nB \overline{F}_{\widetilde{L}_{k}}(\ell) I(U_{k};X(\ell) | X(1\!:\!\ell\!-\!1)|U_{{k-1}})\nonumber\\
	&=  \sum_{\ell=1}^\nB \overline{F}_{\widetilde{L}_{k}}\!(\ell) \left[H\left(X(\ell) | X(1\!:\! \ell\!-\!1),U_{{k-1}}\hspace{-1pt}\right)- H\left(X(\ell)|X(1\!:\! \ell\!-\!1) ,U_{{k-1}},U_{k}\right) \right]\nonumber\\
    &\stackrel{\rm(b)}{=}  \sum_{\ell=1}^\nB \overline{F}_{\widetilde{L}_{k}}\!(\ell) \left[H\hspace{-1pt}(\hspace{-1pt}X(\ell) | X(1\!:\! \ell\!-\!1\hspace{-1pt}),U_{{k-1}}\hspace{-1pt})- H\hspace{-1pt}(\hspace{-1pt}X(\ell)|X(1\!:\! \ell\!-\!1)\hspace{-1pt},U_{k}) \right]\nonumber\\
	& =  \sum_{\ell=1}^\nB \overline{F}_{\widetilde{L}_{k}}(\ell) Q_{\ell,{k}},
	\end{align}
    where $\rm{(a)}$ follows from the fact that the random variable $\widetilde{L}_{k}$ is independent from $U_{k-1}$ and $U_k$, $\rm{(b)}$ holds due to the Markov chain $U_{{k-1}}\leftrightarrow U_{k}\leftrightarrow X(\ell)$ and ${Q_{\ell,{k}} \!:=\! H\!\left(\hspace{-1pt}X(\ell) | X(1\!:\! \ell\!-\!1\hspace{-1pt}),U_{{k-1}}\right)\!-\! H\!\left(\hspace{-1pt}X(\ell)|X(1\!\!:\! \ell\!-\!1)\hspace{-1pt},U_{k}\right)}$. Therefore, from Lemma~\ref{lm:Kuser-deg-BC} we have
	\begin{align}\label{eq:lm:Kuser-deg-BC:simplified}
	\!f(\cache,\bd) \cdot \lp 1\!-\!\mu_{[k]} \rp & \leq\! I(U_{k}; \widetilde{Y}_{k}, \widetilde{L}_{k} | U_{k-1})\nonumber\\
    &=\!\sum_{\ell=1}^\nB \overline{F}_{\widetilde{L}_{k}}(\ell) Q_{\ell,{k}},
	\end{align}
	for $k\in [K]$. 
	Taking a weighted sum of~\eqref{eq:lm:Kuser-deg-BC:simplified} with coefficients $\{\w_k\}$, we arrive at
	\begin{align}\label{eq:thm2_0}
        f(\cache,\bd)\sum_{k=1}^{K}
	\omega_{k}\lp 1-\mu_{[k]} \rp
	& \leq \sum_{k=1}^{K}\sum_{\ell=1}^{\nB}\w_{k}\overline{F}_{\widetilde{L}_{k}}(\ell)\cdot Q_{\ell,{k}} \nonumber\\
        & = \sum_{\ell=1}^{\nB} \sum_{k=1}^{K}\w_{k}\overline{F}_{\widetilde{L}_{k}}(\ell)\cdot Q_{\ell,{k}}.
	\end{align}
	Note that for each $\ell\in[B]$ we have
	\begin{align}\label{eq:sum_Q}
	    & \sum_{k=1}^K Q_{\ell,{k}}\nonumber\\
            &= \!\!\sum_{k=1}^K \!\left[H\!\left(\hspace{-1pt}X(\ell) | X(1\!:\! \ell\!-\!1\hspace{-1pt}),U_{{k-1}}\right) \!-\! H\!\left(\hspace{-1pt}X(\ell)|X(1\!\!:\! \ell\!-\!1)\hspace{-1pt},U_{k}\right)\right] \nonumber\\
	    &= H\left(\hspace{-1pt}X(\ell) | X(1\!:\! \ell\!-\!1\hspace{-1pt}),U_{{0}}\right)- H\left(\hspace{-1pt}X(\ell)|X(1\!\!:\! \ell\!-\!1)\hspace{-1pt},U_{K}\right)\nonumber\\
	    &\leq H(X(\ell)) \leq 1,
	\end{align}
	where we define $U_{0} \!=\! \varnothing$ as a dummy variable and ${U_{K}\!=\!X}$. 
	Therefore, using~\eqref{eq:sum_Q} for each $\ell\in[B]$ we can write
	\begin{align}\label{eq:thm2:1}
	    \sum_{k=1}^{K} \w_{k}\overline{F}_{\widetilde{L}_{k}}(\ell) \cdot Q_{\ell,k} & \leq \left(\max_{k} \w_{k}\overline{F}_{\widetilde{L}_{k}}(\ell)\right) \cdot \sum_{k=1}^K Q_{\ell,k}\nonumber\\
        &\leq \max_{k} \w_{k}\overline{F}_{\widetilde{L}_{k}}(\ell).
	\end{align}
Thus, plugging~\eqref{eq:thm2:1} into~\eqref{eq:thm2_0} we get
\begin{align}\label{eq:f_bnd_w}
    f(\cache,\bd)\sum_{k=1}^{K}
	\omega_{k}\sum_{k=1}^K \!\w_{k}\! \lp 1- \mu_{[k]}\rp & \!\leq\! 
	\max_{k} \sum_{\ell=1}^\nB \max_{k}  \w_{k}\overline{F}_{\widetilde{L}_{k}}(\ell)\nonumber\\
    & \!=\! \sum_{\ell=1}^\nB \max_{k}  \w_{k}\overline{F}_{L_{k}}(\ell), 
\end{align}
where the last equality follows from Lemma~\ref{lm:chnl_ench}-(iv). 
Dividing both sides of~\eqref{eq:f_bnd_w} by $\sum_{k=1}^K \!\w_{k}\! \lp 1- \mu_{[k]}\rp$, we arrive at 
\begin{align}\label{eq:f_idnty_per}
    f(\cache,\bd) \leq \frac{\sum_{\ell=1}^\nB \max_{k}  \w_{k}\overline{F}_{L_{k}}(\ell)}{\sum_{k=1}^K \!\w_{k}\! \lp 1- \mu_{[k]}\rp}.
\end{align}
Now we return to examine the upper bound for an arbitrary weight vector ${\bfw=(\w_1,\ldots,\w_k) \in [0,\infty)^K}$. Let $\pi$ be a permutation that  sorts the vector $\bfw$ in a non-increasing order, i.e., ${\w_{\pi(1)}\!\geq\! \cdots \!\geq\! \w_{\pi(K)}}$. Now, applying the enhancement in~\eqref{eq:chnl_ench}, we arrive at a set of (statistically) degraded channels,
\[
X\leftrightarrow \wdt{Y}_{\pi(K)}\leftrightarrow \wdt{Y}_{\pi(K-1)}\leftrightarrow \cdots 
\leftrightarrow \wdt{Y}_{\pi(2)}
\leftrightarrow \wdt{Y}_{\pi(1)}.
\]
Repeating the argument above, we arrive at~\eqref{eq:f_idnty_per} for a permuted version of the variables, that is,
\begin{align}\label{eq:fstar_v_1}
    f(\cache,\bd) \leq \frac{\sum_{\ell=1}^\nB \max_{k}  \w_{\pi(k)}\overline{F}_{L_{\pi(k)}}(\ell)}{\sum_{k=1}^{K}
	\w_{\pi(k)}\! \lp 1 - \mu_{\pi([k])}\rp}.
\end{align}
By minimizing the right hand side of~\eqref{eq:fstar_v_1} over all non-negative vectors $\bfw$, we get the desired bound, i.e.,
\begin{align}\label{eq:fstar_v_2}
    f(\cache,\bd) & \leq \fs(\cache,\bd)\nonumber\\
    & = \min_{\bfw \geq 0} \frac{\sum_{\ell=1}^\nB \max_{k}  \w_{\pi(k)}\overline{F}_{L_{\pi(k)}}(\ell)}{\sum_{k=1}^{K}
	\w_{\pi(k)}\! \lp 1 - \mu_{\pi([k])}\rp}.
\end{align} 
This completes the proof of the theorem. 
\end{proof}
\subsection{An LP Representation}
The main step in the proof of Proposition~\ref{pro:LP-fbnd} is to define a new weight vector $\bfs$, which allows us to transform the upper-bound in~\eqref{eq:fstar_v_2} into a linear form. 
\begin{proof}[Proof of Proposition~\ref{pro:LP-fbnd}]
Using ${\w_{\pi(1)}\! \geq \cdots \geq \!\w_{\pi(K)} \geq 0}$ and starting from~\eqref{eq:fstar_v_2}, we can write
  \begin{align}\label{eq:f_LP_1}
        \fs(\cache,\bd) & =\min_{\bfw \geq 0} \frac{\sum_{\ell=1}^\nB \max_{k}  \w_{\pi(k)}\overline{F}_{L_{\pi(k)}}(\ell)}{\sum_{k=1}^{K} \w_{\pi(k)}\! \lp 1 - \mu_{\pi([k])}\rp}\nonumber\\
        & = \min_{\pi \in \Pi} \min_{\bfw\in \Omega_\pi} \frac{\sum_{\ell=1}^\nB \max_{k}  \w_{\pi(k)}\overline{F}_{L_{\pi(k)}}(\ell)}{\sum_{k=1}^{K} \w_{\pi(k)}\! \lp 1 - \mu_{\pi([k])}\rp},
    \end{align}
    where $\Omega_\pi:= \left\{\bfw: \w_{\pi(1)} \geq \cdots \geq \w_{\pi(K)}\geq 0 \right\}$.
    Now, we fix some  $\pi\in \Pi$ and focus on the inner minimization in~\eqref{eq:f_LP_1}, i.e., 
    \begin{align}
        \fs_\pi(\cache,\bd) 
        & = \min_{\bfw\in \Omega_\pi} \frac{\sum_{\ell=1}^\nB \max_{k}  \w_{\pi(k)}\overline{F}_{L_{\pi(k)}}(\ell)}{\sum_{k=1}^{K} \w_{\pi(k)}\! \lp 1 - \mu_{\pi([k])}\rp}.
    \end{align}
    We define ${\sigma_k := \w_{\pi(k)} \lp 1 \!-\! \mu_{\pi([k])}\rp\hspace{-1pt}/\sum_{u=1}^K \w_{\pi(u)} \lp 1 \!-\! \mu_{\pi([u])}\rp}$ for every $k \in [K]$. We note that the vector $\bfs:=(\sigma_1,\ldots,\sigma_K)$ satisfies the following conditions:
    \begin{enumerate}[label=\bf{(C\arabic*)}, ref=\bf{(C\arabic*)}]
    \item  \label{cnd1} Since $\w_{\pi(k)} \geq 0$ and $\mu_{\pi([k])}\leq 1$ for every $k\in [K]$, we have $\sigma_k \geq 0$ for every $k\in[K]$;
    \item \label{cnd2} We have
    \begin{align*}
    \sum_{k=1}^{K} \sigma_k = \sum_{k=1}^{K} \frac{\w_{\pi(k)} \lp 1 - \mu_{\pi([k])}\rp}{\sum_{u=1}^K \w_{\pi(u)} \lp 1 - \mu_{\pi([u])}\rp } = 1;
    \end{align*}
    \item \label{cnd3} Using $\w_{\pi(k-1)} \geq \w_{\pi(k)}$, we get
    \begin{align*}
        \frac{\sigma_{k-1}}{1 - \mu_{\pi([k-1]),}} \geq \frac{\sigma_{k}}{1 - \mu_{\pi([k])}} ,
    \end{align*}
    or equivalently,
    \begin{align*}
         \sigma_{k-1}\cdot \lp 1 - \mu_{\pi([k])}\rp \geq \sigma_k\cdot \lp 1 - \mu_{\pi([k-1])}\rp,
    \end{align*}
    for every $k \in \{2,\cdots, K \}$.
\end{enumerate}
    We define $\Sigma_{\pi}$ as the set of all vectors $\bfs$ satisfying three conditions in~\ref{cnd1}-\ref{cnd3}.
    
    Note that for every vector $\bfw\in \Omega_{\pi}$, there is a vector $\bfs \in \Sigma_{\pi}$ and vice versa. Applying this change of variables in ~\eqref{eq:f_LP_1}, we arrive at
    \begin{align}\label{eq:f_LP_2}
        \fs_\pi(\cache,\bd)\! &=\!\!  \min_{\bfw\in \Omega_\pi} \hspace{-2pt}\sum_{\ell=1}^B \max_k \frac{\w_{\pi(k)} \lp 1 \hspace{-1pt}-\hspace{-1pt} \mu_{\pi([k])}\rp}{\sum_{u=1}^K \w_{\pi(u)} \lp 1 \hspace{-1pt}-\hspace{-1pt} \mu_{\pi([u])}\rp}\frac{\ov{F}_{L_{\pi(k)}}(\ell)}{ 1 \hspace{-1pt}-\hspace{-1pt} \mu_{\pi([k])}}
        \nonumber\\
        & =   \min_{\bfs\in \Sigma_{\pi}} \sum_{\ell=1}^B \max_k \sigma_k \frac{\ov{F}_{L_{\pi(k)}}(\ell)}{ 1 \hspace{-1pt}- \hspace{-1pt}\mu_{\pi([k])}}\!.
    \end{align}
   Let us consider each summand in~\eqref{eq:f_LP_2}. For a given $\pi\in\Pi$ and $\bfs\in \Sigma_\pi$, the $\ell$th term in the summation is
\begin{align}\label{eq:op-theta-1}
    \max_k \sigma_k \frac{\ov{F}_{L_{\pi(k)}}(\ell)}{ 1 \!-\! \mu_{\pi([k])}} \!=\! \min \left\{\theta_\ell \!:\! \theta_\ell \!\geq\! \sigma_k \frac{\ov{F}_{L_{\pi(k)}}(\ell)}{1 - \mu_{\pi([k])}}, k\in[K] \right\}.
\end{align}
Let us define
\begin{align}\label{eq:Theta}
   \!\Theta_{\pi}^{\bfs} \!:=\! \left\{\!\bft\!=\!(\theta_1,\dots, \theta_B) \!:\! \theta_\ell \!\geq \!\sigma_k\frac{\ov{F}_{L_{\pi(k)}}(\ell)}{1 - \mu_{\pi([k])}}, k\in[K]\!\right\}\!.
\end{align}
Then,~\eqref{eq:op-theta-1} can be written as
\begin{align}\label{eq:op-theta-1:0}
    \max_k \sigma_k \frac{\ov{F}_{L_{\pi(k)}}(\ell)}{ 1 \!-\! \mu_{\pi([k])}} \!=\! \min_{\bft\in \Theta_{\pi}^{\bfs}} \theta_\ell.
\end{align}

Note that the conditions on each $\theta_\ell$ in~\eqref{eq:Theta} only depend on $\{\ov{F}_{L_{\pi(k)}}(\ell)\}_k$, and hence, for $\ell\neq \ell'$, the conditions on $\theta_\ell$ and $\theta_{\ell'}$ are independent of each other. In other words, $\Theta_{\pi}^{\bfs}$ is an orthant with an offset in $\mathbb{R}^B$. Hence, the minimum of the summation of $\{\theta_{\ell}\}_{\ell=1}^B$ and the summation of the minimum of  $\{\theta_{\ell}\}_{\ell=1}^B$ are equivalent, i.e.,
\begin{align}\label{eq:op-theta-2}
\sum_{\ell=1}^B  \min_{\bft \in \Theta_{\pi}^{\bfs}} \theta_{\ell} 
=
    \min_{\bft \in \Theta_{\pi}^{\bfs}} \sum_{\ell=1}^B \theta_{\ell} . 
\end{align}
Combining~\eqref{eq:op-theta-1:0} and~\eqref{eq:op-theta-2}, we arrive at
\begin{align}\label{eq:op-theta-3}
    \sum_{\ell=1}^B \max_k \sigma_k \frac{\ov{F}_{L_{\pi(k)}}(\ell)}{1 - \mu_{\pi([k])}} = \min_{\bft \in \Theta_{\pi}^{\bfs}} \sum_{\ell=1}^B \theta_{\ell}.
\end{align}
Plugging~\eqref{eq:op-theta-3} into~\eqref{eq:f_LP_2}, we get
\begin{align}\label{eq:f_LP_3}
    \fs_\pi(\cache,\bd) =\min_{\bfs\in \Sigma_{\pi}} \min_{\bft \in \Theta_{\pi}^{\bfs}} \sum_{\ell=1}^B \theta_{\ell}.
\end{align}
Let $\bx := [\bfs,\bft]$. It is important to note that the objective function and the constraints on the optimization problem~\eqref{eq:f_LP_3} are linear in $\bx$. More precisely, using matrices $\bA$ and $\bb$ defined in 
and~\eqref{eq:Matrix-A} in~\eqref{eq:vector-b}, conditions~\ref{cnd1},~\ref{cnd2}, and~\ref{cnd3} on vector $\bfs\in\Sigma_\pi$ can be translated into $-\bx \leq 0$, $\bb \bx =1$, and the lower $K-1$ rows of $\bA_\pi \bx \leq 0$, respectively. Moreover, the constraints on $\bft \in \Theta_{\pi}^{\bfs}$ in~\eqref{eq:Theta} can be expressed as the top $KB$ rows of $\bA_\pi \bx \leq 0$. Therefore, we can rewrite~\eqref{eq:f_LP_3} as
%
 \begin{align*}
     \fs(\cache,\bd) =\ & \min_{\bx \in \mathbb{R}^{K\!+\!B}} \ [\zeros_{K}^T,\ones_{ B }^T] \bx\\
    & \ \ \ \textrm{s.t.\ } \ \bA_{\pi} \bx \leq \zeros,\nonumber\\
    &\phantom{\ \ \ \ \ \textrm{s.t.}} \ \ \bb \bx = 
    1,\nonumber\\ 
    &\phantom{\ \ \ \ \textrm{s.t.}}\ -\bx \leq \zeros\nonumber. 
\end{align*}
This completes the proof of the proposition. 
\end{proof}
\section{Converse Proof of Theorem~\ref{thm:degraded}}\label{sec:proof-degraded-conv}
The converse proof of Theorem~\ref{thm:degraded} is derived directly from the proof of Theorem~\ref{thm:Kuser-deg-BC} where no channel enhancement is required, i.e., $\wdt{L}_k = L_k$ for every $k\in [K]$. 
Since $L_K\geq_{\mathsf{st}}\cdots \geq_{\mathsf{st}} L_1$ the $K$-DTVBC is degraded, we can repeat the steps~\eqref{eq:I_bnd} in through~\eqref{eq:lm:Kuser-deg-BC:simplified} with no further channel enhancement, i.e., $\ov{F}_{\wdt{L}_k}(\ell) = \ov{F}_{L_k}(\ell)$ for every $k\in [K]$ and $\ell\in [B]$. Hence, for the cache placement strategy $\cent$ and its caching tuple $\tcent$ we can write
\begin{align}\label{eq:deg_f_0}
f(\cache,\bd) \cdot \lp 1 - \tcents_{[k]}\rp \leq \sum_{\ell=1}^\nB \overline{F}_{L_{k}}(\ell) Q_{\ell,{k}}, \quad \forall k\in [K],
\end{align}
where
\[{Q_{\ell,{k}} \!=\! H\!\left(X(\ell) \md   X(1\!:\! \ell\!-\!1),U_{{k-1}}\right) \!-\! H\!\left(X(\ell)\md  X(1\!\!:\! \ell\!-\!1),U_{k}\right)}\]
for the Markov chain $U_{{k-1}}\leftrightarrow U_{k}\leftrightarrow X(\ell)$. It is easy to verify that $Q_{\ell,{k}} \geq 0$. Moreover, from~\eqref{eq:sum_Q}, we have
\begin{align}\label{eq:const_degraded}
	    \sum_{k=1}^K Q_{\ell,{k}} \leq 1, \quad \forall \ell\in [B].
\end{align}
From~\eqref{eq:deg_f_0} and~\eqref{eq:const_degraded}, we can write 
\begin{align}
\begin{split}\label{eq:degraded-conv-f}
     & f(\cache,\bd) \leq \max \ \bar{f} \\
     & \textrm{s.t.} \quad \bar{f}\cdot \lp 1 \!-\! \tcents_{[k]}\rp \!\leq\! \sum_{\ell=1}^\nB \overline{F}_{L_{k}}(\ell) Q_{\ell,{k}}, \quad \forall k\in [K],\\
     &\phantom{\textrm{s.t.}\quad} \sum_{k=1}^K Q_{\ell,{k}} \leq 1, \quad \forall\ell\in [B].
\end{split}
\end{align}
Noting the LP problems in~\eqref{eq:fwKUser_degraded} with constraints~\eqref{eq:const_thm2} and~\eqref{eq:degraded-conv-f} are equivalent, we arrive at the claim of Theorem~\ref{thm:degraded}. This completes the proof of the theorem.
\hfill $\square$ 

\section{Conclusion}\label{sec:conc}
In this work, we studied a $K$-user coded-caching problem in a joint source-channel coding framework by providing each user a cache. The transmitter has a certain rate for all files per channel use, and a fraction of the bits/symbols are available in each user's cache. After this, each user requests a file from the database where the transmitter needs to satisfy users' demands over the $K$-DTVBC. The receivers have only access to the channel state information. We characterized the maximum achievable source rate for the $2$-DTVBC and the degraded $K$-DTVBC. Then, we provided an upper bound for the source rate with any caching strategy $\cache$. Finally, we presented an achievable scheme with the LP formulation to show that the upper bound is not a sharp characterization.  
    
Several avenues are left for future research, including closing the gap between the achievable and optimum rates and studying the similar coded-caching problem over the Gaussian fading BC with (un-)coded cache placement schemes.

\appendices

\section*{Appendix: Proof of Lemmas}
In this section, we provide the proofs of lemmas. Note that, in order to avoid repetition,  we provide the proof of Lemma~\ref{lm:2user-deg-BC} after the proof of Lemma~\ref{lm:Kuser-deg-BC}, since most of the techniques used in the latter are also applied in the former. 
\begin{proof}[Proof of Lemma~\ref{lm:2user_f_ach}]
    The first part of the lemma is an immediate consequence of the level-allocation in~\eqref{eq:cent-2:ind-com} and~\eqref{eq:cent-2:ind-sum}. 
    
    Before we prove the other claims of the lemma, consider the LHS of~\eqref{eq:f_1_u_a} and note that the first term in the minimization increases with respect to both $u$ and $\alpha$, while the second term decreases with $u$ and $\alpha$. Hence, the minimum of two terms is maximized when two terms are equal. Thus, we can write
 \begin{align}
     f^\s_1 & = \frac{1}{1\!-\!2\mu}\left[\sum_{i<u^\s} \ov{F}_{L_1}(\el_i) \!+\!\alpha^\s \ov{F}_{L_1}(\el_{u^\s}) \right] \nonumber\\
     & = \frac{1}{1\!-\!\mu}\left[\sum_{i>u^\s} \ov{F}_{L_2}(\el_i) +(1-\alpha^\s) \ov{F}_{L_2}(\el_{u^\s}) \right].\label{eq:f_s_1}
 \end{align}
 Similarly, for~\eqref{eq:f_2_v_b} we get
 \begin{align}
     f^\s_2 & = \frac{1}{1-\mu}\left[\sum_{i<v^\s} \ov{F}_{L_1}(\el_i) \!+\! (1-\beta^\s) \ov{F}_{L_1}(\el_{v^\s}) \right]\nonumber\\
     & = \frac{1}{1\!-\!2\mu}\left[\sum_{i>v^\s} \ov{F}_{L_2}(\el_i) + \beta^\s  \ov{F}_{L_2}(\el_{v^\s}) \right].\label{eq:f_s_2}
 \end{align}
Now, in order to prove the second and the third claims of the lemma, we can distinguish two cases, depending on whether $f_1^\star \leq f_2^\star$ or $f_1^\star \geq f_2^\star$. Let us start with the first case. Since $\mu>0$,  we can write
 \begin{align}\label{eq:u-v:1}
    \!\sum_{i\leq v^\s} \!\ov{F}_{L_1}\!(\el_i)\! & > \frac{1-2\mu}{1\!-\!\mu}\left[\sum_{i\leq v^\s} \ov{F}_{L_1}(\el_i) \right] \nonumber\\
    & \stackrel{\rm{(a)}}{\geq} 
    \frac{1-2\mu}{1\!-\!\mu}\left[\sum_{i<v^\s} \ov{F}_{L_1}(\el_i) + (1-\beta^\s) \ov{F}_{L_1}(\el_{v^\s}) \right]\nonumber \\
     & \stackrel{\rm{(b)}}{\geq} (1-2\mu)f_2^\star\nonumber\\
     & \stackrel{\rm{(c)}}{\geq} (1-2\mu)f_1^\star \nonumber\\
     & \stackrel{\rm{(d)}}{=}\sum_{i<u^\s} \ov{F}_{L_1}(\el_i) + \alpha^\s \ov{F}_{L_1}(\el_{u^\s}) \nonumber\\
     & \stackrel{\rm{(e)}}{\geq}\sum_{i<u^\s} \ov{F}_{L_1}\!(\el_i),
 \end{align}
 where $\rm{(a)}$ holds since $\beta^\star\leq 1$, $\rm{(b)}$ follows from the first equality in~\eqref{eq:f_s_2},  $\rm{(c)}$ 
 is due to assuming $f_1^\star \leq f_2^\star$, $\rm{(d)}$ follows from the first equality in~\eqref{eq:f_s_1}, and $\rm{(e)}$ holds since $\alpha^\star \geq 0$. 
 Then, ~\eqref{eq:u-v:1} implies that $v^\star \geq u^\star$. 
 
 For the second case with $f_1^\star \geq f_2^\star$ and $\mu>0$,  we can write
 \begin{align}\label{eq:u-v:2}
    \!\sum_{i\geq u^\s} \!\!\ov{F}_{L_2}\!(\el_i)\! &> \frac{1-2\mu}{1\!-\!\mu}\left[\sum_{i\geq u^\s} \ov{F}_{L_2}(\el_i) \right] \nonumber\\
    & \stackrel{\rm{(a)}}{\geq} 
    \frac{1-2\mu}{1\!-\!\mu}\left[\sum_{i>u^\s} \ov{F}_{L_2}(\el_i) + (1-\alpha^\s) \ov{F}_{L_2}(\el_{u^\s}) \right]\nonumber \\
     & \stackrel{\rm{(b)}}{=} 
     (1-2\mu)f_1^\star \nonumber\\
     &\stackrel{\rm{(c)}}{\geq} (1-2\mu)f_2^\star \nonumber\\
    &\stackrel{\rm{(d)}}{=} \!\!\sum_{i>v^\s} \!\ov{F}_{L_2}(\el_i) \!+\! \beta^\s \!\ov{F}_{L_2}\!(\el_{u^\s})\nonumber\\
    & \stackrel{\rm{(e)}}{\geq}\!\! \sum_{i>v^\s}\! \ov{F}_{L_2}(\el_i),
 \end{align}
 where $\rm{(a)}$ holds for $\alpha^\star\leq 1$, $\rm{(b)}$ follows from the second equality in~\eqref{eq:f_s_1}, $\rm{(c)}$ is true since we assumed $f_1^\star \geq f_2^\star$, $\rm{(d)}$ follows from the second equality in~\eqref{eq:f_s_2}, and $\rm{(e)}$ holds for $\beta^\star > 0$. From~\eqref{eq:u-v:2}, it can be immediately seen that $v^\star \geq u^\star$, as claimed in part (ii) of the lemma.

 We prove the third claim assuming that ${f^\s_1 \!\leq\! f^\s_2}$. Note that the proof for the other case is very similar. The proof is by contradiction. Assume $g(\ell_{u^\s}) > 1$. Then, since the levels are sorted with respect to $g(\cdot)$, we have ${1<g(\ell_{u^\s}) \leq \cdots \leq g(\ell_B)}$, which implies ${\ov{F}_{L_2}(\ell_i) > \ov{F}_{L_1}(\ell_i)}$ for every $i>u^\s$. Thus, we have 
 \begin{align}
     (1-&\alpha^\s) \ov{F}_{L_2}(\el_{u^\s}) +\sum_{u^\s<i<v^\s} \ov{F}_{L_2}(\el_i)  
     +\beta^\s \ov{F}_{L_2}(\el_{v^\s})+ (1-\beta^\s) \ov{F}_{L_2}(\el_{v^\s}) 
     +\sum_{i>v^\s} \ov{F}_{L_2}(\el_i) 
     \nonumber\\
     & =(1-\alpha^\s) \ov{F}_{L_2}(\el_{u^\s})  \!+\!\sum_{i>u^\s}  \ov{F}_{L_2}(\el_i) \nonumber\\
     &\stackrel{\rm{(a)}}{=}   (1-\mu)f_1^\star  \nonumber\\ 
     & \stackrel{\rm{(b)}}{\leq} (1-\mu) f_2^\star\nonumber\\
     &\stackrel{\rm{(c)}}{=}  \sum_{i<v^\s} \ov{F}_{L_1}(\el_i) +(1-\beta^\s) \ov{F}_{L_1}(\el_{v^\s})\nonumber\\
     & = \sum_{i<u^\s} \ov{F}_{L_1}(\el_i) \!+\! \alpha^\s \ov{F}_{L_1}(\el_{u^\s}) + (1\!-\!\alpha^\s) \ov{F}_{L_1}(\el_{u^\s}) \!+\! \sum_{u^\s<i<v^\s} \!\ov{F}_{L_1}(\el_i) + (1-\beta^\s) \ov{F}_{L_1}(\el_{v^\s}),
     \label{eq:f_1_inq2}
 \end{align}
where $\rm{(a)}$ follows from the second equality in~\eqref{eq:f_s_1},  $\rm{(b)}$ holds as $f_1^\s\leq f^\s_2$,  and step $\rm{(c)}$ follows from the first equality in~\eqref{eq:f_s_2}. 
Subtracting the RHS of~\eqref{eq:f_1_inq2} from its LHS, we arrive at 
\begin{align}\label{eq:pr:lm:2:3}
0&\leq \Bigg[
\sum_{i<u^\s} \ov{F}_{L_1}(\el_i) \!+\! \alpha^\s \ov{F}_{L_1}(\el_{u^\s}) + (1\!-\!\alpha^\s) \ov{F}_{L_1}(\el_{u^\s}) +\! \sum_{u^\s<i<v^\s} \!\ov{F}_{L_1}(\el_i) + (1-\beta^\s) \ov{F}_{L_1}(\el_{v^\s})
\Bigg]\nonumber\\
&\phantom{\leq} - \Bigg[ (1-\alpha^\s) \ov{F}_{L_2}(\el_{u^\s}) +\sum_{u^\s<i<v^\s} \ov{F}_{L_2}(\el_i) +\beta^\s \ov{F}_{L_2}(\el_{v^\s})+ (1-\beta^\s) \ov{F}_{L_2}(\el_{v^\s}) 
     +\sum_{i>v^\s} \ov{F}_{L_2}(\el_i) \Bigg]\nonumber\\
     &=\Bigg[\sum_{i<u^\s} \ov{F}_{L_1}(\el_i) \!+\! \alpha^\s \ov{F}_{L_1}(\el_{u^\s})\Bigg] -\Bigg[\sum_{i>v^\s} \ov{F}_{L_2}(\el_i) +\beta^\star \ov{F}_{L_2}(\el_{v^\star})\Bigg]\nonumber\\
     &\phantom{=} + \Bigg[(1-\alpha^\star) \Big(\ov{F}_{L_1}(\el_{u^\star}) - \ov{F}_{L_2}(\el_{u^\star})\Big) \Bigg]+ \Bigg[(1-\beta^\star) \Big(\ov{F}_{L_1}(\el_{v^\star}) - \ov{F}_{L_2}(\el_{v^\star})\Big) \Bigg]\nonumber\\
     &\phantom{=} + \Bigg[
     \sum_{u^\s<i<v^\s} \Big(\ov{F}_{L_1}(\el_i)  
     -   \ov{F}_{L_2}(\el_i)\Big) \Bigg]\nonumber\\
     &\stackrel{\rm{(a)}}{<} \Bigg[\sum_{i<u^\s} \ov{F}_{L_1}(\el_i) \!+\! \alpha^\s \ov{F}_{L_1}(\el_{u^\s})\Bigg] -\Bigg[\sum_{i>v^\s} \ov{F}_{L_2}(\el_i) +\beta^\star \ov{F}_{L_2}(\el_{v^\star})\Bigg]\nonumber\\
     &\stackrel{\rm{(b)}}{=} (1-2\mu) f_1^\star - (1-2\mu) f_2^\star = (1-2\mu)(f_1^\star - f_2^\star),
\end{align}
where $\rm{(a)}$ holds since $\ov{F}_{L_1}(\el_i) <  \ov{F}_{L_2}(\el_i)$ for $i>u^\star$, and $\rm{(b)}$ follows from the second equalities in~\eqref{eq:f_s_1} and~\eqref{eq:f_s_2}. Then, \eqref{eq:pr:lm:2:3} implies that $f_1^\star >f_2^\star$, which is in contradiction with the assumption that $f_1^\star \leq f_2^\star$. Hence, we can conclude that $g(\el_{u^\s})\leq 1$. This completes the proof of part (iii). 

Finally, we can prove part (iv) of the lemma. 
First, assume that ${f^\s_1\leq f^\s_2}$. 
From~\eqref{eq:f_s_1}, we can write
 \begin{align*}
          \frac{1}{g(\ell_{u^\s})}(1\hspace{-1pt}-\hspace{-1pt}\mu)f_1^\s   \!&=\! \frac{1}{g(\ell_{u^\s})}\!\left[\sum_{i>u^\s} \hspace{-1pt}\ov{F}_{L_2}(\el_i) \!+\!(1\!-\!\alpha^\s) \ov{F}_{L_2}(\el_{u^\s})\!\right]\nonumber\\
          &\stackrel{\rm{(a)}}{=} \frac{1}{g(\ell_{u^\s})} \sum_{i>u^\s} \ov{F}_{L_2}(\el_i) \!+\!(1\hspace{-1pt}-\hspace{-1pt}\alpha^\s) \ov{F}_{L_1}(\el_{u^\s}),
\end{align*}
where in $\rm{(a)}$ follow from ${g(\ell_{u^\star}) = \ov{F}_{L_2}(\el_{u^\s}) / \ov{F}_{L_1}(\el_{u^\s})}$. Moreover, from~\eqref{eq:f_s_1} we have 
\begin{align*}
        (1-2\mu) f_1^\s \!=\! \sum_{i<u^\s} \ov{F}_{L_1}(\el_i) +\alpha^\s \ov{F}_{L_1}(\el_{u^\s}). 
     \end{align*}
     Combining these two equations, we arrive at
     \begin{align}\label{eq:pr:lm:2:4}
         f_1^\s & = \frac{\sum_{i\leq u^\s} \ov{F}_{L_1}(\el_i) + \frac{1}{g(\ell_{u^\s})}\sum_{i>u^\s} \ov{F}_{L_2}(\el_i)}{ (1-2\mu)+ \frac{1}{g(\ell_{u^\s})}(1-\mu)}\nonumber\\
         & \stackrel{\rm{(a)}}{=} \frac{R_1(g(\ell_{u^\s})) + \frac{1}{g(\ell_{u^\s})}R_2(g(\ell_{u^\s}))}{ (1-2\mu)+ \frac{1}{g(\ell_{u^\s})}(1-\mu)}\nonumber\\
         &= \frac{R_1(\omega) + \frac{1}{\omega)}R_2(\omega)}{ (1-2\mu)+ \frac{1}{\omega}(1-\mu)}\Bigg|_{\omega=g(\ell_{u^\s})}\nonumber\\
         &\stackrel{\rm{(b)}}{\geq}   \min_{\omega\leq 1}\frac{R_1(\omega) + \frac{1}{\omega)}R_2(\omega)}{ (1-2\mu)+ \frac{1}{\omega}(1-\mu)},
     \end{align}
     where in~$\rm{(a)}$ we have $R_1(g(\ell_{u^\s}))$ and $R_1(g(\ell_{u^\s}))$ as given in~\eqref{eq:R1-R2}. Moreover, 
     \begin{align*}
         \cL_1(g(\ell_{u^\s})) &=
         \{\ell: g(\ell_{u^\s}) \ov{F}_{L_1}(\el)  \geq \ov{F}_{L_2}(\el)  \} \nonumber\\
         &=\{\ell: g(\ell_{u^\s})  \geq g(\el)  \} 
         =\{i: i\leq u^\s\},
     \end{align*}
         and ${\cL_2(g(\ell_{u^\s})) = \{i: i>u^\s\}}$ as indicated in the statement of Theorem~\ref{thm:2-User}. Also, note that from part~(iii), we have $g(\el_{u^\s})\leq 1$, which justifies the inequality in~$\rm{(b)}$. 
     
     Now, we consider the case that $f^\s_2\leq f^\s_1$. From~\eqref{eq:f_s_2}, we can write
     \begin{align*}
         g(\ell_{v^\s})(1-\mu)f_2^\s   \!&=\! g(\ell_{v^\s})\!\left[\sum_{i<v^\s} \ov{F}_{L_1}(\el_i) \!+\!(1\hspace{-1pt}-\hspace{-1pt}\beta^\s) \ov{F}_{L_1}(\el_{v^\s})\right]\hspace{-1pt}.
     \end{align*}
     Further, we get
     \begin{align*}
        (1-2\mu) f_2^\s & = \sum_{i>v^\s} \ov{F}_{L_2}(\el_i) +\beta^\s \ov{F}_{L_2}(\el_{v^\s})\\
        &\stackrel{\rm{(a)}}{=} 
          \sum_{i>v^\s} \ov{F}_{L_2}(\el_i) + g(\ell_{v^\s}) \beta^\s \ov{F}_{L_1}(\el_{v^\s}),
     \end{align*}
     where $\rm{(a)}$  follows from ${g(\ell_{v^\star}) \!=\! \ov{F}_{L_2}(\el_{v^\s}) / \ov{F}_{L_1}(\el_{v^\s})}$. 
     Combining these two equations, we have
     \begin{align}\label{eq:pr:lm:2:4_2}
         f_2^\s & = \frac{g(\ell_{v^\s})\sum_{i\leq v^\s} \ov{F}_{L_1}(\el_i) + \sum_{i>v^\s} \ov{F}_{L_2}(\el_i)}{g(\ell_{v^\s})(1-\mu)+ (1-2\mu)}\nonumber\\
         & = \frac{g(\ell_{v^\s}) R_1(g(\ell_{v^\s})) +  R_2(g(\ell_{v^\s}))}{ g(\ell_{v^\s})(1-\mu)+ (1-2\mu)}\nonumber\\
         &= \frac{\omega R_1(x\omega) +  R_2(\omega)}{ \omega(1-\mu)+ (1-2\mu)}\Bigg|_{\omega = g(\ell_{v^\s})}\nonumber\\
         &\geq \min_{0\leq \omega \leq 1} \frac{\omega R_1(x\omega) +  R_2(\omega)}{ \omega(1-\mu)+ (1-2\mu)}.
     \end{align}
Here, $R_1(\omega)$ and $R_2(\omega)$ are defined as in~\eqref{eq:R1-R2}. Moreover, we have 
     \begin{align*}
         \cL_1(g(\ell_{v^\s})) &=
         \{\ell: g(\ell_{v^\s}) \ov{F}_{L_1}(\el)  \geq \ov{F}_{L_2}(\el)  \} \nonumber\\
         &=\{\ell: g(\ell_{v^\s})  \geq g(\el)  \} 
         =\{i: i\leq v^\s\},
     \end{align*}
     and $\cL_1(g(\ell_{v^\s})) = \{i: i> v^\s\}$. 
     It is worth noting that, the last inequality in~\eqref{eq:pr:lm:2:4_2} holds since  $g(\ell_{v^\s})\geq 1$, as shown in part~(iii) of the lemma. 

     Combining~\eqref{eq:pr:lm:2:4} and~\eqref{eq:pr:lm:2:4_2}, we arrive at ${\min(\fs_1, \fs_2)\geq \fs}$, for $\fs$ defined in \eqref{eq:f-2User}. This completes the proof of part~(iv). 
\end{proof}
\begin{proof}[Proof of Lemma~\ref{lm:chnl_ench}]
In order to prove (i), we show that  $\ov{F}_{\wdt{L}_{k}}(\ell)=1$ implies $\ov{F}_{\wdt{L}_{k+1}}(\ell)=1$.  From~\eqref{eq:chnl_ench}, we have
\begin{align*}
    \overline{F}_{\widetilde{L}_{k+1}}(\ell)&=\min\left[1,\max\left(\overline{F}_{L_{k+1}}(\ell),\frac{\w_k}{\omega_{k+1}}\overline{F}_{\widetilde{L}_{{k}}}(\ell)\right)\right]\\
    & =
    \min\left[1,\max\left(\overline{F}_{L_{k+1}}(\ell),\frac{\omega_{{k}}}{\omega_{k+1}}\right)\right]\\
    & \stackrel{\rm (a)}{=} \min\left[1,\frac{\omega_{{k}}}{\omega_{k+1}}\right] \stackrel{\rm (b)}{=}1,
\end{align*}
where $\rm(a)$ and $\rm (b)$ hold because $\ov{F}_{L_{k+1}}(\ell)\leq 1\leq \frac{\omega_{k}}{\omega_{k+1}}$, for a non-increasing sequence $\omega_1\geq \omega_2 \geq \cdots \geq \omega_K$. 
This implies that $\ov{F}_{\wdt{L}_{u}}(\ell)=1$ for every $u\geq k$. 

Next, we prove part~(ii). 
First, assume that 
 ${\!\w_{k} \ov{F}_{L_{k}}\!(\ell) <  \w_{k-1}\ov{F}_{\wdt{L}_{k-1}}\!(\ell)}$. 
 Then, we can write
\begin{align*}
    \overline{F}_{\widetilde{L}_{k}}(\ell)  &= \min\left[1,\max\left(\overline{F}_{L_{k}}(\ell),\frac{\omega_{{k-1}}}{\omega_{k}}\overline{F}_{\widetilde{L}_{{k-1}}}(\ell)\right)\right]\\
   &\leq \hspace{-1pt}\max\left(\overline{F}_{L_{k}}(\ell),\frac{\omega_{{k-1}}}{\omega_{k}}\overline{F}_{\widetilde{L}_{{k-1}}}(\ell)\right) \hspace{-1pt}=\hspace{-1pt}  \frac{\w_{k-1}}{\w_{k}}\ov{F}_{\wdt{L}_{k-1}}(\ell),
\end{align*}
which implies $\w_k \overline{F}_{\widetilde{L}_{k}}(\ell) \leq  \w_{k-1}\overline{F}_{\widetilde{L}_{k-1}}(\ell)$, which is in contradiction with the assumption of part~(ii). Hence, we have ${\!\w_{k} \ov{F}_{L_{k}}\!(\ell)\!\geq\!  \w_{k-1}\ov{F}_{\wdt{L}_{k-1}}\!(\ell)}$. This together with~\eqref{eq:chnl_ench} leads to
\begin{align*}
    \overline{F}_{\widetilde{L}_{k}}(\ell)&=\min\left[1,\max\left(\overline{F}_{L_{k}}(\ell),\frac{\omega_{{k-1}}}{\w_k}\overline{F}_{\widetilde{L}_{{k-1}}}(\ell)\right)\right]\\
    & = \min \left [1, \ov{F}_{L_{k}}(\ell)\right] =  \ov{F}_{L_{k}}(\ell),
\end{align*}
where the last equality follows from $\ov{F}_{L_{k}}(\ell)\leq 1$. This shows the first equality in part~(ii). 

Then, assume $\ov{F}_{\wdt{L}_{u}}(\ell) \!=\! 1$ for some $u<k$.   
From part~(i), we get
\begin{align*}
    \ov{F}_{\wdt{L}_u}(\ell)  =\cdots= \ov{F}_{\wdt{L}_{k-1}}(\ell) =\ov{F}_{\wdt{L}_{k}}(\ell)=1.
\end{align*} 
This together with the fact that $\w_{k-1}\geq \w_{k}$ implies 
${\w_{k-1}\ov{F}_{\wdt{L}_{k-1}}(\ell)\geq \w_{k}\ov{F}_{\wdt{L}_{k}}(\ell)}$ which contradicts with the assumption of part~(ii).  Hence, we have $\overline{F}_{\widetilde{L}_{u}}(\ell)<1$ for every $u<k$. Using this fact, we get 
\begin{align*}
    \ov{F}_{\wdt{L}_{u}}(\ell) &= \min\left[1,\max\left(\overline{F}_{L_{u}}(\ell),\frac{\omega_{{u-1}}}{\omega_{u}}\overline{F}_{\widetilde{L}_{{u-1}}}(\ell)\right)\right]\nonumber\\ 
    & = \max \left(\ov{F}_{L_{u}}(\ell), \frac{\w_{u-1}}{\w_{u}}\ov{F}_{\wdt{L}_{u-1}}(\ell)\right)\nonumber\\
    &\geq \frac{\w_{u-1}}{\w_{u}}\ov{F}_{\wdt{L}_{u-1}}(\ell),
\end{align*}
which results in $\w_{u}\ov{F}_{\wdt{L}_{u}}(\ell) \geq \w_{u-1}\ov{F}_{\wdt{L}_{u-1}}(\ell)$ for every $u<k$, or equivalently
\begin{align*}
   \w_{k}\ov{F}_{\wdt{L}_{k}}\!(\ell) \! > \!\w_{k-1}\ov{F}_{\wdt{L}_{k-1}}\!(\ell) \!\geq\! \w_{k-2}\ov{F}_{\wdt{L}_{k-2}}\!(\ell)\!\geq\! \cdots \!\geq\! \w_{1}\ov{F}_{\wdt{L}_{1}}\!(\ell).
\end{align*}
This completes the proof of part~(ii).

In order to prove part~(iii),  we first note that 
\begin{align}\label{eq:pr:lm:3:3}
    \frac{\w_{k-1}}{\w_{k}}\ov{F}_{\wdt{L}_{k-1}}(\ell) &\stackrel{\rm{(a)}}{>} 
  \ov{F}_{\wdt{L}_{k}}(\ell)\nonumber\\
  &= \min\left[1,\max \left(\ov{F}_{L_{k}}(\ell), \frac{\w_{k-1}}{\w_{k}}\ov{F}_{\wdt{L}_{k-1}}(\ell)\right)\right]\nonumber\\
  & \stackrel{\rm{(b)}}{=}\min\left[1, \frac{\w_{k-1}}{\w_{k}}\ov{F}_{\wdt{L}_{k-1}}(\ell)\right],
\end{align}
where both~$\rm{(a)}$ and~$\rm{(b)}$ follow from the assumption of part~(iii). Then, \eqref{eq:pr:lm:3:3} implies ${\ov{F}_{\wdt{L}_{k}}(\ell) = 1}$. 
This, from part~(i) of the lemma, we get
\begin{align*}
    \ov{F}_{\wdt{L}_{u}}(\ell)=1, \quad u \geq k.
\end{align*}
This along with $\w_{1}\geq \cdots \geq \w_K$ leads to
 \begin{align*}
	  \w_K \ov{F}_{\wdt{L}_K}(\ell) \leq \cdots \leq \w_k \ov{F}_{\wdt{L}_k}(\ell) < \w_{k-1} \ov{F}_{\wdt{L}_{k-1}}(\ell),
\end{align*}
which is the claim of part~(iii). 

Next, we prove part~(iv) of the lemma. Fix some ${\el\in [B]}$, and define $s^\s:=\max \left\{j:  \w_{j}\ov{F}_{\wdt{L}_{j}}(\ell) \!>\! \w_{j-1}\ov{F}_{\wdt{L}_{j-1}}(\ell)\right\}$.  In the following, we first show that
\begin{align}\label{eq:lm_iv_000}
\w_{s^\s}\ov{F}_{\wdt{L}_{s^\s}}(\ell)= \max_k \w_{k}\ov{F}_{\wdt{L}_{k}}(\ell).
\end{align}
The definition of $s^\s$ implies ${\w_{u}\ov{F}_{\wdt{L}_{u}}(\ell)\leq \w_{u+1}\ov{F}_{\wdt{L}_{u+1}}(\ell)}$, for every $u> s^\s$, leading to 
\begin{align}\label{eq:lm_iv_00}
    \w_{s^\s}\ov{F}_{\wdt{L}_{s^\s}}\!(\ell) \geq \w_{s^\s+1} \ov{F}_{\wdt{L}_{s^\s+1}}(\ell) \geq \cdots \geq  \w_{K}\ov{F}_{\wdt{L}_{K}}(\ell).
\end{align}
Moreover, Since $\w_{s^\s} \ov{F}_{\wdt{L}_{s^\s}}(\ell) > \w_{s^\s-1}\ov{F}_{\wdt{L}_{s^\s-1}}(\ell)$, from part~(ii) of the lemma we have
\begin{align}\label{eq:lm_iv_0}
    \w_1\ov{F}_{\wdt{L}_{1}}\!(\ell) \!\leq \!\cdots\! \leq\! \w_{s^\s-1} \ov{F}_{\wdt{L}_{s^\s-1}}\!(\ell) \!<\! \w_{s^\s}\ov{F}_{\wdt{L}_{s^\s}}\!\!(\ell) \!=\! \w_{s^\s} \ov{F}_{L_{s^\s}}\!(\ell).
\end{align}
Combining~\eqref{eq:lm_iv_00} and~\eqref{eq:lm_iv_0}, we can conclude~\eqref{eq:lm_iv_000}. 
Furthermore, we have
\begin{align}\label{eq:lm_iv_1}
    \max_k \w_{k}\ov{F}_{ {L}_{k}}(\ell) &\geq \w_{s^\s} \ov{F}_{L_{s^\s}}\!(\ell) \nonumber\\
    &\stackrel{\rm{(a)}}{=}  \w_{s^\s}\ov{F}_{\wdt{L}_{s^\s}}\!\!(\ell) \nonumber\\
    &= \max_k \w_{k}\ov{F}_{ \wdt{L}_{k}}(\ell) \nonumber\\
    &
    \stackrel{\rm{(b)}}{\geq} 
    \max_k \w_{k}\ov{F}_{ {L}_{k}}(\ell). 
\end{align}
Here, $\rm{(a)}$ follows from~\eqref{eq:lm_iv_0}, and $\rm{(b)}$ is due to the fact that $\w_{k}\ov{F}_{\wdt{L}_{k}}(\ell)\geq \w_{k}\ov{F}_{{L}_{k}}(\ell)$ for every $k\in [K]$. Lastly, \eqref{eq:lm_iv_1} concludes the proof of part~(iv).
\end{proof}

\begin{proof}[Proof of Corollary~\ref{cor:enh_chnl}]
    To prove (i), from the definition of $k^\s$, we get
    \begin{align*}
        \w_{k^\s} \overline{F}_{\wdt{L}_{k^\s}}(\ell)  > \w_{k^\s-1}\ov{F}_{\wdt{L}_{k^\s-1}}(\ell).
    \end{align*}
    This together with Lemma~\ref{lm:chnl_ench}-(ii) for $k=k^\s$ arrives us at
    \begin{align*}
        \w_{k^\s-1}\ov{F}_{\wdt{L}_{k^\s-1}}(\ell) \geq \cdots \geq \w_{1}\ov{F}_{\wdt{L}_{1}}(\ell).
    \end{align*}
     The part (ii) can be directly derived from the definitions of $k^\s$ and $u^\s$. 
     
     In order to prove part~(iii), from the definition of $u^\s$, we have
     \begin{align*}
         \w_{u^\s\!} \ov{F}_{\wdt{L}_{u^\s\!}}(\ell)> \w_{u^\s+1} \ov{F}_{\wdt{L}_{u^\s+1}}(\ell).
     \end{align*}
     This combined with Lemma~\ref{lm:chnl_ench}-(iii) for $k=u^\s+1$ leads us to
     \begin{align*}
         \w_{u^\s\!+1} \ov{F}_{\wdt{L}_{u^\s\!+1}}(\ell)\geq \cdots \geq 
        \w_K \ov{F}_{\wdt{L}_K}(\ell),
     \end{align*}
     which completes the proof of the corollary   
\end{proof}

\begin{proof}[Proof of Lemma~\ref{lm:I-mu}]
We first prove that
\[I\lp W^{(n)}_i;C^{(n)}_\cS \rp = H \lp C^{(n)}_{\cS,i} \rp.\]
To this end, we show $I\lp W^{(n)}_i;C^{(n)}_\cS \rp \!\geq\! H \lp C^{(n)}_{\cS,i} \rp$ and $I\lp W^{(n)}_i;C^{(n)}_\cS \rp \!\leq\! H \lp C^{(n)}_{\cS,i} \rp$. For the first inequality, we can write
\begin{align}\label{eq:ind-0}
    I\lp W^{(n)}_i;C^{(n)}_\cS \rp & = H\lp C^{(n)}_\cS \rp - H\lp C^{(n)}_\cS \md W^{(n)}_i \rp \nonumber\\
    & \geq H\lp C^{(n)}_\cS \rp\nonumber\\
    & \!=\! H\lp C^{(n)}_{\cS,i}, C^{(n)}_{\cS,[N]\setminus\{i\}} \rp \!\geq\! H\!\lp C^{(n)}_{\cS,i} \rp,
\end{align}
where $C^{(n)}_{\cS,i}:= \lp C^{(n)}_{k,i} \rp_{k\in \cS}$ and 
\[C^{(n)}_{\cS,[N]\setminus\{i\}}:=\lp C^{(n)}_{k,1},\ldots, C^{(n)}_{k,i-1},C^{(n)}_{k,i+1},\ldots,C^{(n)}_{k,N} \rp_{k\in \cS}.\] 

On the other hand, we have
\begin{align}\label{eq:ind-1}
    & I\lp W^{(n)}_i;C^{(n)}_\cS \rp\nonumber\\
    & = I\lp W^{(n)}_i; C^{(n)}_{\cS,i}, C^{(n)}_{\cS,[N]\setminus\{i\}} \rp \nonumber\\
    & = I\lp W^{(n)}_i; C^{(n)}_{\cS,[N]\setminus\{i\}} \rp + I\lp W^{(n)}_i; C^{(n)}_{\cS,i} \md C^{(n)}_{\cS,[N]\setminus\{i\}} \rp \nonumber\\
    & \leq I\lp W^{(n)}_i; W^{(n)}_{[N]\setminus \{i\} } \rp  +  I\lp W^{(n)}_i; C^{(n)}_{\cS,i} \md C^{(n)}_{\cS,[N]\setminus\{i\}} \rp\nonumber\\
    & \stackrel{\rm (a)}{=} I\lp W^{(n)}_i; C^{(n)}_{\cS,i} \md C^{(n)}_{\cS,[N]\setminus\{i\}} \rp \nonumber\\
    & = H \lp C^{(n)}_{\cS,i} \md C^{(n)}_{\cS,[N]\setminus\{i\}} \rp  - H \lp C^{(n)}_{\cS,i} \md C^{(n)}_{\cS,[N]\setminus\{i\}}, W^{(n)}_i \rp \nonumber\\
    & = H \lp C^{(n)}_{\cS,i} \md C^{(n)}_{\cS,[N]\setminus\{i\}} \rp  \leq H \lp C^{(n)}_{\cS,i} \rp,
\end{align}
where $\rm{(a)}$ follows since files $W^{(n)}_1,\ldots, W^{(n)}_N$ are mutually independent. Hence, using~\eqref{eq:ind-0},~\eqref{eq:ind-1}, and~\eqref{eq:H-CS}, we arrive at $I\lp W^{(n)}_i;C^{(n)}_\cS \rp = H \lp C^{(n)}_{\cS,i} \rp \leq n\mu_{\cS}f$. This completes the proof of the lemma. 
\end{proof}
\newpage
\begin{proof}[Proof of Lemma~\ref{lm:Kuser-deg-BC}]
	 Let $\boldsymbol{d}=(d_1,\ldots,d_K)$ be the demand vector. First, note that since user $k$ is capable of decoding its requested file $W^{(n)}_{d_k}$ from its received signal and cache content $C^{(n)}_k$, there should exist some family of caching strategies, encoding, and decoding functions with block length $n$ and decoding error probability $\epsilon_n$ where $\epsilon_n\rightarrow 0$ as $n\rightarrow\infty$. From Fano's inequality, we have 
	\begin{align*}
	    H \lp W^{(n)}_{d_k}\md Y^n_k,C^{(n)}_k \rp\leq n\epsilon_n,\quad k\in[K].
	\end{align*}
	Then, we can write
	\begin{align}
    	 nf(\cache,\bd)-n\epsilon_n
            & \leq H\lp W^{(n)}_{d_1}\rp -n\epsilon_n\nonumber\\ &\leq I\lp W^{(n)}_{d_1};Y^n_1,C^{(n)}_1\rp \nonumber\\
    	&= I\lp W^{(n)}_{d_1};Y^n_1\md C^{(n)}_1\rp+I\lp W^{(n)}_{d_1};C^{(n)}_1\rp \nonumber\\
    	&=\sum_{i=1}^{n}I\lp W^{(n)}_{d_1};Y_{1,i}\md Y^{i-1}_1,C^{(n)}_1\rp+I\lp W^{(n)}_{d_1};C^{(n)}_1\rp\nonumber\\
    	&\stackrel{\rm (a)}{\leq} \sum_{i=1}^{n}I\lp W^{(n)}_{d_1},Y^{i-1}_1;Y_{1,i}\md C^{(n)}_1\rp+I\lp W^{(n)}_{d_1};C^{(n)}_1 \rp \nonumber\\
    	&\stackrel{\rm (b)}{=} \!\sum_{i=1}^n \!I\!\lp W^{(n)}_{d_1},Y^{Q-1}_1;Y_{1,Q}\md C^{(n)}_1,Q=i\rp + I\lp W^{(n)}_{d_1};C^{(n)}_1\rp \nonumber\\
    	&= n \sum_{i=1}^n I\lp W^{(n)}_{d_1},Y^{Q-1}_1;Y_{1,Q}\md C^{(n)}_1,Q=i\rp\mathbb{P}(Q=i) +I\lp W^{(n)}_{d_1};C^{(n)}_1\rp\nonumber\\
    	& = n \sum_{i=1}^n I\lp W^{(n)}_{d_1},Y^{Q-1}_1;Y_{1,Q}\md C^{(n)}_1, Q\rp + I\lp W^{(n)}_{d_1};C^{(n)}_1\rp\nonumber\\
    	&
    	\leq nI\lp W^{(n)}_{d_1},C^{(n)}_1, Y^{Q-1}_1, Q;Y_{1,Q}\rp + I\lp W^{(n)}_{d_1};C^{(n)}_1\rp \nonumber\\
    	&
    	\stackrel{\rm (c)}{=} nI(U_1;Y_{1,Q})+I\lp W^{(n)}_{d_1};C^{(n)}_1\rp\nonumber\\
    	& = nI(U_1;Y_1)+I\lp W^{(n)}_{d_1};C^{(n)}_1\rp \nonumber\\ 
    	& \stackrel{\rm (d)}{\leq} nI(U_1;Y_1)+ n\mu_{\{1\}}f(\cache,\bd),\label{eq:nf1_K}
    	\end{align}
    where $\rm{(a)}$ holds since 
    \begin{align*}
       & I\lp W^{(n)}_{d_1},Y^{i-1}_1;Y_{1,i}\md  C^{(n)}_1\rp \\
       & = I\lp Y^{i-1}_1;Y_{1,i}\md  C^{(n)}_1\rp + I\lp W^{(n)}_{d_1};Y_{1,i}\md  Y^{i-1}_1, C^{(n)}_1\rp, 
    \end{align*}
    in $\rm{(b)}$
    $Q$ is a random variable independent of all other random variables which are uniformly distributed over $[n]$, in $\rm{(c)}$ we define ${U_1:= \left(W^{(n)}_{d_1}, C^{(n)}_1, Y^{Q-1}_1, Q\right)}$, and in $\rm{(d)}$ we used~\eqref{eq:I-f}. This implies inequality in~\eqref{eq:lm:f_upper_1}.
    
    We define the subset of indices $d_{[k]}=\{d_1,\ldots,d_k\}$ for every $k\in[K]$.
    	
    Similarly, for $k\in[2:K-1]$, we have
    \begin{align}
    	nf(\cache,\bd)-n\epsilon_n
        &=H\lp W^{(n)}_{d_k} \rp-n\epsilon_n \nonumber\\
    	& \leq I\lp W^{(n)}_{d_k};Y^n_k,C^{(n)}_k\rp \nonumber\\
    	&= I\lp W^{(n)}_{d_k};Y^n_k\md C^{(n)}_k\rp +I\lp W^{(n)}_{d_k};C^{(n)}_k \rp\nonumber\\
    	& 
    	\leq I\lp W^{(n)}_{d_k};Y^n_k,{W^{(n)}_{d_{[k-1]}}},C^{(n)}_{[k-1]}\md C^{(n)}_k \rp \nonumber\\
         &\phantom{\leq} +I\lp W^{(n)}_{d_k};C^{(n)}_k\rp \nonumber\\
    	&
        = I\lp W^{(n)}_{d_k};C^{(n)}_{[k-1]}\md  C^{(n)}_k \rp + I\lp W^{(n)}_{d_k};{W^{(n)}_{d_{[k-1]}}}\md C^{(n)}_{[k]} \rp+I\lp W^{(n)}_{d_k};Y^n_k\md  {W^{(n)}_{d_{[k-1]}}}, C^{(n)}_{[k]}\rp\nonumber\\
        &\phantom{\leq} +I\lp W^{(n)}_{d_k};C^{(n)}_k\rp\nonumber\\
    	&\stackrel{\rm (a)}{=} 
    	I\lp W^{(n)}_{d_k};Y^n_k\md  W^{(n)}_{d_{[k-1]}}, C^{(n)}_{[k]}\rp+I\lp W^{(n)}_{d_k};C^{(n)}_{[k]}\rp\nonumber\\
    	&= 
    	\sum_{i=1}^{n}I\lp W^{(n)}_{d_k};Y_{k,i}\md  {W^{(n)}_{d_{[k-1]}}}, C^{(n)}_{[k]} ,Y_k^{i-1}\rp + I\lp W^{(n)}_{d_k};C^{(n)}_{[k]}\rp \nonumber\\
    	&\leq 
    	\sum_{i=1}^{n}I\lp W^{(n)}_{d_k};Y_{k,i}, Y_{[k-1]}^{i-1}\md  W^{(n)}_{d_{[k-1]}}, C^{(n)}_{[k]} ,Y_k^{i-1}\rp\hspace{-2pt} +\hspace{-2pt}I\lp W^{(n)}_{d_k};C^{(n)}_{[k]}\rp\nonumber\\
    	&\stackrel{\rm (b)}{=}\sum_{i=1}^{n}I\lp W^{(n)}_{d_k};Y_{k,i}\md  \{W^{(n)}_{d_{[k-1]}}, C^{(n)}_{[k]},\!Y_{[k]}^{i-1}\rp\hspace{-2pt}+\hspace{-2pt}I\lp W^{(n)}_{d_k};C^{(n)}_{[k]}\rp\nonumber\\
    	&\leq   
    	\sum_{i=1}^{n}\!I\lp\! W^{(n)}_{d_k},\!C^{(n)}_k,Y_k^{i-1};\!Y_{k,i}\md  {W^{(n)}_{d_{[k-1]}}}, C^{(n)}_{[k-1]},\!Y_{[k-1]}^{i-1}\rp +\! I\lp W^{(n)}_{d_k};C^{(n)}_{[k]}\rp\nonumber\\
    	&\stackrel{\rm (c)}{=}\!
    	\sum_{i=1}^{n}\!I\!\lp\! W^{(n)}_{d_k},\!C^{(n)}_k\!,\!Y_k^{Q-1};\!Y_{k,Q}\!\md\! {W^{(n)}_{d_{[k-1]}}}, \!C^{(n)}_{[k-1]},\!Y_{[k-1]}^{Q-1}, Q\!=\!i\!\rp +\! I\lp W^{(n)}_{d_k};C^{(n)}_{[k]}\rp \nonumber\\
    	& = 
    	nI\hspace{-1pt}\lp W^{(n)}_{d_k},\!C^{(n)}_k,\!Y^{Q-1}_k\!;\! Y_{k,Q}\md {W^{(n)}_{d_{[k-1]}}}, C^{(n)}_{[k-1]},\!Y_{[k-1]}^{Q-1}, Q\rp + I\lp W^{(n)}_{d_k};C^{(n)}_{[k]}\rp\label{eq:nf_3}\\
    	& \stackrel{\rm(d)}{=} nI(U_{k};Y_{k,Q}| U_{k-1})\!+\!I\lp W^{(n)}_{d_k};C^{(n)}_{[k]}\rp\nonumber\\
    	&= nI(U_k;Y_k| U_{k-1})\hspace{-2pt}+\hspace{-2pt}I\lp W^{(n)}_{d_k};C^{(n)}_{[k]}\rp,\nonumber\\
        & \stackrel{\rm(e)}{\leq} nI(U_k;Y_k| U_{k-1})\hspace{-2pt}+\hspace{-2pt}n\mu_{[k]}f(\cache,\bd),\label{eq:nf2_K}
	\end{align}
	where $\rm{(a)}$ holds since, for an 
        uncoded caching strategy and mutually independent files, we have
	\begin{align*}
	I\lp 
         W^{(n)}_{d_k};{W^{(n)}_{d_{[k-1]}}}\md C^{(n)}_{[k]}\rp
         & = I\lp W^{(n)}_{d_k};{W^{(n)}_{d_{[k-1]}}}\md  C_{[k],d_k}, {C^{(n)}_{[k],d_{[k-1]}}}, {C^{(n)}_{[k], [N]\setminus d_{[k]}}}\rp\\
	&=
	H\lp W^{(n)}_{d_k}, C^{(n)}_{[k],d_k}, {C^{(n)}_{[k],d_{[k-1]}}}, {C^{(n)}_{[k], [N]\setminus d_{[k]}}}\rp \\
	&\phantom{=}
	+ H\lp {W^{(n)}_{d_{[k-1]}}},C^{(n)}_{[k],d_k}, {C^{(n)}_{[k],d_{[k-1]}}}, {C^{(n)}_{[k], [N]\setminus d_{[k]}}}\rp\nonumber\\
	&\phantom{=} - H\lp W^{(n)}_{d_k},{W^{(n)}_{d_{[k-1]}}},C^{(n)}_{[k],d_k}, {C^{(n)}_{[k],d_{[k-1]}}}, {C^{(n)}_{[k], [N]\setminus d_{[k]}}}\rp \\
	&\phantom{=}
	- H\lp C^{(n)}_{[k],d_k}, {C^{(n)}_{[k],d_{[k-1]}}}, {C^{(n)}_{[k], [N]\setminus d_{[k]}}}\rp \\
	&=
	H\lp W^{(n)}_{d_k},  {C^{(n)}_{[k],d_{[k-1]}}}, {C^{(n)}_{[k], [N]\setminus d_{[k]}}}\rp \\
	&\phantom{=}
	+ H\lp {W^{(n)}_{d_{[k-1]}}},C^{(n)}_{[k],d_k},  {C^{(n)}_{[k], [N]\setminus d_{[k]}}}\rp \nonumber\\
	&\phantom{=} - H\lp W^{(n)}_{d_k},{W^{(n)}_{d_{[k-1]}}}, {C^{(n)}_{[k], [N]\setminus d_{[k]}}}\rp \\
	&\phantom{=}
	- H\lp C^{(n)}_{[k],d_k}, {C^{(n)}_{[k],d_{[k-1]}}}, {C^{(n)}_{[k], [N]\setminus d_{[k]}}}\rp \\
	&=
	H\lp W^{(n)}_{d_k}\rp + H\lp {C^{(n)}_{[k],d_{[k-1]}}}\rp \\
        &\phantom{=} + H\lp {C^{(n)}_{[k], [N]\setminus d_{[k]}}}\rp + H\lp {W^{(n)}_{d_{[k-1]}}}\rp\\
	&\phantom{=}
	+ H\lp C^{(n)}_{[k],d_k}\rp +H\lp {C^{(n)}_{[k], [N]\setminus d_{[k]}}}\rp \\
	&\phantom{=} - H\lp W^{(n)}_{d_k}\rp -H\lp {W^{(n)}_{d_{[k-1]}}}\rp \\
	&\phantom{=}
	-H\lp {C^{(n)}_{[k], [N]\setminus d_{[k]}}}\rp - H\lp C^{(n)}_{[k],d_k}\rp \\
        & \phantom{=} -H\lp {C^{(n)}_{[k], [N]\setminus d_{[k]}}}\rp\\
	&=0.
	\end{align*}
    Moreover, $\rm{(b)}$ follows from the degradedness of the channel, which implies that for any time instance $i$,  conditioned on $Y_{k,i}$, all channel outputs $\{Y_{u,i}: u<k\}$ are independent of the channel input and hence from the files and cache contents. More precisely, from  ${\lp W^{(n)}_{[N]}, C^{(n)}_{[K]}\rp \!\leftrightarrow \! X_i  \leftrightarrow \!  Y_{K,i} \leftrightarrow \cdots \leftrightarrow Y_{1,i}}$ we have 
    \begin{align*}
	& I\lp W^{(n)}_{d_k};Y_{
	[k-1]}^{i-1}\md {W^{(n)}_{d_{[k-1]}}}, C^{(n)}_{[k]},\!Y^{i-1}_{k}\rp\nonumber\\
	&\!=\! H\lp Y_{[k]}^{i-1}\md  {W^{(n)}_{d_{[k-1]}}}, C^{(n)}_{[k]},\!Y^{i-1}_{k}\rp \! \\
	& \phantom{=} -\! H\lp Y_{[k-1]}^{i-1}\md {W^{(n)}_{d_{[k]}}}, C^{(n)}_{[k]},\!Y^{i-1}_{k}\rp \\
	&= H\lp Y_{[k-1]}^{i-1}\md Y^{i-1}_{k}\rp\hspace{-2pt}-\hspace{-2pt} H\lp Y_{[k-1]}^{i-1}\md Y^{i-1}_{k}\rp\\
	& = 0.
	\end{align*}
	Furthermore, in the equality marked by $\rm{(c)}$, the random variable $Q$ is independent of all other random variables and admits a uniform distribution over $[n]$. In the step $\rm{(d)}$ we have 
        \begin{align*}
            U_k & := \left(U_{k-1},W^{(n)}_{d_k},C^{(n)}_k,Y^{Q-1}_{k}\right) \\
            & = \left( W^{(n)}_{d_{[k]}}, C^{(n)}_{[k]}, Y_{[k]}^{Q-1}, Q\right).
        \end{align*}
	Finally, in the inequality~$\rm{(e)}$ we used~\eqref{eq:I-f}.
	Dividing both sides of~\eqref{eq:nf2_K} by $n$ and letting $n\rightarrow\infty$, we arrive at~\eqref{eq:lm:f_upper_k}, claimed in the lemma. 
	
    Finally, we can use a similar argument for the $K$-th and reach to~\eqref{eq:nf_3}. Continuing from there, we  can write 
	\begin{align}
	nf(\cache,\bd)-n\epsilon_n
        & \leq
	nI\hspace{-1pt}\lp 
        W^{(n)}_{d_K},\!C^{(n)}_K,\!Y^{Q-1}_K\!;\! Y_{K,Q}\md {W^{(n)}_{d_{[K-1]}}}, C^{(n)}_{[K-1]},\!Y_{[K-1]}^{Q-1},Q\rp \nonumber\\
        &\phantom{\leq} + I\lp W^{(n)}_{d_K};C^{(n)}_{[K]}\rp\nonumber\\
	& \stackrel{\rm (a)}{=} nI\hspace{-1pt}\lp \hspace{-1pt}X_Q,W^{(n)}_{d_K},\!C^{(n)}_K,\!Y^{Q-1}_K\!;\! Y_{K,Q}\md  U_{K-1}\hspace{-1pt}\rp \!+\! I\!\lp \hspace{-1pt}W^{(n)}_{d_K};C^{(n)}_{[K]}\rp\nonumber\\
	& \stackrel{\rm (b)}{=} nI\hspace{-1pt}\lp X_Q;\! Y_{K,Q}\md U_{K-1}\rp + I\lp W^{(n)}_{d_K};C^{(n)}_{[K]}\rp \nonumber\\
	& = nI\lp X;Y_K\md U_{K-1} \rp + I\lp W^{(n)}_{d_K};C^{(n)}_{[K]}\rp\nonumber\\ 
	& \stackrel{\rm (c)}{\leq } nI\lp X;Y_K\md U_{K-1} \rp + n\mu_{[K]}f(\cache,\bd),\label{eq:nf_4}
	\end{align}
    where $\rm{(a)}$ holds since the channel input $X_Q$ is deterministically determined by the files and cache contents,~$\rm{(b)}$  
	holds since condition on $X_Q$, the channel output $Y_{K,Q}$ is independent of all other variables, and~$\rm{(c)}$ follows from~\eqref{eq:I-f}. The last inequality in~\eqref{eq:lm:f_upper_K} can be obtained from~\eqref{eq:nf_4}.
	
    It remains to show that the random variables $U_1,\ldots, U_{K-1}$ form a Markov chain. This is immediately implied by the recursive construction of $U_k$ and the fact that $U_{k-1}$ is deterministically known once $U_k$ is given. This completes the proof of the lemma. 
\end{proof}

\begin{proof}[Proof of Lemma~\ref{lm:2user-deg-BC}]
The proof of Lemma~\ref{lm:2user-deg-BC} is derived directly from the proof of Lemma~\ref{lm:Kuser-deg-BC}. From~\eqref{eq:nf1_K} and~\eqref{eq:nf_4}, we have
\begin{align}
    	& nf(\cache,\bd)-n\epsilon_n \leq nI(U_1;Y_1)+n \mu_{\{1\}} f(\cache,\bd),\label{eq:f2-1}\\
    	& nf(\cache,\bd)-n\epsilon_n \leq nI\lp X;Y_2\md U_1 \rp + n \mu_{\{1,2\}} f(\cache,\bd)\label{eq:f2-2},
\end{align}
For the last term in~\eqref{eq:f2-1}, we can write
\begin{align}\label{eq:I2-1}
    \mu_{\{1\}} & = \Big|\bigcup_{\ell\in [\nn_1]} \cI_{1,\ell} \Big| \leq \sum_{\ell\in[\nn_1]} |\cI_{1,\ell}| =  \mu.
\end{align}
Similarly, for the last term in~\eqref{eq:f2-2} we get
\begin{align}\label{eq:I2-2}
     \mu_{\{1,2\}} & = \Big|\bigcup_{u\in \{1,2\}} \bigcup_{\ell\in [\nn_u]} \cI_{u,\ell} \Big|\nonumber\\
     & \leq \sum_{u\in \{1,2\}}\sum_{\ell\in[\nn_u]} |\cI_{u,\ell}|\nonumber\\
     & = \sum_{\ell\in[\nn_1]} |\cI_{1,\ell}| + \sum_{\ell\in[\nn_2]} |\cI_{2,\ell}| = 2\mu.
\end{align}
Plugging~\eqref{eq:I2-1} and~\eqref{eq:I2-2} into~\eqref{eq:f2-1} and~\eqref{eq:f2-2}, respectively, we arrive at the desired inequalities. This completes the proof of the lemma.
\end{proof}

\bibliography{ref_fading}

\end{document}